\documentclass[reqno,12pt,letterpaper]{amsart}
\usepackage{amsmath,amssymb,amsthm,graphicx,mathrsfs,url}\usepackage[usenames,dvipsnames]{color}
\usepackage[colorlinks=true,linkcolor=Red,citecolor=Green]{hyperref}
\usepackage{amsxtra}
\usepackage{subcaption}
%\usepackage{listings}

% arXiv bibliography macro
\def\arXiv#1{\href{http://arxiv.org/abs/#1}{arXiv:#1}}

\setlength{\marginparwidth}{0.6in}

\def\?[#1]{\textbf{[#1]}\marginpar{\Large{\textbf{??}}}} 
\def\smallsection#1{\smallskip\noindent\textbf{#1}.}
\let\epsilon=\varepsilon % sorry Knuth

\setlength{\textheight}{8.50in} \setlength{\oddsidemargin}{0.00in}
\setlength{\evensidemargin}{0.00in} \setlength{\textwidth}{6.08in}
\setlength{\topmargin}{0.00in} \setlength{\headheight}{0.18in}
\setlength{\marginparwidth}{1.0in}
\setlength{\abovedisplayskip}{0.2in}
\setlength{\belowdisplayskip}{0.2in}
\setlength{\parskip}{0.05in}

\newcommand{\RR}{{\mathbb R}}

\newcommand{\CC}{{\mathbb C}}

\newcommand{\ZZ}{{\mathbb Z}}

% Include all pictures as MetaPOST
\DeclareGraphicsRule{*}{mps}{*}{}

\newtheorem{theo}{Theorem}
\newtheorem{prop}{Proposition}[section]

\newtheorem{lemm}[prop]{Lemma}

\numberwithin{equation}{section}

\DeclareMathOperator{\Spec}{Spec}

\let\Im=\Imag

\let\Re=\Real

\DeclareMathOperator{\vol}{vol}

\DeclareMathOperator{\tr}{tr}

\usepackage{scalerel}

\def\codefontsize{\small}

\newcommand\reallywidehat[1]{\arraycolsep=0pt\relax%
\begin{array}{c}
\stretchto{
  \scaleto{
    \scalerel*[\widthof{\ensuremath{#1}}]{\kern-.5pt\bigwedge\kern-.5pt}
    {\rule[-\textheight/2]{1ex}{\textheight}} %WIDTH-LIMITED BIG WEDGE
  }{\textheight} % 
}{0.5ex}\\           % THIS SQUEEZES THE WEDGE TO 0.5ex HEIGHT
#1\\                 % THIS STACKS THE WEDGE ATOP THE ARGUMENT
\rule{-1ex}{0ex}
\end{array}
}
\title[Mathematics of magic angles in a model of twisted bilayer graphene]{Mathematics of magic angles\\ in a model of twisted bilayer graphene}
\author{Simon Becker}
\email{simon.becker@damtp.cam.ac.uk}
\address{Department of Applied Mathematics and Theoretical Physics, University of Cambridge, Wilberforce Road, Cambridge, CB3 0WA, United Kingdom.}

\author{Mark Embree}
\email{embree@vt.edu}
\address{Department of Mathematics, Virginia Tech, 
Blacksburg, VA 24061, USA}

\author{Jens Wittsten}
\email{jens.wittsten@math.lu.se}
\address{Centre for Mathematical Sciences, Lund University, Box 118, SE-221 00 Lund, Sweden, and Department of Engineering, University of Bor{\aa}s, SE-501 90 Bor{\aa}s, Sweden}

\author{Maciej Zworski}
\email{zworski@math.berkeley.edu}
\address{Department of Mathematics, University of California,
Berkeley, CA 94720, USA.}

%%%%%%%%%%%%%%%%%%%%%%%%%%%%%%%%%%%%%%%%%%%%%%%%%%%%%%%%%%%%%%%%%%%%%%%%%%%%%%%%

%%%%%%%%%%%%%%%%%%%%%%%%%%%%%%%%%%%%%%%%%%%%%%%%%%%%%%%%%%%%%%%%%%%%%%%%%%%%%%%%

%%%%%%%%%%%%%%%%%%%%%%%%%%%%%%%%%%%%%%%%%%%%%%%%%%%%%%%%%%%%%%%%%%%%%%%%%%%%%%%%
%%%%%%%%%%%%%%%%%%%%%%%%%%%%%%%%%%%%%%%%%%%%%%%%%%%%%%%%%%%%%%%%%%%%%%%%%%%%%%%%
\begin{document}

\begin{abstract}
We provide a mathematical account of the recent Physical Reviews Letter by 
Tarnopolsky--Kruchkov--Vishwanath \cite{magic}. The new contributions are a 
spectral characterization of magic angles, its accurate numerical implementation and an exponential estimate on the squeezing of all bands as the angle decreases. Pseudospectral phenomena 
\cite{dsz},\cite{trem},
 due to the non-hermitian nature of operators appearing in the model considered in \cite{magic} play a crucial role in our analysis.
 \end{abstract}

\maketitle

\section{Introduction and statement of results}
\begin{figure}
\begin{center}
\includegraphics[width=\textwidth]{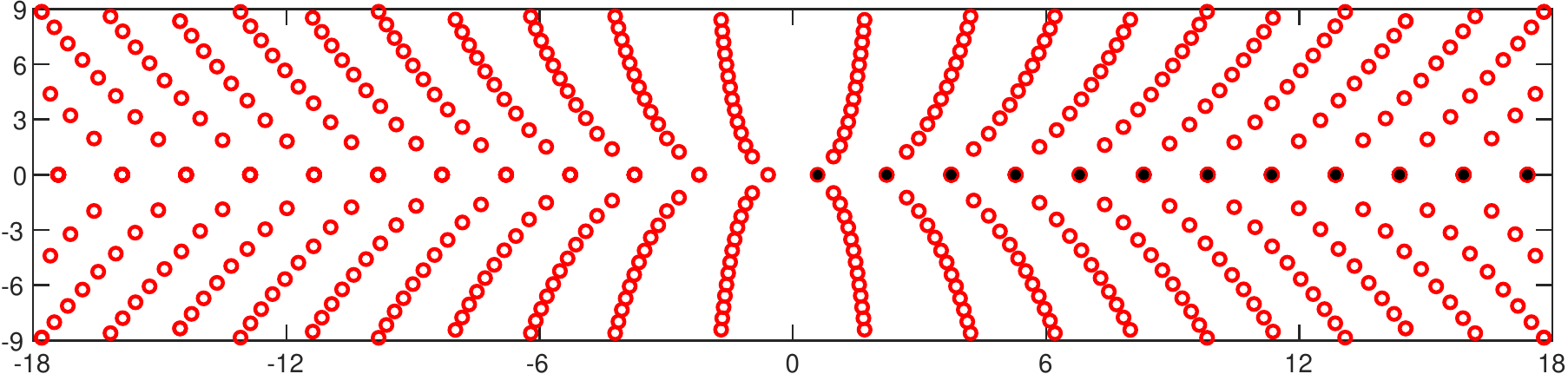}
\end{center}
\caption{\label{f:resa}
Reciprocals of magic angles for the specific potential
\eqref{eq:defU}: resonant $ \alpha$'s  (red circles) come from the full spectrum
of the compact operator \eqref{eq:defT} defining magic angles,
and the magic $ \alpha$'s (black dots) are the reciprocals
of the ``physically relevant" positive angles.}
\end{figure}
Following a recent Physical Review Letter by 
Tarnopolsky--Kruchkov--Vishwanath \cite{magic} we consider the following 
Hamiltonian modeling twisted bilayer graphene:
\begin{equation}
\label{eq:defH}    H ( \alpha)  := \begin{pmatrix} 0 & D(\alpha) ^* \\
D ( \alpha) &  0 \end{pmatrix}  , \ \ \  
D ( \alpha ) := \begin{pmatrix} { 2 }  D_{\bar z}  &   \alpha U(z) \\  
\alpha U(-z) & { 2 } D_{\bar z }   \end{pmatrix} ,
\end{equation} 
where $ z = x_1 + i x_2$, $ D_{\bar z } := \tfrac1{2 i } ( \partial_{x_1} + 
i \partial_{x_2} )$ and 
\begin{equation}
\label{eq:defU} 
U (z ) = U ( z, \bar z ) := \sum_{ k = 0}^2 \omega^k e^{ \frac12 (  z  
\bar \omega^k - \bar z  \omega^k )} , \ \ \ 
\omega := e^{ 2 \pi i /3 } . 
\end{equation}
(We abuse the notation in the argument of $ U $ for the sake of brevity
and write $ U ( z ) $ rather than $ U ( z, \bar z ) $.)
The dimensionless parameter $ \alpha $ is essentially the reciprocal of the 
angle of twisting between the two layers. {When two honeycomb lattices are twisted against one another, a periodic honeycomb superlattice,  called the moir\'e lattice, becomes visible. 
(This name comes from the patterns formed when two fabrics lie on top of each other.) 
  Bistritzer and MacDonald in \cite{BM11} predicted that the symmetries of the periodic moir\'e {lattice} lead to dramatic flattening of the band spectrum.
  The operator \eqref{eq:defH} and in particular potential \eqref{eq:defU} were obtained in 
  \cite{magic} by
removing certain interaction terms from the operator constructed in \cite{BM11}.
  }

{In this paper we consider any potential having the symmetries of \eqref{eq:defU}:
\begin{equation}
\label{eq:symmU2} 
\begin{gathered}  \mathbf a = \tfrac{4}3 \pi i \omega^\ell,  \ \ell =1,2 \ \Longrightarrow \ 
U ( z + \mathbf a ) = \bar \omega U ( z ) , \text{ and }\\
U ( \omega z )  = 
\omega U ( z )   .
\end{gathered}
\end{equation}
The only exception is Theorem \ref{t:squeeze} which requires a non-triviality 
assumption, see \eqref{eq:conditionbracket}. Such potentials are explored further in Section \ref{s:exp}.}

The Hamiltonian $ H $ is periodic with respect to a lattice $ \Gamma $
(see \eqref{eq:Gam} below)
and {\em magic angles} are defined as the $ \alpha $'s (or rather their
reciprocals) at which 
\begin{equation}
\label{eq:defmal}
0 \in \bigcap_{ \mathbf k \in \CC } 
\Spec_{ L^2 ( \CC/\Gamma ) } ( H_{\mathbf k } ( \alpha ) ) , \ \ \
H_{\mathbf k } ( \alpha ) := \begin{pmatrix} 0 & D(\alpha) ^* -\bar {\mathbf k}  \\
D ( \alpha) - \mathbf k &  0 \end{pmatrix}  . 
\end{equation}
The Hamiltonian $ H_{\mathbf k} ( \alpha ) $ comes from the {\em Floquet theory}
of $ H( \alpha ) $ and \eqref{eq:defmal} means that $ H ( \alpha ) $ has a
{\em flat band} at $ 0 $ (see Proposition \ref{p:flat} below). {Since the Bloch electrons have the same energy
at the flat bands, strong electron-electron interactions leading to effects such as superconductivity have been observed at magic angles.} We refer to \cite{magic} for physical motivation and references. Some aspects of this paper carry over to more general models such as the Bistritzer--MacDonald \cite{BM11} and that is discussed in 
\cite{suppl}.

The first theorem is, essentially, the main mathematical result of \cite{magic}. To formulate it we define the Wronskian of two 
$ \CC^2 $-valued $ \Gamma$-periodic functions:
\begin{equation}
\label{eq:Wr}  W ( \mathbf u , \mathbf v ) = \det [ \mathbf u , \mathbf v ] , 
\ \ \ \mathbf u , \mathbf v \in \CC^2, 
\end{equation}
noting that
if $ D ( \alpha ) \mathbf u = D ( \alpha ) \mathbf v = 0$, then 
$ W $ is constant (applying 
$ \partial_{\bar z } $ shows that $ W $ is holomorphic and periodic). 
We also define an involution $ \mathscr E $ satisfying 
$ \mathscr E D ( \alpha ) = D ( \alpha ) \mathscr E $:
\begin{equation}
\label{eq:defE} \mathscr E \mathbf u (\alpha, z ) := 
\begin{pmatrix}  0 & -1 \\  1 & \ \ 0   \end{pmatrix}\mathbf u (\alpha, - z ).\end{equation}
We then have 
{
\begin{theo}
\label{t:magic}
%For $ \alpha \in \CC $, 
%\begin{equation}
%\label{eq:derD}
%$ \ker_{ L^2 ( \CC/\Gamma ; \CC^2 ) } D ( \alpha ) = \CC \mathbf u ( \alpha ) \oplus \CC \mathscr E \mathbf u ( \alpha ) $, 
%\end{equation}
%For all $\theta \in \mathbb T_1^*$ 
Suppose that $ D ( \alpha ) $ is given by \eqref{eq:defH} with 
$ U \in C^\infty ( \CC/ \Gamma ; \CC ) $ satisfying \eqref{eq:symmU2}.
Then there exists a real-analytic function $f$ on $\RR$ such that 
\begin{equation*}
\label{eq:magic}
0 \in \bigcap_{ \mathbf k \in \CC } 
\Spec_{ L^2 ( \CC/\Gamma ) } ( H_{\mathbf k } ( \alpha ) ) \ \Longleftrightarrow f(\alpha)=0.
\end{equation*}
The function $f$ is defined using a projectively unique
family  $ \RR \ni \alpha \mapsto \mathbf u (\alpha)  \in C^\infty ( \CC/\Gamma; 
 \CC^2 ) $  such that
$ \mathbf u ( 0 ) =( 1, 0 )^t $, $ D ( \alpha ) \mathbf u ( \alpha ) = 0 $. Then 
$f(\alpha):=W ( \mathbf u ( \alpha ) ,  \mathscr E \mathbf u ( \alpha    ) ) $,
where $ W $ is given by \eqref{eq:Wr} and $ \mathscr E $ is defined in \eqref{eq:defE}.
\end{theo}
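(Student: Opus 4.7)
The plan is to construct the real-analytic family $\alpha\mapsto\mathbf{u}(\alpha)$ in $\ker D(\alpha)$ by analytic Fredholm theory, observe that the Wronskian $W(\mathbf{u}(\alpha),\mathscr{E}\mathbf{u}(\alpha))$ is automatically constant in $z$ so that $f(\alpha)$ is a well-defined real-analytic function on $\RR$, and finally characterise its vanishing via a theta-function construction that promotes a single zero of $\mathbf{u}(\alpha,\cdot)$ into a periodic zero mode of $D(\alpha)-\mathbf{k}$ for every $\mathbf{k}\in\CC$.

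\textbf{Constructing $\mathbf{u}(\alpha)$ and the scalar $f(\alpha)$.} Viewed as a map $H^1(\CC/\Gamma;\CC^2)\to L^2(\CC/\Gamma;\CC^2)$, $D(\alpha)$ is a holomorphic family of Fredholm operators of index zero, as can be read off at $\alpha=0$ where $D(0)=2D_{\bar z}\otimes I_2$. The rotation symmetry in \eqref{eq:symmU2} induces a $C_3$-action that commutes with $D(\alpha)$ up to a constant matrix and splits $L^2$ into three isotypic components; in the sector containing $(1,0)^t$ the kernel of $D(\alpha)$ is one-dimensional for generic $\alpha$, and analytic Fredholm theory (for instance via a Grushin reduction) then produces a projectively unique real-analytic family $\mathbf{u}(\alpha)$ with $\mathbf{u}(0)=(1,0)^t$. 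For any $\mathbf{u},\mathbf{v}\in\ker D(\alpha)$ a short calculation from the block form of $D(\alpha)$ gives $\partial_{\bar z}W(\mathbf{u},\mathbf{v})=0$ because the off-diagonal $\alpha U$-contributions cancel pairwise in the Laplace expansion of the $2\times 2$ determinant; so $W(\mathbf{u},\mathbf{v})$ is holomorphic and $\Gamma$-periodic, hence constant in $z$. Applied to $\mathbf{v}=\mathscr{E}\mathbf{u}(\alpha)\in\ker D(\alpha)$ (where $\mathscr{E}$ (anti)commutes with $D(\alpha)$), this yields the scalar $f(\alpha)$, real-analytic in $\alpha$ because $\mathbf{u}(\alpha)$ is.

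\textbf{Equivalence with magicity, and the main obstacle.} Evaluating the constant $f(\alpha)$ at $z=0$ and using that the isotypic constraint forces one component of $\mathbf{u}(\alpha,0)$ to vanish, the $2\times 2$ determinant collapses to $u_1(\alpha,0)^2$, so $f(\alpha)=0$ is equivalent to $\mathbf{u}(\alpha)$ vanishing at $0$. To connect this to magicity I would produce, for each $\mathbf{k}\in\CC$, the ansatz
\[ \mathbf{v}_\mathbf{k}(z):= e^{i\mathbf{k}\bar z/2}\,\frac{\vartheta(z-s_\mathbf{k})}{\vartheta(z)}\,e^{\ell_\mathbf{k}(z)}\,\mathbf{u}(\alpha,z), \]
with $\vartheta$ a Jacobi theta function on $\CC/\Gamma$ having a simple zero at the origin, and with $s_\mathbf{k}\in\CC$ and the linear form $\ell_\mathbf{k}$ chosen so that the quasi-periodicities of $\vartheta$, of the exponential $e^{i\mathbf{k}\bar z/2}$, and of $\mathbf{u}(\alpha)$ cancel on $\Gamma$-translations. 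The simple pole of $\vartheta(z)^{-1}$ at the origin is absorbed by the zero of $\mathbf{u}$, and a direct computation gives $(D(\alpha)-\mathbf{k})\mathbf{v}_\mathbf{k}=0$, so $0\in\Spec H_\mathbf{k}(\alpha)$ for every $\mathbf{k}$. Conversely, if $\mathbf{u}(\alpha)$ never vanishes, then division by $\mathbf{u}$ identifies $\ker(D(\alpha)-\mathbf{k})$ with a space of sections of a line bundle of degree zero over $\CC/\Gamma$ which is non-trivial only on a discrete set of $\mathbf{k}$, so $\alpha$ is not magic. The main obstacle is the theta-function bookkeeping in the forward direction: selecting $s_\mathbf{k}$ and $\ell_\mathbf{k}$ so that three different quasi-periodicities cancel simultaneously, and verifying smoothness of $\mathbf{v}_\mathbf{k}$ at the zero of $\mathbf{u}$; the rest of the argument is routine analytic Fredholm and symmetry material once the correct $C_3$-sector has been isolated.
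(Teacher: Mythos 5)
Your first two steps match the paper: the real-analytic family $\alpha\mapsto\mathbf u(\alpha)$ is produced by Kato/analytic-Fredholm perturbation theory inside the $\rho_{1,0}$-isotypic sector (Proposition \ref{p:Ak}), and the computation that $\partial_{\bar z}W(\mathbf u,\mathbf v)=0$ for $\mathbf u,\mathbf v\in\ker D(\alpha)$ is exactly the one the paper records after \eqref{eq:Wr}. The direction ``magic $\Rightarrow v(\alpha)=0$'' you outline via a degree-zero line bundle; the paper instead gets it directly from the explicit inversion formula \eqref{eq:inv} in Proposition \ref{p:inverse}, which shows $v(\alpha)\neq0$ implies $(D(\alpha)-\mathbf k)^{-1}$ exists. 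Either route is fine.

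The genuine gap is in the forward direction, where you evaluate the constant Wronskian at $z=0$ and claim ``the isotypic constraint forces one component of $\mathbf u(\alpha,0)$ to vanish, so the determinant collapses to $u_1(\alpha,0)^2$.'' That is false: since $\mathbf u(\alpha)\in L^2_{\rho_{1,0}}$ the operator $\mathscr C$ acts trivially, so the symmetry relation $\mathbf u(\omega z)=\mathbf u(z)$ is vacuous at $z=0$, and $\mathscr L_{\mathbf a}$ relates values at $z$ and $z+\mathbf a$ but puts no constraint at a single point. At $z=0$ one only gets $v(\alpha)=\psi_1(0)^2+\psi_2(0)^2$ with neither term forced to vanish, so $v(\alpha)=0$ does not imply $\mathbf u(\alpha,0)=0$. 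The paper's key observation (their \eqref{eq:stack}) is that the right points to use are the \emph{stacking points} $z_S=\tfrac{4\sqrt3}{9}\pi$ and $-z_S$, which are fixed by $z\mapsto\omega z$ in $\CC/\Gamma_3$ but not in $\CC/\Gamma$: combining $\mathscr C$-invariance with the $\mathscr L_{\mathbf a}$-quasiperiodicity ($\omega z_S=z_S-\mathbf a_2$) gives the nontrivial constraint $\psi_2(\alpha,\pm z_S)=\omega^{\mp1}\psi_2(\alpha,\pm z_S)$, hence $\psi_2(\alpha,\pm z_S)=0$. Then $v(\alpha)=\psi_1(z_S)\psi_1(-z_S)$ and $v(\alpha)=0$ forces $\mathbf u(\alpha,z_S)=0$ or $\mathbf u(\alpha,-z_S)=0$. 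This also corrects your theta-function ansatz: the pole of the multiplier must sit at $z_S$ (or $-z_S$), not at the origin, so that it is cancelled by the zero of $\mathbf u$; the paper's $f_{\mathbf k}$ built from $\theta_{-1/6+k_1/3,\,1/6-k_2/3}/\theta_{-1/6,\,1/6}$ is chosen precisely to do that and to absorb the quasi-periodicities coming from $e^{i\Re(z\bar{\mathbf k})}$ and from $\Gamma$-translations. Without first locating the forced zero at $z_S$, the argument does not close.
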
}

\begin{figure}
\includegraphics[width=13cm]{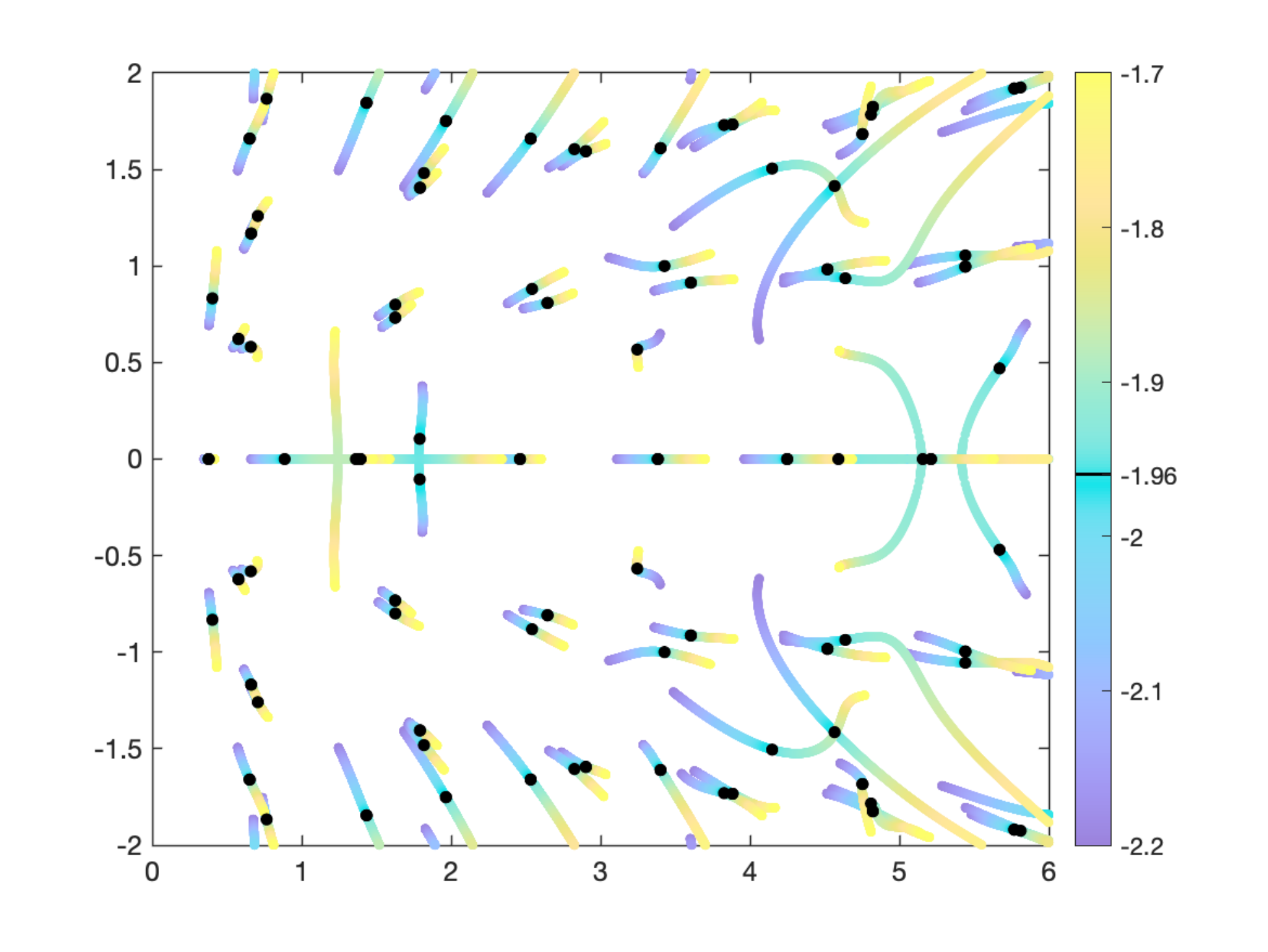}
\caption{\label{f:mu}  For 
$ U_\mu ( z ) = U ( z ) +  \mu \sum_{k=0}^2 \omega^k e^{\bar z \omega^k-  z \bar \omega^k} $, 
with $ U $ given by 
\eqref{eq:defU} and $ \mu = -1.96 $, we 
show set $ \mathcal A $ (indicated by 
$ \bullet $).
 The distribution 
is much less regular than for $ \mu = 0 $ 
shown in Figure~\ref{f:resa}, and nothing like 
\eqref{eq:alphaj} can be expected. The coloured paths trace
the dynamics of magic $\alpha$'s for $-2.2 \leq \mu \leq -1.7 $:
to understand the dependence of ``physically relevant" real $ \alpha$'s
complex values should be considered.}
\end{figure}

A more precise,  representation theoretical, description of $ \mathbf u ( \alpha ) $ will be given in \S \ref{s:ham}. Projective uniqueness means uniqueness up to a multiplicative factor.
In \S \ref{s:alph} we show that (after possibly switching 
$ \mathbf u $ and $ \mathscr E \mathbf u $)
\begin{equation}
\label{eq:stack1}
v ( \alpha ) :=  W ( \mathbf u ( \alpha ) , 
\mathscr E \mathbf u ( \alpha ) ) = 0 \ \Longleftrightarrow \ 
\mathbf u ( \alpha,z_S ) = 0, \ \ \  z_S := \tfrac{ 4 \sqrt {3} } 9 \pi , 
\end{equation}
which then provides a recipe \cite{magic} for constructing the
zero eigenfunctions of $ H_{\mathbf k}  ( \alpha ) $: if $ v ( \alpha ) = 0$
then $ ( D( \alpha ) - \mathbf k ) \mathbf u_{\mathbf k}  ( \alpha) = 0 $, 
$\mathbf u_{\mathbf k }(\alpha) \in C^\infty ( \CC/\Gamma ; \CC^2 ) $, 
where 
\begin{equation}
\label{eq:recipe} 
\mathbf u_{\mathbf k } (z) = e^{  \frac i 2 ( z \bar {\mathbf k }+  \bar z {\mathbf k}) } \frac{ \theta_{ -\frac 16 + k_1/3, \frac 16 - k_2/3 } ( 3 z/ 4 \pi i 
\omega | \omega )}{ \theta_{- \frac1 6 , +\frac 16 } ( 3 z/ 4\pi i \omega | \omega ) } 
\mathbf u ( z ) 
, 
\ \  \mathbf k = \tfrac{1}{\sqrt 3 } ( k_1  \omega^2 - k_2\omega ),
\end{equation}
where $ \zeta \mapsto \theta_{ a,b} ( \zeta | \omega ) $ is the 
Jacobi theta function -- 
see \S \ref{s:theta} %\cite{khz} 
for a brief review 
and \cite[Chapter I]{tata} for a proper introduction.
(Our convention is slightly different than that in \cite{magic} but the 
formulas are equivalent.)

The next theorem provides a simple spectral characterization of $ \alpha$'s
satisfying \eqref{eq:defmal}. Combined with some symmetry reductions 
(see \S\S \ref{s:ham},\ref{s:num}) this characterization allows a precise calculation of the leading magic $ \alpha$'s { -- see Table \ref{tbl:magic} for the values of the first 13 elements of
$ \mathcal A_{\rm{mag}} $ and Tables \ref{tbl:error}, \ref{tbl:back} for rigorous error bounds.}
As seen in Proposition \ref{l:imp8}, it also implies that the multiplicities of flat bands at $ 0$
is at least 18.

\begin{figure}
\hspace*{-10pt}
\includegraphics[width=8.15cm]{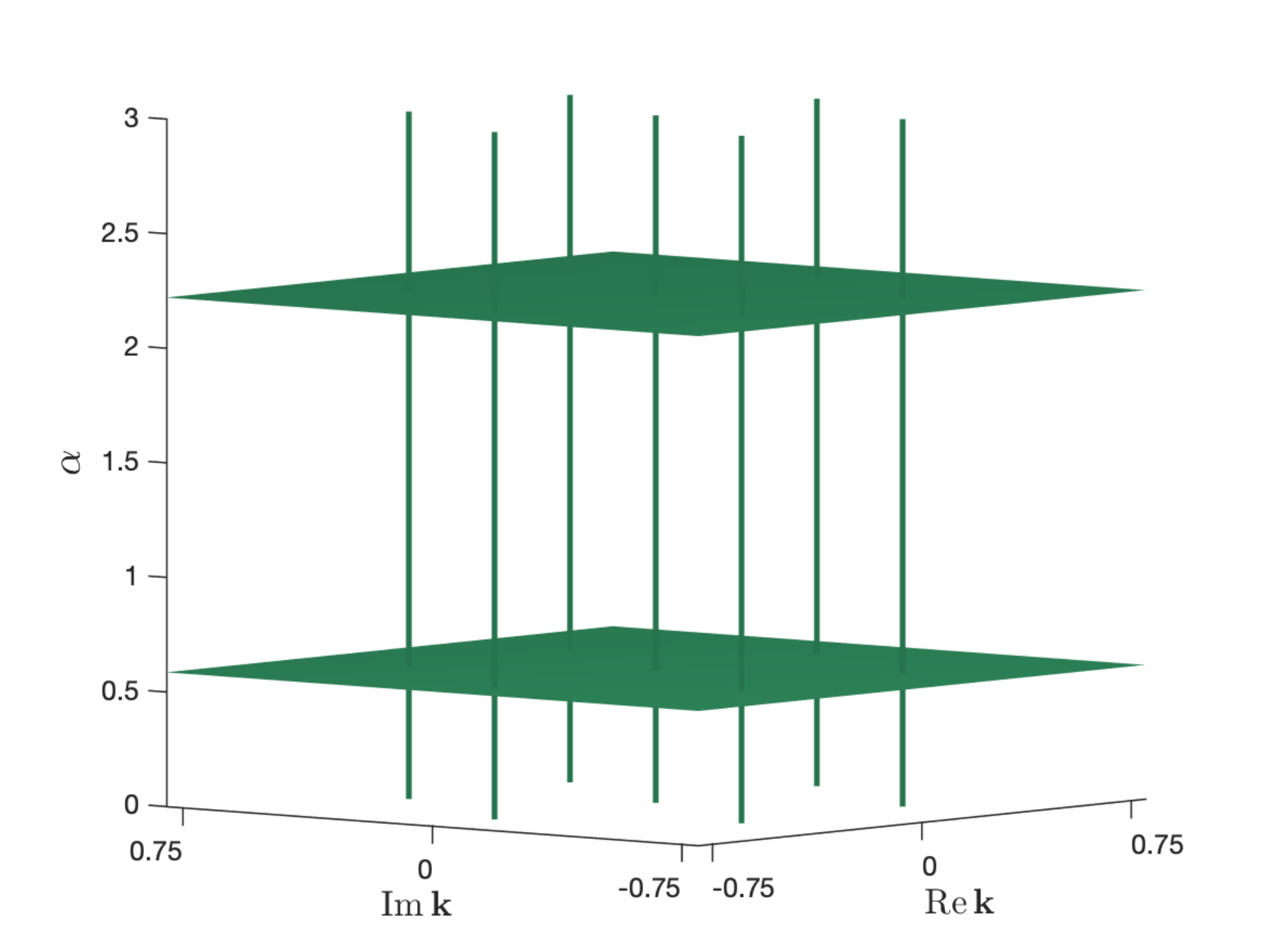}
\hspace*{-25pt}
\includegraphics[width=8.15cm]{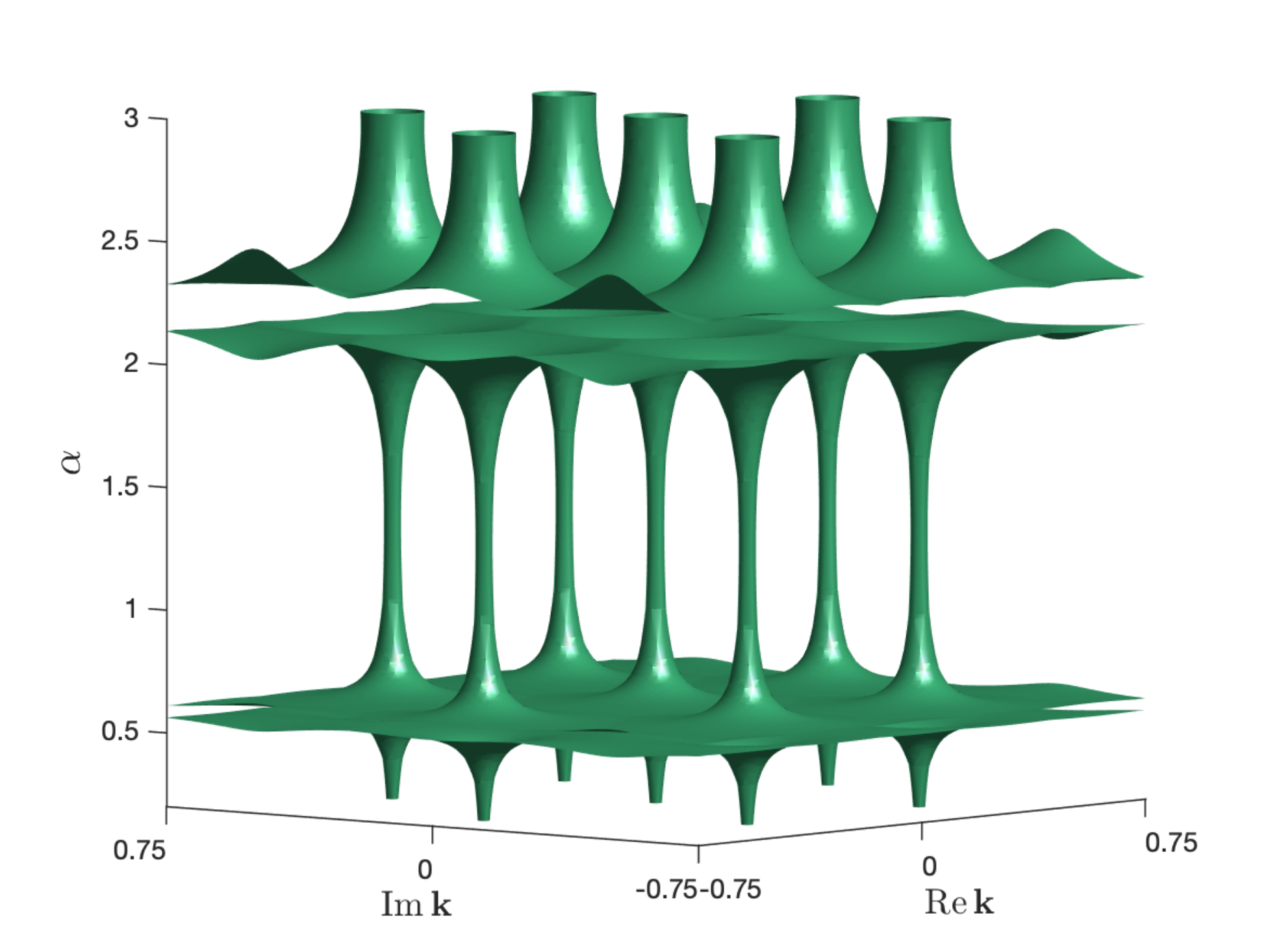} 
\caption{\label{f:psa} Left: spectrum of $ D ( \alpha ) $ 
as $ \alpha $ varies. Right: level surface of $ \mathbf k \mapsto \| ( D ( \alpha ) - \mathbf k )^{-1} \| = 10^2 $ 
as $ \alpha $ varies: we see that the norm of the resolvent 
 $  ( D ( \alpha ) - \mathbf k )^{-1} $ grows as we approach
the first two magic $ \alpha$'s (near $ 0.586$ and $2.221$), at which it blows up for
all $ k $. In any discretization that norm would be finite 
except on a finite set but it would blow up as the discretization improves.}
\end{figure}

\begin{theo}
\label{t:spec}
Let $ \Gamma^* $ be the dual lattice and define the family of compact operators
\begin{equation}
\label{eq:defT} 
T_{\mathbf k } := ( 2 D_{\bar z } - \mathbf k )^{-1} \begin{pmatrix}
0 & U ( z ) \\ U ( - z ) & 0 \end{pmatrix} , \ \ 
\mathbf k \notin  \Gamma^* ,
\end{equation}
where $ U( z) $ is given by \eqref{eq:defU}, or more generally satisfies
$ U \in C^\infty ( \CC/\Gamma ; \CC ) $ and \eqref{eq:symmU2}. 
Then the spectrum of $ T_{\mathbf k} $ is {\em independent} of
$ \mathbf k \notin \Gamma^* $, and the following statements are equivalent:
%satisfies
%\begin{equation}
%\label{eq:symTk}  \Spec_{ L^2 ( \CC/\Gamma ) } T_{\mathbf k} = - 
% \Spec_{ L^2 ( \CC/\Gamma ) } T_{\mathbf k}  = \overline{ \Spec_{ L^2 ( \CC/\Gamma ) } T_{\mathbf k} }, \end{equation}
% and the following statements are equivalent:
\begin{enumerate}
\item  $   1/\alpha \in \Spec_{ L^2 ( \CC /\Gamma ) } ( T_{\mathbf k } ) , \ \ \mathbf k \notin \Gamma^* $;
\item  $\Spec_{ L^2 ( \CC /\Gamma ) } D ( \alpha ) = \CC$ ;
\item  $ 0 \in \bigcap_{ \mathbf k \in \CC } 
\Spec_{ L^2 ( \CC/\Gamma ) } ( H_{\mathbf k } ( \alpha ) ) $, where
$ H_{\mathbf k } $ is defined in \eqref{eq:defmal}.
%\item  $\dim \ker_{ L^2 ( \CC /\Gamma ) } ( D ( \alpha ) - \mathbf k ) 
%= 9 $, $\mathbf k \notin \Gamma^* $.
\end{enumerate}
\end{theo}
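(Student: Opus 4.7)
My plan is to prove the three equivalences by first reducing $(2)\Leftrightarrow(3)$ to a block-matrix observation, then relating the spectra of $D(\alpha)$ and $T_{\mathbf k}$ via a factorization, and finally obtaining $\mathbf k$-independence of $\Spec T_{\mathbf k}$ from the theta-function recipe \eqref{eq:recipe} together with Theorem \ref{t:magic}.

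For the preliminaries, a Fourier-series computation shows that $(2D_{\bar z}-\mathbf k)$ is invertible on $L^2(\CC/\Gamma;\CC^2)$ with compact inverse for $\mathbf k\notin\Gamma^*$ (its eigenvalues are $\mathbf k_0-\mathbf k$ as $\mathbf k_0$ ranges over $\Gamma^*$), so $T_{\mathbf k}$ is compact. The block anti-diagonal structure of $H_{\mathbf k}(\alpha)$ gives $0\in\Spec H_{\mathbf k}(\alpha)\Leftrightarrow\mathbf k\in\Spec D(\alpha)$, so $(3)\Leftrightarrow\Spec D(\alpha)=\CC\Leftrightarrow(2)$. Writing $V$ for the off-diagonal multiplication with $D(\alpha)=2D_{\bar z}I+\alpha V$, for $\mathbf k\notin\Gamma^*$ the factorization
\begin{equation*}
D(\alpha)-\mathbf k=(2D_{\bar z}-\mathbf k)\bigl(I+\alpha T_{\mathbf k}\bigr)
\end{equation*}
yields $\mathbf k\in\Spec D(\alpha)\Leftrightarrow -1/\alpha\in\Spec T_{\mathbf k}$. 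The involution $J=\mathrm{diag}(1,-1)$ satisfies $JVJ=-V$ and commutes with the scalar operator $(2D_{\bar z}-\mathbf k)^{-1}$, so $JT_{\mathbf k}J=-T_{\mathbf k}$ and $\Spec T_{\mathbf k}=-\Spec T_{\mathbf k}$; thus $\mathbf k\in\Spec D(\alpha)\Leftrightarrow 1/\alpha\in\Spec T_{\mathbf k}$.

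The remaining step---$\mathbf k$-independence of $\Spec T_{\mathbf k}$---is equivalent under this correspondence to the dichotomy $\Spec D(\alpha)\subseteq\Gamma^*$ or $\Spec D(\alpha)=\CC$. The inclusion $\Gamma^*\subseteq\Spec D(\alpha)$ is automatic, since $e^{i(\bar{\mathbf k}_0 z+\mathbf k_0\bar z)/2}\mathbf u(\alpha)$ is a smooth eigenvector with eigenvalue $\mathbf k_0\in\Gamma^*$ built from the kernel element $\mathbf u(\alpha)$ supplied by Theorem \ref{t:magic}. If $v(\alpha)=0$, then \eqref{eq:stack1} gives $\mathbf u(\alpha,z_S)=0$ and the recipe \eqref{eq:recipe} produces a smooth eigenvector $\mathbf u_{\mathbf k}(\alpha)$ for every $\mathbf k\in\CC$, so $\Spec D(\alpha)=\CC$. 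Conversely, suppose $\mathbf v$ is a smooth eigenvector with eigenvalue $\mathbf k_0\notin\Gamma^*$: the scalar factor $F_{\mathbf k_0}$ in \eqref{eq:recipe} is a meromorphic, $\Gamma$-periodic function holomorphic in $z$ up to a simple pole at $z_S$, and satisfies $2D_{\bar z}F_{\mathbf k_0}=\mathbf k_0 F_{\mathbf k_0}$; a direct Leibniz computation then yields $D(\alpha)(F_{\mathbf k_0}^{-1}\mathbf v)=0$. Extending the projective uniqueness of Theorem \ref{t:magic} to meromorphic $\Gamma$-periodic sections of $\ker D(\alpha)$ forces $F_{\mathbf k_0}^{-1}\mathbf v=c\,\mathbf u(\alpha)$, hence $\mathbf v=c\,F_{\mathbf k_0}\mathbf u(\alpha)$, and smoothness of $\mathbf v$ at $z_S$ forces $\mathbf u(\alpha,z_S)=0$, giving $v(\alpha)=0$ and $\Spec D(\alpha)=\CC$.

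Combining the above, for every $\mathbf k\notin\Gamma^*$ we obtain $1/\alpha\in\Spec T_{\mathbf k}\Leftrightarrow\Spec D(\alpha)=\CC$, which is both independent of $\mathbf k\notin\Gamma^*$ and equivalent to $(2)$ and $(3)$. I expect the main obstacle to be the extension of projective uniqueness from smooth sections to meromorphic ones---requiring careful accounting of pole orders and elliptic regularity for $D(\alpha)$---that is used in the converse direction of the dichotomy.
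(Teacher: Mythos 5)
Your route is genuinely different from the paper's, and much of it is sound. The paper proves the dichotomy $\Spec D(\alpha) \in \{\Gamma^*,\CC\}$ by abstract perturbation theory: it uses Proposition~\ref{p:prot} (protected kernel) and the periodicity \eqref{eq:per} to get $\Gamma^*\subset\Spec D(\alpha)$, then applies Kato--Rellich continuity of discrete spectrum and kernel-dimension counting to show $\Spec D(\alpha)=\Gamma^*$ whenever $\alpha\notin\mathcal A_{\mathbf k}$, after which $\mathbf k$-independence follows by contraposition. You instead try to run the theta-function machinery (Proposition~\ref{p:magic}) in both directions. Your observation that $J=\mathrm{diag}(1,-1)$ intertwines $T_{\mathbf k}$ with $-T_{\mathbf k}$, making $\Spec T_{\mathbf k}$ symmetric about the origin, is a useful remark that tidies the sign bookkeeping between $D(\alpha)-\mathbf k=(2D_{\bar z}-\mathbf k)(I+\alpha T_{\mathbf k})$ and statement (1).

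However, your converse direction has a genuine gap. You claim that ``extending projective uniqueness to meromorphic $\Gamma$-periodic sections of $\ker D(\alpha)$'' forces $F_{\mathbf k_0}^{-1}\mathbf v = c\,\mathbf u(\alpha)$ with $c$ a constant. That cannot be right even for smooth sections: by Proposition~\ref{p:prot}, $\ker_{L^2(\CC/\Gamma)}D(\alpha)$ always contains the two-dimensional subspace $\CC\,\mathbf u(\alpha)\oplus\CC\,\mathscr E\mathbf u(\alpha)$, and the projective uniqueness in Theorem~\ref{t:magic} applies only \emph{inside} the symmetry sector $L^2_{\rho_{1,0}}$, not to arbitrary kernel elements. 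For meromorphic sections the failure is worse: $h\,\mathbf u(\alpha)$ lies in the meromorphic kernel for any $\Gamma$-periodic meromorphic $h$, so the module of such sections is far larger than $\CC\,\mathbf u(\alpha)$. The correct version of your converse does not need meromorphic sections at all: if $v(\alpha)\neq 0$ then $(\mathbf u(\alpha),\mathscr E\mathbf u(\alpha))$ is a pointwise frame, so writing
\[
e^{-\frac i2(z\bar{\mathbf k}_0+\bar z\mathbf k_0)}\,\mathbf v(z)=f(z)\,\mathbf u(\alpha,z)+g(z)\,\mathscr E\mathbf u(\alpha,z)
\]
with $\partial_{\bar z}f=\partial_{\bar z}g=0$ produces \emph{entire} $f,g$ with $|f|,|g|$ $\Gamma$-periodic (hence bounded). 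Liouville then gives $f,g$ constant, and the quasi-periodicity $f(z+\gamma)=e^{-\frac i2(\gamma\bar{\mathbf k}_0+\bar\gamma\mathbf k_0)}f(z)$ with $\mathbf k_0\notin\Gamma^*$ forces $f=g=0$, contradicting $\mathbf v\not\equiv 0$. You should also be careful about logical order: as you have written it you invoke Theorem~\ref{t:magic}, \eqref{eq:recipe} and \eqref{eq:stack1}, but in the paper these depend on Proposition~\ref{p:Ak}, which is itself the content of the $\mathbf k$-independence claim you are trying to establish. The ingredients you actually need (existence of $\mathbf u(\alpha)$ from Proposition~\ref{p:prot}, the theta-ratio $F_{\mathbf k}$ from \S\ref{s:theta}, and the vanishing \eqref{eq:stack} from Proposition~\ref{p:magic}) are available without that circularity, but your citations obscure this.
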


We denote the full set of {\em resonant} $ \alpha$'s and 
the set of {\em magic} $ \alpha $'s as
\begin{equation}
\label{eq:defResa}
\begin{gathered} 
\mathcal A := 1/ ( \Spec_{  L^2 ( \CC /\Gamma ) } ( T_{\mathbf k } ) 
\setminus \{ 0 \} ) , \ \ \mathbf k \notin \Gamma^* , \\
\kern6.25pt \mathcal A_{\rm{mag}} := \mathcal A \cap ( 0 , \infty ) = \{ \alpha_j \}_{j\geq 1}, \ \ \alpha_1 < \alpha_2 < \cdots ,
\end{gathered}
\end{equation}
respectively. The elements of $ \mathcal A $ are {\em included with 
their multiplicities} as multiplicities of eigenvalues of 
$ T_{\mathbf k}$. { Those multiplicities are at least $ 9 $ -- see Proposition \ref{l:imp8}.
Numerical evidence suggests that multiplicities of $ \mathcal A_{\rm{mag}} $ are exactly $ 9 $
and that is related to the question about zeros of $ u (\alpha ) $ -- see \eqref{eq:stack1}
and Remark 1 after Proof of Theorem \ref{t:magic} in \S \ref{s:alph}.}

As a simple byproduct of Theorems \ref{t:magic} and \ref{t:spec} we have
\[  \Spec_{ L^2 ( \CC /\Gamma ) } D ( \alpha ) = \Gamma^* , \ \ 
\alpha \notin \mathcal A . \]
Examples of operators which have either discrete spectra or all of $ \CC $
as spectrum, depending on analytic variation of 
coefficients, have been known before, see for instance Seeley \cite{see}.
The operator $ D ( \alpha ) $ provides a new striking example
of such  phenomena, showing that it is physically relevant and not merely 
pathological.

If we assume that
$ U ( z ) = \overline{ U ( \bar z ) } $, then Proposition \ref{p:sym} below (see also Figure~\ref{f:resa})
also gives 
$  \mathcal A = - \mathcal A = \overline {\mathcal A }$.

Mathematical description of $ \mathcal A $ remains open and
here we only contribute the following simple result:
\begin{theo}
\label{t:trace}
For the potential $ U $ given by \eqref{eq:defU} we have
\begin{equation}
\label{eq:ttrace}  \sum_{ \alpha \in \mathcal A  } \alpha^{-4} = { 72 \pi}
/{\sqrt 3 }, 
\end{equation}
where $ \alpha$'s are included according to their multiplicities. 
In particular, $ \mathcal A \neq \emptyset $.
\end{theo}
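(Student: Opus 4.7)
The plan is to recognise the left-hand side as an operator trace and then evaluate it explicitly. By Theorem~\ref{t:spec}, $\mathcal{A}$ is (with algebraic multiplicities) the image of $\Spec(T_\mathbf{k})\setminus\{0\}$ under $\lambda\mapsto 1/\lambda$, and the spectrum is independent of $\mathbf{k}\notin\Gamma^*$. Hence, writing $\lambda_j=\lambda_j(T_\mathbf{k})$ for the eigenvalues counted with algebraic multiplicity,
\begin{equation*}
\sum_{\alpha\in\mathcal{A}}\alpha^{-4}\;=\;\sum_j\lambda_j^4\;=\;\tr_{L^2(\CC/\Gamma;\,\CC^2)}T_\mathbf{k}^4,
\end{equation*}
the last equality being Lidskii's theorem, provided $T_\mathbf{k}^4$ is of trace class. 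To justify this, observe that for $\mathbf{k}\notin\Gamma^*$ the resolvent $(2D_{\bar z}-\mathbf{k})^{-1}$ is a classical pseudodifferential operator of order $-1$ on the two-dimensional compact torus $\CC/\Gamma$ and $U\in C^\infty$, so $T_\mathbf{k}$ is of order $-1$. The Weyl estimate for singular values then gives $T_\mathbf{k}\in\mathfrak{S}_p$ for every $p>2$; taking $p=4$ yields $T_\mathbf{k}^4\in\mathfrak{S}_1$.

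For the trace itself, the off-diagonal block structure of \eqref{eq:defT} makes $T_\mathbf{k}^2$ block-diagonal with cyclically equal traces on both blocks, so $\tr T_\mathbf{k}^4=2\tr\bigl(R_\mathbf{k} U(z)R_\mathbf{k} U(-z)\bigr)^2$ with $R_\mathbf{k}:=(2D_{\bar z}-\mathbf{k})^{-1}$. In the Fourier basis $\{e_\mathbf{p}\}$ for the appropriate twisted periodic space, $R_\mathbf{k}$ is diagonal with eigenvalues $(\mathbf{p}-\mathbf{k})^{-1}$, while \eqref{eq:defU} makes multiplication by $U(z)$ (respectively $U(-z)$) a sum of three weighted Fourier shifts by $+i\omega^k$ (resp.\ $-i\omega^k$) with weight $\omega^k$, $k=0,1,2$. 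The trace reduces to a finite weighted sum of absolutely convergent lattice sums of order $|\mathbf{r}|^{-4}$, indexed by tuples $(j_1,j_2,j_3,j_4)\in\{0,1,2\}^4$ with $\{j_1,j_3\}=\{j_2,j_4\}$ as multisets (there are $15$ such tuples). The $\omega$-rotational symmetry of the hexagonal lattice $\Gamma^*$ collapses these into a small number of distinct blocks, producing in closed form the value $72\pi/\sqrt{3}$.

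The main obstacle is the explicit evaluation of these weighted hexagonal lattice sums, and the hard part is the careful bookkeeping of the phases $\omega^{j_1+j_2+j_3+j_4}$ against the $\omega$-twist of each shifted sum. An equivalent route, perhaps more transparent, writes $R_\mathbf{k}$ via an explicit theta-function Green's function and interprets $\tr T_\mathbf{k}^4$ as a four-fold integral over $(\CC/\Gamma)^4$ of a product of four Green's functions against the matrix trace $\tr[M(z_1)M(z_2)M(z_3)M(z_4)]$ built from the off-diagonal $M(z)$ of \eqref{eq:defT}; Floquet-averaging over $\mathbf{k}$ reduces this to a standard hexagonal-lattice constant. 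Either computation yields $72\pi/\sqrt{3}$, and since this number is strictly positive, $T_\mathbf{k}$ must possess nonzero eigenvalues, so $\mathcal{A}\neq\emptyset$.
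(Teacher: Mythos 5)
Your setup matches the paper exactly: both arguments identify $\sum_{\alpha\in\mathcal A}\alpha^{-4}$ with $\tr T_{\mathbf k}^4$ (via Lidskii once $T_{\mathbf k}^4\in\mathfrak S_1$), both invoke the pseudodifferential order $-1$ of $T_{\mathbf k}$ on the compact torus to get the Schatten-class membership, and both observe the block structure giving $\tr T_{\mathbf k}^4 = 2\,\tr(R_{\mathbf k}U(z)R_{\mathbf k}U(-z))^2$, which the paper writes as $18\,\tr A^2$ with $A=\tfrac13 R_{\mathbf k}U(z)R_{\mathbf k}U(-z)$. Your Fourier/shift analysis and the multiset count (15 contributing tuples) are also correct, and coincide with the structure the paper uses before its algebraic simplification.

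However, there is a genuine gap: the crucial computational content of the theorem is never carried out. You reduce the problem to "weighted hexagonal lattice sums" of order $|\mathbf r|^{-4}$ and then assert, without computation, that they produce $72\pi/\sqrt 3$. This is precisely what the paper proves in Lemma~\ref{l:sum}, where the Eisenstein-like quantity $K(\gamma_0)=\sum_{\gamma\in\Gamma\setminus\{0,\gamma_0\}}\gamma^{-2}(\gamma-\gamma_0)^{-2}$ is evaluated in closed form by a partial-fraction identity, a $\wp$-function argument, and a careful limiting computation with $\cot$-sums. That lemma, together with the algebraic cancellation that reduces the $15$ tuples to three nonvanishing $K$-terms (using $\tfrac1{ab}+\tfrac1{bc}+\tfrac1{ca}=\tfrac{a+b+c}{abc}$ and a pointwise identity on the diagonal resolvent weights), is the entire substance of the proof of~\eqref{eq:ttrace}. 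Without actually performing this evaluation, the claim that the constant equals $72\pi/\sqrt 3$ is unsupported, and therefore so is the conclusion $\mathcal A\neq\emptyset$; nothing in the preceding reduction shows the sum is nonzero. You acknowledge this yourself ("The main obstacle is the explicit evaluation…"), so what you have is a correct plan that mirrors the paper's route, but not a proof. Your alternative route via a theta-function Green's function and a four-fold integral is plausible in principle, but it is sketched in even less detail and would face essentially the same hexagonal-lattice-constant computation.

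Two minor points: your Schatten argument "$T_{\mathbf k}\in\mathfrak S_p$ for every $p>2$, hence $T_{\mathbf k}^4\in\mathfrak S_1$" is fine but could be stated more directly as $T_{\mathbf k}^4\in\Psi^{-4}$ being trace class on a two-dimensional compact manifold (this is the paper's route). And "Floquet-averaging over $\mathbf k$" is unnecessary since the trace is already independent of $\mathbf k\notin\Gamma^*$, as you noted at the start.
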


\begin{figure}
\begin{center}
\includegraphics[height=5.5cm]{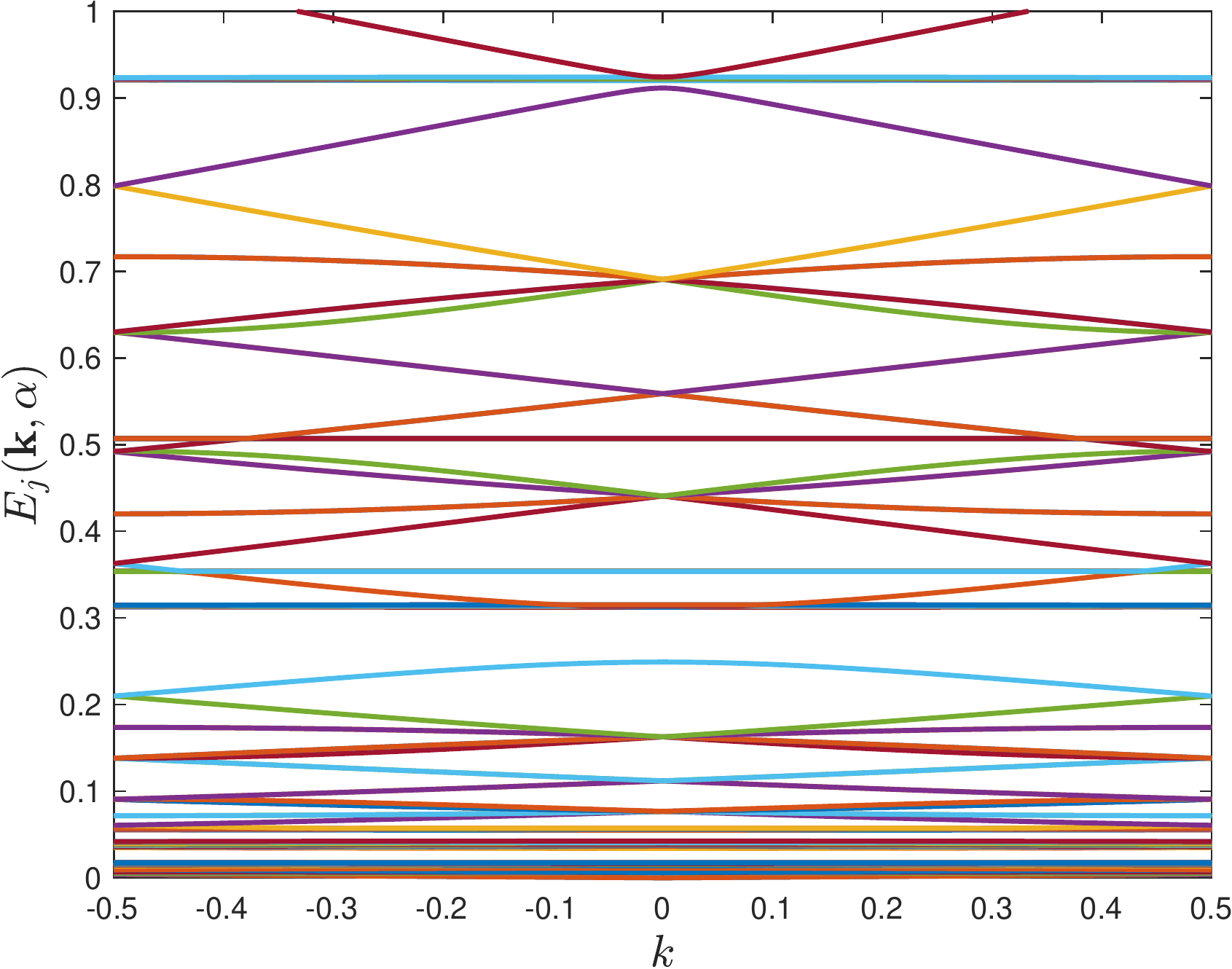}
\includegraphics[height=5.5cm]{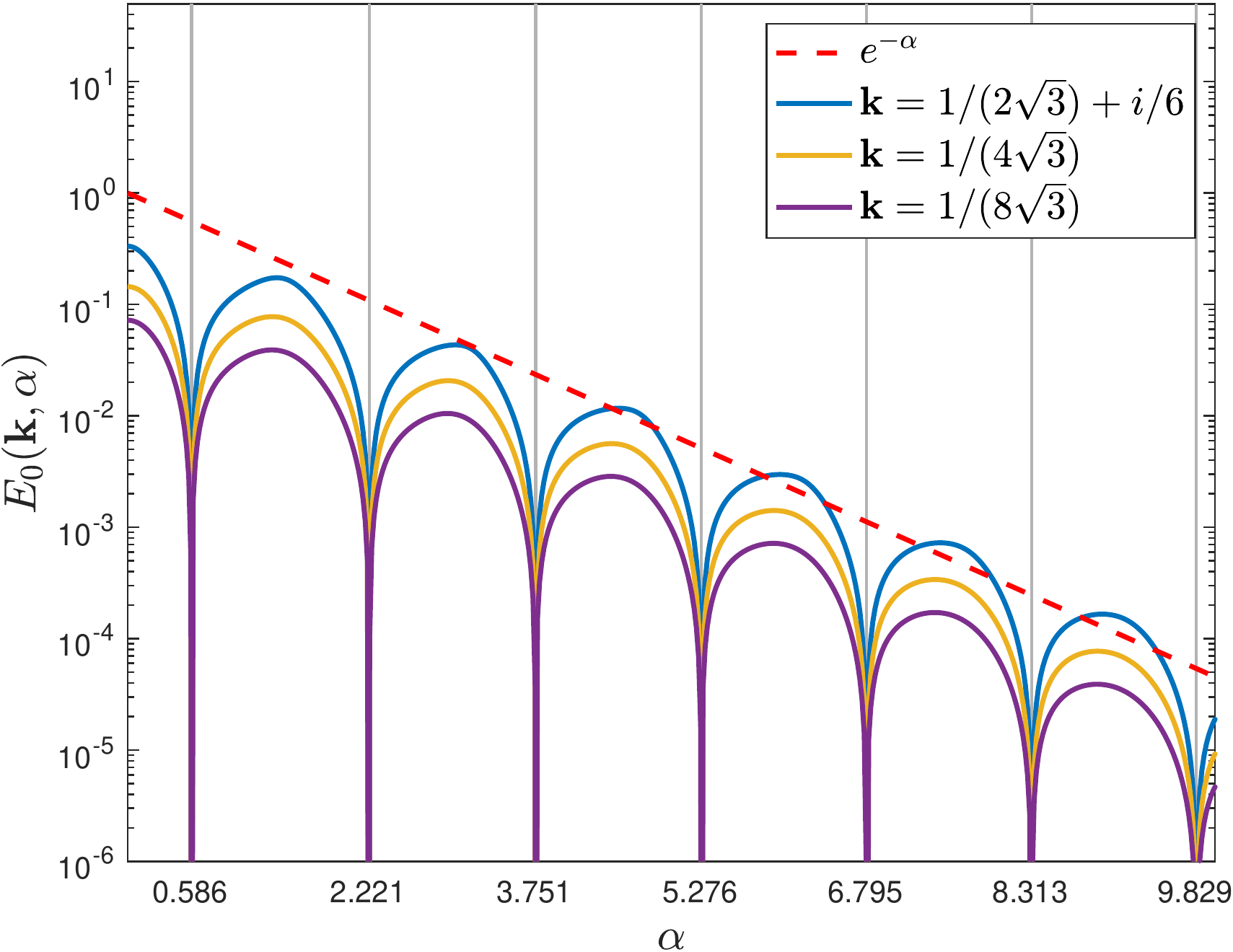}
\end{center}
\caption{\label{f:bands} On the left, the smallest non-negative eigenvalues of $ H_{\mathbf k } ( 
\alpha )$, $ \alpha = 5$, $ \mathbf k = k \omega/\sqrt 3$, $ -\frac12  \leq k \leq 
\frac12 $. On the right, $E_0({\mathbf k},\alpha)$ (log scale) for several values $ \mathbf k $.
(The point ${\mathbf k} = 1/(2\sqrt{3}) + i/6$ is farthest from an eigenvalue of $D(\alpha)$ for $\alpha\not\in{\mathcal A}$.)
The exponential squeezing of the bands described in Theorem \ref{t:squeeze}
is clearly visible.}
\end{figure}

Concerning $ \mathcal A_{\rm{mag}}$, an intriguing asymptotic relation for $ \alpha_j $'s for $ U $ given by \eqref{eq:defU} was suggested by the
numerics in \cite{magic}:
\begin{equation}
\label{eq:alphaj}  \alpha_{j+1} - \alpha_j \simeq \tfrac 3 2 , \ \ j \gg 1 . \end{equation}
 We do not address this problem here except numerically in \S \ref{s:num} and in Figure~\ref{f:mu}, which shows that
 regular spacing does not hold for general potentials.
 The following result based on  Dencker--Sj\"ostrand--Zworski \cite{dsz}  
indicates the mathematical subtlety underlying the distribution problem: for large values of 
$ \alpha $ the bands get exponentially squeezed, making it difficult to find the ones that are exactly zero; see Figure~\ref{f:bands} and the following
\begin{theo}
\label{t:squeeze}
Suppose that $ H_{\mathbf k} ( \alpha ) $ is given by 
\eqref{eq:defH} and \eqref{eq:defmal} with $ U $ given by 
\eqref{eq:defU} and that
\[  \Spec_{ L^2 ( \CC/\Gamma ) } H_{\mathbf k} ( \alpha ) = \{  E_j ( \mathbf k , \alpha ) \}_{ j \in \ZZ }  , \ \ \  
  E_{ j } ( \mathbf k , \alpha ) \leq E_{ j+1 } ( \mathbf k , \alpha ) ,  \ \  \mathbf k \in \CC , \ \ \alpha > 0 , \]
with the convention that $  E_0 ( \mathbf k , \alpha ) = 
\min_{ j} |E_j ( \mathbf k , \alpha ) | $.
Then there exist positive constants $c_0$, $c_1$, and $c_2$ such that
for all $ \mathbf k \in \CC $, 
\begin{equation}
\label{eq:lamb}  | E_j ( \mathbf k, \alpha) | \leq c_0 e^{ - c_1 \alpha } , 
\ \  | j | \leq c_2 \alpha, \ \  \alpha > 0 .
\end{equation}
\end{theo}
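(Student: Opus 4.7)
The plan is to treat $D(\alpha)-\mathbf k$ as a non-self-adjoint matrix-valued semiclassical operator with parameter $h := 1/\alpha$ and apply the analytic quasimode construction of Dencker--Sj\"{o}strand--Zworski \cite{dsz} to exhibit an $O(\alpha)$-dimensional subspace on which the operator is exponentially small. After rescaling, $h\,D(\alpha) - h\mathbf k$ has principal symbol
\[
p(z,\eta) = \begin{pmatrix} \eta & U(z) \\ U(-z) & \eta \end{pmatrix}, \qquad \eta := \xi_1 + i\xi_2,
\]
with $-h\mathbf k\,I$ entering only at subprincipal order. The characteristic variety $\Sigma = \{\det p = 0\} = \{\eta^2 = V(z)\}$, $V(z) := U(z)U(-z)$, is accompanied by
\[
\tfrac{1}{i}\{\det p,\overline{\det p}\}(z,\eta) = 8\,\Im\bigl(\bar\eta\,\partial_z V(z)\bigr),
\]
which, by the non-triviality hypothesis \eqref{eq:conditionbracket} (checkable directly for \eqref{eq:defU}), is nonzero with definite sign on some open $\Omega \subset \Sigma$ after selecting the branch $\eta = \pm\sqrt{V(z)}$. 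This is precisely the failure of H\"ormander's solvability condition that drives the DSZ construction.

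Fixing $(z_0,\eta_0)\in\Omega$, one solves the holomorphic eikonal equation $\det p(z,\partial_z\varphi) = 0$ with $\partial_z\varphi(z_0) = \eta_0$ and $\Im\varphi$ having a non-degenerate strict minimum at $z_0$, together with a matrix transport equation for a $\CC^2$-valued amplitude $a(z;h)\in\ker p(z,\partial_z\varphi(z))$ (the leading-order kernel vector is $(U(z),-\partial_z\varphi(z))^t$). Real-analyticity of $U$ permits extending this into the complex domain, upgrading the usual $O(h^\infty)$ error to $O(e^{-c/h})$, as in the analytic sharpening of \cite{dsz}. After cutting off and $\Gamma$-periodizing,
\[
\mathbf u_{z_0}(z;h) := \sum_{\gamma\in\Gamma}\chi(z-z_0-\gamma)\,a(z-z_0-\gamma;h)\,e^{i\varphi(z-z_0-\gamma)/h}
\]
satisfies $\|\mathbf u_{z_0}\|_{L^2(\CC/\Gamma)} \asymp 1$ and $\|(hD(\alpha)-h\mathbf k)\mathbf u_{z_0}\| \leq C e^{-c/h}$, uniformly for $\mathbf k$ in a fundamental domain of $\Gamma^*$; the full statement for $\mathbf k\in\CC$ follows from the gauge equivalence $e^{-i\gamma^*\cdot x}(D-\mathbf k)e^{i\gamma^*\cdot x} = D-(\mathbf k-\gamma^*)$, $\gamma^*\in\Gamma^*$.

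To obtain $O(\alpha)$ such quasimodes, I choose $N$ base points $\{z_j\}_{j=1}^N$ on a $C_0\sqrt h$-spaced grid inside the projection of $\Omega$ to $\CC/\Gamma$, giving $N = c_2\alpha$ with $c_2 = c_2(C_0)$. The Gaussian width-$\sqrt h$ decay yields $|\langle \mathbf u_{z_i},\mathbf u_{z_j}\rangle| \leq e^{-c|z_i-z_j|^2/h}$ for $i\neq j$, and a Schur bound gives $\|G - I\|_{\mathrm{op}} \leq C e^{-cC_0^2}$, uniformly in $N$. Fixing $C_0$ large so that $\|G-I\| < \tfrac12$, the Gram matrix is uniformly invertible, and the min-max principle applied to the non-negative operator $A := (hD(\alpha)-h\mathbf k)^*(hD(\alpha)-h\mathbf k)$ on the $N$-dimensional span gives $\mu_j(A) \leq CN e^{-2c/h} = O(e^{-2c'/h})$ for $1\leq j\leq N$ and any $c'<c$. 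Translating back, $D(\alpha)-\mathbf k$ has at least $N=c_2\alpha$ singular values bounded by $c_0 e^{-c_1\alpha}$, and since the spectrum of $H_{\mathbf k}(\alpha)$ consists of $\pm$-pairs of singular values of $D(\alpha)-\mathbf k$, this yields $2N$ eigenvalues of that magnitude, as claimed.

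The main obstacle is the rigorous \emph{exponential} (as opposed to $O(h^\infty)$) quasimode construction for the $2\times 2$ matrix symbol $p(z,\eta)$ with uniformity in $\mathbf k$: one must either perform a microlocal Grushin reduction to the scalar factor $\det p$ or adapt the analytic phase-amplitude scheme of \cite{dsz} directly to this matrix setting, tracking exponential constants through the complex-analytic continuation of $U$ and verifying that the $O(h)$ subprincipal $-h\mathbf k$ never spoils them.
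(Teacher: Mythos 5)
Your proposal matches the paper's strategy almost step for step: rescale by $h=1/\alpha$, compute the determinant symbol $q=\det p=4\bar\zeta^2-V$, verify H\"ormander's bracket condition $\{q,\bar q\}\ne 0$ on an open set, build Gaussian-width-$\sqrt h$ quasimodes on a $\sqrt h$-spaced grid, observe near-orthogonality and deduce an $O(1/h)$-dimensional space of $O(e^{-c/h})$ quasimodes, and convert this into smallness of singular values of $D(\alpha)-\mathbf k$, hence of eigenvalues of $H_{\mathbf k}(\alpha)$. The linear-independence step you handle with a Schur bound on the Gram matrix, the paper with its Proposition~\ref{p:indie}; both are routine and equivalent. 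Your remark on reducing $\mathbf k$ to a fundamental domain by $\Gamma^*$-periodicity is also correct and worth stating explicitly.

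The one genuine gap you yourself flag — obtaining analytic ($e^{-c/h}$ rather than $O(h^\infty)$) quasimodes for the $2\times 2$ matrix symbol — is where the paper is cleaner than either of the two alternatives you sketch (Grushin reduction or a matrix analytic WKB scheme). The paper's Proposition~\ref{p:dsz} avoids matrix microlocal analysis altogether: it works near a point $z_0$ with $U(z_0)\ne 0$ and sets
\[
Q := U(z,\bar z)\,(2hD_{\bar z}-h\mathbf k)\bigl(U(z,\bar z)^{-1}(2hD_{\bar z}-h\mathbf k)\bigr) - U(-z,-\bar z)\,U(z,\bar z),
\]
a scalar second-order operator with analytic coefficients whose semiclassical principal symbol is exactly $\det p = 4\bar\zeta^2-V$. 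One then invokes the scalar analytic result \cite[Theorem~1.2]{dsz} verbatim to get $v_h$ with $Qv_h = O(e^{-c/h})$, and assembles
\[
\mathbf u_h := \bigl(v_h,\ -U(z,\bar z)^{-1}(2hD_{\bar z}-h\mathbf k)v_h\bigr).
\]
Applying $P(h)-h\mathbf k$ to $\mathbf u_h$, the first row vanishes identically, and the second row equals $-U(z)^{-1}Qv_h = O(e^{-c/h})$ on the support. So no matrix transport equation or Grushin problem is needed; the key observation is just that one can divide by $U$ on the relevant open set. Everything else in your sketch is sound, modulo standard bookkeeping of constants and the observation (which the paper also makes, via \eqref{eq:TaylorU}) that the non-degeneracy condition \eqref{eq:conditionbracket} guarantees $\Im(\overline{V}^{1/2}\partial_z V)\ne 0$ on a punctured neighbourhood of the origin, hence on a genuine open set.
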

Numerical experiments presented in Figure~\ref{f:squeeze} (see also 
Figure~\ref{f:bands}) suggest that 
for any $ c_2 $ there exists $ c_0 $ for which \eqref{eq:lamb} holds, with 
$ c_1 = 1$.  The theorem is proved by showing that for large $ \alpha $
every point ``wants to be" in the spectrum of $ D ( \alpha ) $ modulo 
an exponentially small error. That is a typical {\em pseudospectral}
effect in the study of non-hermitian operators -- see 
Trefethen--Embree \cite{trem} for a broad description of such 
phenomena. Although $ H_{\mathbf k} ( \alpha ) $ is self-adjoint, having a zero eigenvalue is equivalent to 
$ \mathbf k \in \Spec_{{ L^2 ( \CC/\Gamma ) }}  ( D( \alpha ) )$ and $ D( \alpha ) $ is highly non-normal. {This is illustrated in Figure~\ref{f:psa}.}
In Section \ref{s:exp} we explore the situation for general potentials satisfying the symmetries \eqref{eq:symmU2}, and prove that a result corresponding to Theorem \ref{t:squeeze} continues to hold if an additional non-triviality assumption is imposed; see \eqref{eq:conditionbracket} and Theorem \ref{t:squeezegenpot}. 
(Some condition is clearly needed, as shown by the example of 
$ U \equiv 0 $.)

Watson and Luskin \cite{walu} have recently provided an alternative 
proof of Theorem \ref{t:magic} and implemented it numerically with precise 
error bounds. Assuming accuracy of singular value and polynomial calculations they proved
existence of $ \alpha_1 \in \mathcal A_{\rm{mag}} $, $ \alpha_1 \simeq 0.586 $.
Motivated by \cite{walu} we added error estimates for our calculations in 
\S \ref{s:error}. Assuming accuracy of singular value estimates for large sparse
matrices we show existence of $ \alpha_1 $ within $ 10^{-9}$ and $ \alpha_2 $
within $ 10^{-3} $ -- see Tables \ref{tbl:error} 
and \ref{tbl:back}. However, we do have high confidence in all digits shown
in Table \ref{tbl:magic}.

\section{Hamiltonian and its symmetries}
\label{s:ham}

In this section we discuss symmetries of $ D ( \alpha ) $ and 
$ H ( \alpha ) $ and prove basic results about their spectra.  

{Before entering mathematical  analysis of the model we provide a brief motivation for the Hamiltonian. 
Two basic symmetries are inherited from the honeycomb structure of the moir\'e lattice: a translation symmetry and a rotational symmetry by $2\pi/3$. In addition, the model exhibits a chiral symmetry which accounts for the massless and symmetric Dirac cones of the model that are preserved by the tunneling interaction.
The Dirac cones are effectively described by $2D$-massless Dirac operators. 
Therefore, the cones of two non-interacting sheets of graphene are described by a kinetic Hamiltonian 
\[ H_{\operatorname{kin}} = \operatorname{diag}(H_{\operatorname{Dirac}},H_{\operatorname{Dirac}})\text{, with } H_{\operatorname{Dirac}} =\begin{pmatrix}  0 & 2 D_z \\ 2 D_{\bar z} & 0 \end{pmatrix}.\]
Since honeycomb lattices are unions of two triangular lattices, we may distinguish between atoms of type $A$ and $B$. Considering then only the tunnelling interaction of atoms of different types between the layers gives rise to an off-diagonal tunnelling matrix 
\[ \tau(\alpha, z) = \begin{pmatrix} 0 & \alpha \overline{U(-z)} \\ \alpha U(z) & 0 \end{pmatrix}.\]
The tunnelling potential is then described by 
\[ H_{\operatorname{tun}}(\alpha) = \begin{pmatrix} 0 & \tau(\alpha,z) \\ \tau(\alpha,z)^* & 0 \end{pmatrix}.\]
Conjugating the sum of the two Hamiltonians by unitary operators yields, for $\sigma_1 = \begin{pmatrix} 0 & 1 \\ 1 & 0 \end{pmatrix},$
\[ H(\alpha) = \operatorname{diag}(1,\sigma_1,1) (H_{\operatorname{kin}}+ H_{\operatorname{tun}}(\alpha)) \operatorname{diag}(1,\sigma_1,1),\]
which is the operator introduced in \cite{magic} and studied in this article.}
\subsection{Symmetries of $ H ( \alpha ) $} \label{s:sH} 
The potential \eqref{eq:defU} satisfies the following properties:
\begin{equation}
\label{eq:symmU} 
\begin{gathered}  \mathbf a = \tfrac{4}3 \pi i \omega^\ell,  \ \ell =1,2 \ \Longrightarrow \ 
U ( z + \mathbf a ) = \bar \omega U ( z ) , \text{ and }\\
U ( \omega z )  = 
\omega U ( z )   .
\end{gathered}
\end{equation}
The first property in \eqref{eq:symmU} follows from the fact that (with $ k , \ell \in \ZZ_3 $)
\[ \tfrac12 ( {\mathbf a }  \bar \omega^k - \bar {\mathbf a } \omega^k )  =
 \tfrac23 \pi i (  \omega^{k-\ell} + \bar \omega^{k-\ell} ) = 
\left\{ \begin{array}{ll}  \tfrac43 \pi i \equiv - \tfrac23 \pi i \!\!\! \mod 2 \pi i , & k - \ell = 0; \\
\ \ \ \ -\tfrac23 \pi i,  & k - \ell \neq 0. 
\end{array} \right. \]
From this first property in \eqref{eq:symmU} we see that
\begin{equation}
\label{eq:Gam}
 U ( z + \gamma ) = U ( z) , \ \ \gamma \in 
\Gamma := 4 \pi \left( i \omega \ZZ \oplus i \omega^2 \ZZ \right). \end{equation}
The dual lattice consisting of $ \mathbf k $ satisfying 
 $ \tfrac12 (\gamma \bar{\mathbf k } + \bar \gamma \mathbf k ) \in 2 \pi \ZZ $ for $ \gamma\in \Gamma $, is given by 
$ \Gamma^* = \frac{ 1 } {\sqrt 3 } \left(  
  \omega  \ZZ  \oplus  {\omega^2}  \ZZ \right)$.

The second identity in \eqref{eq:symmU} shows that 
%for 
%$ \mathbf a = \frac 43 \pi \omega^\ell $, 
%$ U ( z ) v ( z + \mathbf a ) = \omega (U v) ( z + \mathbf a ) $
%and $ U ( - z ) v ( z + \mathbf a ) = \bar \omega( U ( - \bullet ) v ) ( z +  \mathbf a ) $. Hence, 
with 
$ L_{\mathbf a } \mathbf v ( z ) := \mathbf v ( z + \mathbf a ) $, 
\[  D(\alpha) L_{ \mathbf a } = 
L_{\mathbf a } \begin{pmatrix} 2   D_{\bar z } &  \omega \alpha U 
\\ \bar \omega \alpha U ( - \bullet ) & 2  D_{\bar z } \end{pmatrix} = 
 \begin{pmatrix}  \omega & 0 \\ 0 & 1 \end{pmatrix} 
L_{\mathbf a } D(\alpha) \begin{pmatrix} \bar  \omega & 0 \\ 0 & 1 \end{pmatrix} ,
\ \ \mathbf a = \tfrac{ 4}3 \pi i \omega^\ell , \ \ \ell = 1, 2 . \]
Hence,
\begin{equation}
\label{eq:defLa0} 
\mathscr L_{\mathbf a } D ( \alpha )  = D ( \alpha )  \mathscr L_{\mathbf a } , \ \ \ 
 {\mathscr L}_{\mathbf a } :=  \begin{pmatrix} \omega & 0 \\
0 & 1 \end{pmatrix} L_{\mathbf a } , \ \ \ 
\mathbf a = \tfrac{ 4}3 \pi i \omega^\ell , \ \ \ell = 1,2. \end{equation}
Putting
\begin{equation}
\label{eq:defGam3}
 \Gamma_3 := \Gamma/3 = \tfrac43 \pi ( i \omega \ZZ \oplus i \omega^2 \ZZ) , \ \ \ 
\Gamma_3/\Gamma \simeq \ZZ_3^2 , \end{equation}
and
\begin{equation*}
\label{eq:defLa} \mathscr L_{\mathbf a} := \begin{pmatrix}  \omega^{a_1 + a_2} & 0 \\ 0 & 1 \end{pmatrix}   L_{\mathbf a} , \ \ 
\mathbf a = \tfrac 43 \pi i ( \omega a_1 + \omega^2 a_2 ) , \end{equation*}
we obtain a unitary action of $\Gamma_3 $ on $ L^2 ( \CC ) $ or on 
$ L^2 ( \CC/\Gamma ) $, $ \Gamma_3 \ni \mathbf a \mapsto \mathscr L_{\mathbf a } $. 

We extend the action of $ \mathscr L_{\mathbf a } $ to 
$ L^2 ( \CC ; \CC^4 ) $ or $ L^2 ( \CC/\Gamma; \CC^4 ) $ block-diagonally
and we have $ \mathscr L_{\mathbf a} H ( \alpha ) = 
H ( \alpha ) \mathscr L_{\mathbf a } $. 

The second identity in \eqref{eq:symmU} shows that
$  [ D ( \alpha ) \mathbf u ( \omega \bullet ) ] ( z )  = \bar \omega 
[ D ( \alpha ) \mathbf u ] ( \omega z ) $. 
Hence, 
\begin{equation*}
\label{eq:CanD} 
\begin{gathered}
\mathscr C H ( \alpha ) = H ( \alpha ) \mathscr C ,   \ \ \ \ 
 \mathscr C \mathbf u ( z ) := \begin{pmatrix} 1 & 0 & 0 & 0 \\
 0 & 1 & 0 & 0 \\
 0 & 0 &  \bar \omega & 0 \\
0 & 0 & 0 &  \bar \omega \end{pmatrix} \mathbf u ( \omega z ) ,
\ \ \mathbf u \in  L^2 ( \CC ; \CC^4 ) .
\end{gathered}
\end{equation*}
Since $ \mathscr C \mathscr L_{\mathbf a } = \mathscr L_{\bar \omega \mathbf a } \mathscr C$, we combine the two actions into a unitary group action that
commutes with $ D ( \alpha ) $:
\begin{equation}
\label{eq:defG}  
\begin{gathered}  G := \Gamma_3 \rtimes \ZZ_3 , \ \ 
  \ZZ_3 \ni k : \mathbf a \to \bar \omega^k  \mathbf a , \ \ \ ( \mathbf a , k ) \cdot ( \mathbf a' , \ell ) = 
( \mathbf a + \bar \omega^k \mathbf a' , k + \ell ) ,
\\  ( \mathbf a, \ell ) \cdot \mathbf u = 
\mathscr L_{\mathbf a } \mathscr C^\ell \mathbf u . \ \
\end{gathered}
\end{equation}
By taking a quotient by $ \Gamma $ we obtain a finite group acting 
unitarily on $ L^2 ( \CC/\Gamma ) $ and commuting with $ H ( \alpha ) $:
\begin{equation}
\label{eq:defG3}
G_3 :=  G/\Gamma = \Gamma_3/\Gamma \rtimes \ZZ_3 \simeq \ZZ_3^2 \rtimes \ZZ_3. 
\end{equation}

By restriction to the first two components, $ G $ and $ G_3 $ act on 
$ L^2 ( \CC ; \CC ) $ and $ L^2 ( \CC/\Gamma; \CC^2 ) $ as well and we
use the same notation for those actions.

\noindent
{\bf Remark.} 
The group $ G_3 $ is naturally identified with the finite Heisenberg group
${\rm{He}}_3$:
\begin{gather*}   {\rm{He}}_3 := \left\{ \begin{pmatrix} 1 & x & t \\ 0 & 1 & y \\ 0 & 0 & 1 \end{pmatrix} , \ x,y,t \in \ZZ_3 \right\}, \\ 
\begin{pmatrix} 1 & x & t \\ 0 & 1 & y \\ 0 & 0 & 1 \end{pmatrix}
\begin{pmatrix} 1 & x' & t' \\ 0 & 1 & y' \\ 0 & 0 & 1 \end{pmatrix}
= \begin{pmatrix} 1 & x + x' & t + t' + x y' \\ 0 & 1 & y + y' \\ 0 & 0 & 1 \end{pmatrix} .
\end{gather*}
The identification of $ G_3 $ and $ {\rm{He}}_3 $ follows: with
$ \Gamma_3 / \Gamma \ni \mathbf a \mapsto  F ( \mathbf a ) := ( a_1 , a_2 ) \in \ZZ^2_3  $, $  \mathbf a = \tfrac 43 \pi i ( \omega a_1 + \omega^2 a_2 )$,
we have 
$ {\rm{He}}_3 \ni ( x, y , t ) \longmapsto (  F^{-1} ( t, y - t) , x ) \in G_3$. \qed

We record two more actions involving $ H ( \alpha ) $:
\begin{equation}
\label{eq:defW}
\begin{gathered}   H(\alpha) = - \mathscr W H(\alpha) \mathscr W^*, \ \ \ \mathscr W := \begin{pmatrix}
1 & 0 \\
0 & -1 \end{pmatrix} ,  % \ \ \ \mathscr W^* = \mathscr W^{-1} = \mathscr W, 
%\\  
\ \ \ \mathscr W  \mathscr C  = 
 \mathscr C  \mathscr W , 
 %\ \ \mathscr Q \mathscr W = \mathscr W \mathscr Q,
 \  
 \ \ \mathscr L_{\mathbf a } \mathscr W = \mathscr W \mathscr L_{\mathbf a },
\end{gathered}
\end{equation}
and   
\[  \mathscr Q H( \alpha) \mathscr Q^* =-H ( -\alpha ), \ \ \ 
\mathscr Q: = \operatorname{diag}(i,-i,-i,i), \ \ \ 
\mathscr Q  \mathscr C =   \mathscr C \mathscr Q, \ \ \
\mathscr Q \mathscr L_{\mathbf a } = \mathscr L_{\mathbf a } \mathscr Q.
 \]

We summarize these simple findings in
\begin{prop}
\label{p:specH}
The operator $ H ( \alpha ) : L^2 ( \CC; \CC^4 ) \to L^2 ( \CC; \CC^4 )$ is
an unbounded self-adjoint operator with the domain given by 
$ H^1 ( \CC; \CC^4 ) $. The operator $ H ( \alpha ) $ commutes with the 
unitary action of the group $ G $ given by \eqref{eq:defG} and 
\[ \Spec_{ L^2 ( \CC ) } H ( \alpha ) = 
- \Spec_{ L^2 ( \CC ) } H ( \alpha ) = \Spec_{ L^2 ( \CC ) } H ( - \alpha ) 
. \]
The same conclusions are valid when $ L^2 (\CC ) $ is replaced by $ L^2 (\CC/\Gamma) $
and $ G $ by $ G_3 $ given by \eqref{eq:defG3}. In addition, the spectrum is then discrete. 
\end{prop}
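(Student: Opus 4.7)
The plan is to verify each claim in turn; most follow by direct appeal to identities already established in this section.

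For self-adjointness on $L^2(\CC;\CC^4)$, I would write $H(\alpha) = H(0) + \alpha V$, where $V$ is multiplication by the bounded, $\Gamma$-periodic, Hermitian matrix built from $U(\pm z)$. The unperturbed operator $H(0)$ is a constant-coefficient first-order elliptic (Dirac-type) system with principal part built from $2 D_{\bar z}$ and $2 D_z = (2D_{\bar z})^*$; a Fourier transform argument (or standard ellipticity on $\RR^2$) shows $H(0)$ is essentially self-adjoint on $C_c^\infty(\CC;\CC^4)$ with closure of domain $H^1(\CC;\CC^4)$. Since $\alpha V$ is bounded and symmetric, the Kato--Rellich theorem yields self-adjointness of $H(\alpha)$ on the same domain.

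The commutation with $G$ is simply a repackaging of earlier identities. The relation \eqref{eq:defLa0} shows $\mathscr L_{\mathbf a} D(\alpha) = D(\alpha) \mathscr L_{\mathbf a}$ for $\mathbf a \in \Gamma_3$, and extending block-diagonally to $\CC^4$ gives $[\mathscr L_{\mathbf a}, H(\alpha)] = 0$. The identity for $\mathscr C$ stated in the passage between \eqref{eq:defLa} and \eqref{eq:defG} gives $[\mathscr C, H(\alpha)] = 0$. The conjugation rule $\mathscr C \mathscr L_{\mathbf a} = \mathscr L_{\bar\omega \mathbf a}\mathscr C$ shows these generators assemble into a unitary action of the semidirect product $G = \Gamma_3 \rtimes \ZZ_3$ from \eqref{eq:defG}.

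The spectral symmetries follow from the two extra unitaries. By \eqref{eq:defW}, $\mathscr W H(\alpha) \mathscr W^* = - H(\alpha)$, and since conjugation by a unitary preserves the spectrum, $\Spec H(\alpha) = -\Spec H(\alpha)$. The identity $\mathscr Q H(\alpha) \mathscr Q^* = - H(-\alpha)$ gives $\Spec H(\alpha) = -\Spec H(-\alpha)$; combining this with the first symmetry applied at $-\alpha$ yields $\Spec H(\alpha) = \Spec H(-\alpha)$.

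For the statements on $L^2(\CC/\Gamma)$, I would note that $U$ is $\Gamma$-periodic, so $H(\alpha)$ restricts to a well-defined operator on the torus, where $H^1(\CC/\Gamma;\CC^4)$ is the natural domain; the same Kato--Rellich argument applies. The action of $\Gamma_3$ descends to $\Gamma_3/\Gamma \simeq \ZZ_3^2$ and combines with $\mathscr C$ to produce the $G_3$-action of \eqref{eq:defG3}. The spectral symmetries are inherited verbatim, as the unitaries $\mathscr W, \mathscr Q, \mathscr L_{\mathbf a}, \mathscr C$ all act on $L^2(\CC/\Gamma;\CC^4)$. Discreteness follows from ellipticity on the compact manifold $\CC/\Gamma$: the resolvent $(H(\alpha) - i)^{-1}$ sends $L^2$ into $H^1(\CC/\Gamma;\CC^4)$, which embeds compactly into $L^2$, so the resolvent is compact and the spectrum is a discrete sequence of eigenvalues of finite multiplicity. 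No step presents a real obstacle; the main content is the organizational one of checking that each previously computed intertwining identity is consistent with $H(\alpha)$ being self-adjoint on the stated domain.
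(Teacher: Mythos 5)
Your proof is correct and matches the paper's approach. The paper states this proposition as a summary of the preceding symmetry computations (the identities for $\mathscr L_{\mathbf a}$, $\mathscr C$, $\mathscr W$, $\mathscr Q$) and gives no separate displayed proof; you simply make explicit the routine ingredients the paper leaves implicit — Kato--Rellich for self-adjointness on $H^1$, the $\mathscr W$ and $\mathscr Q$ conjugations for the spectral symmetries, and the compact embedding $H^1(\CC/\Gamma)\hookrightarrow L^2(\CC/\Gamma)$ for discreteness.
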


\subsection{Representation theory and protected states at \boldmath $ 0 $}
\label{s:rep}

Irreducible unitary representations of $ \ZZ_3^2 $ are one dimensional 
and are given by 
\begin{equation}
\label{eq:reprZ32}
\begin{gathered}   \pi_{\mathbf k} : \ZZ_3^2 \to {\mathsf U}(1) , \ \ \  \pi_{\bf k }(\mathbf a )  = e^{\frac i 2 ( \mathbf a \bar {\mathbf k } + 
\bar {\mathbf a} \mathbf k ) } ,\\ 
%e^{\frac i 2 (\mathbf a \bar{\mathbf k} + \bar{\mathbf a } {\mathbf k}) } 
\mathbf a = \tfrac 43 \pi ( a_1 i \omega + a_2 i \omega^2 ) , \ \  \
a_j \in \ZZ_3 , \ \ \ 
\mathbf k = \tfrac{1}{\sqrt 3 } ( \omega^2 k_1 - \omega k_2 )  , \ k_j 
\in \ZZ_3, \\ 
\tfrac 1 2 ( \mathbf a \bar {\mathbf k } + 
\bar {\mathbf a} \mathbf k ) = \langle \mathbf a, \mathbf k \rangle =
\tfrac{2 \pi}3 (k_1 a_1 + k_2 a_2) .\end{gathered}
\end{equation}

Irreducible representations of $ G_3 $ are one dimensional for 
$ \mathbf k \in \Delta $ (given by $ \Delta(\ZZ_3 ) := \{ ( k, k ) , 
 k \in \ZZ_3 \}$ -- we note that $ \langle \mathbf k , 
 \omega \mathbf a \rangle = \langle \mathbf k , \mathbf a \rangle$, 
 $ \mathbf a \in \Gamma_3/\Gamma $, 
  if and
 only if $ \mathbf k \in \Delta  $), 
\[  \rho_{ k,p } ( ( \mathbf a , \ell ) ) = 
\bar \omega^{\ell p}  \pi_{(k,k)} ( \mathbf a ) ,\]
or three dimensional, for $ \mathbf k \notin \Delta  $:
\[   \rho_{\mathbf k}( ( \mathbf a , \ell ) ) 
= \begin{pmatrix} \omega^{ \langle \mathbf k , \mathbf a \rangle }
& 0 & 0 \\
0 & \omega^{\langle  \mathbf k , \omega \mathbf a \rangle }
 & 0 \\
0 & 0 &  \omega^{\langle  \mathbf k , \omega^2 \mathbf a \rangle }

\end{pmatrix} \begin{pmatrix} 0 & 1 & 0 \\
0 & 0 & 1 \\
1 & 0 & 0 \end{pmatrix}^{\ell}  \in {\mathsf U }( 3 ) . \]
The representations are equivalent for $ \mathbf k $ in the same orbit of
the transpose of $ \mathbf a \mapsto \omega \mathbf a $,
 and hence there are only two.

From this we see the well known fact that there are 11 irreducible representations: 9 one dimensional and 2 three dimensional.
 We can decompose 
$ L^2 ( \CC/\Gamma; \CC^4 ) $ into 11 orthogonal subspaces (since the groups are finite we do not have the usual Floquet theory difficulties!):
\[ L^2 ( \CC/\Gamma; \CC^4 ) = \bigoplus_{k, p \in 
\ZZ_3 }  L^2_{ \rho_{k,p}  } ( 
\CC / \Gamma; \CC^4 ) \oplus L^2_{  \rho_{(1,0)} } ( 
\CC / \Gamma; \CC^4 ) \oplus L^2_{  \rho_{(2,0)} } ( 
\CC / \Gamma; \CC^4 ).  \]
In view of Proposition \ref{p:specH} we have
\[ H_{k,p} ( \alpha ) := H ( \alpha )   : ( L^2_{ \rho_{k,p} }\cap H^1 )  (\CC / \Gamma; \CC^4 )  \to L^2_{ \rho_{k,p} } (\CC / \Gamma; \CC^4 )  ,\]
with similarly defined $ H_{ ( 1,0)} $ and $ H_{ ( 0,1)} $.

We now consider the case of $ \alpha = 0 $ and analyse 
$ \ker_{L^2 ( \CC/\Gamma ) }  H ( 0 ) $ decomposed into the corresponding representations:
\begin{equation*}
\label{eq:nullspace}
 \ker_{L^2 ( \CC/\Gamma ) } H( 0 ) = \{  \mathbf u = \mathbf e_j, \ j = 1, \dots , 4 \}, 
 \end{equation*}
where the $ \mathbf e_j $ form 
the standard basis elements of $ \CC^4 $. 
The action of $ G_3 = \ZZ_3^2 {\rtimes} \ZZ_3 $ is diagonal and, 
with $ \mathbf a = \frac 4 3 \pi ( a_1 i \omega + a_2 i \omega^2 ) $, 
\begin{gather*}  \mathscr L_{\mathbf a } \mathbf e_1 =  \omega^{a_1 + a_2 } \mathbf e_1 , \ \ \ \mathscr L_{\mathbf a } \mathbf e_2 =  \mathbf e_2 ,\ 
\ \ \mathscr L_{\mathbf a } \mathbf e_3 =  \omega^{a_1 + a_2 } \mathbf e_3, \ \ \
\mathscr L_{\mathbf a } \mathbf e_4 =  \mathbf e_4,
\\ \mathscr C  \mathbf e_1 = \mathbf e_1 , \ \ \
\mathscr C  \mathbf e_2 = \mathbf e_2 , \ \ \
\mathscr C  \mathbf e_3 = \bar \omega \mathbf e_3, \ \ \ 
\mathscr C  \mathbf e_4 = \bar \omega \mathbf e_4 .
 \end{gather*}
These observations imply that, with $ L^2_{\rho_{k,p} } := L^2_{\rho_{k,p} } 
 ( \CC/\Gamma; \CC^4 ) $, 
\[ \mathbf e_1 \in L^2_{ \rho_{1,0}} , \ \ \ 
\mathbf e_2 \in L^2_{ \rho_{0,0}} , \ \ \ 
\mathbf e_3 \in L^2_{ \rho_{1,1}}, \ \ \ 
\mathbf e_4 \in L^2_{ \rho_{0,1}} .
 \]
Hence for $ \alpha = 0 $, each of $ H_{0,0} (0) $, $ H_{1,0} (0) $, 
$ H_{0,1} (0) $ and $ H_{1,1} (0) $ has
a simple eigenvalue at $ 0 $. Since $ \mathscr W $ (see \eqref{eq:defW}) commutes with 
the action of $ G_3 $, the spectra of $ H_{k,\ell} ( \alpha ) $ are 
symmetric with respect to $ 0 $, it follows that $ H_{ k,\ell} ( \alpha ) $, 
$ k, \ell$ as above, 
each have an eigenvalue at $ 0 $.

Since 
$  \ker_{L^2 ( \CC/\Gamma ; \CC^4 ) } H ( \alpha ) = 
 \ker_{L^2 ( \CC/\Gamma ; \CC^2 ) } D( \alpha ) \oplus \{ 0_{\CC^2 } \} 
+ \{ 0_{\CC^2 } \} \oplus \ker_{L^2 ( \CC/\Gamma ; \CC^2 ) } D( \alpha )^* , 
$ 
we obtained the following result about a symmetry protected eigenstate at $ 0 $:
\begin{prop}
\label{p:prot}
For all $ \alpha \in \CC $, 
\[ \ker_{ L^2_{\rho_{ 1,0} } ( \CC/\Gamma; \CC^2 ) } D ( \alpha )  \neq
\{ 0 \} . \]
In the notation of \eqref{eq:defE},
$ \ker_{ L^2_{\rho_{ 0,0} } ( \CC/\Gamma; \CC^2 ) } D ( \alpha ) = \mathscr E 
 \ker_{ L^2_{\rho_{ 1,0} } ( \CC/\Gamma; \CC^2 ) } D ( \alpha ) \neq \{ 0 \} $. 
\end{prop}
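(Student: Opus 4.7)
The plan is to establish both statements by computing the Fredholm index of the restriction of $D(\alpha)$ to a suitable representation subspace, and then to transfer the conclusion from $L^2_{\rho_{1,0}}$ to $L^2_{\rho_{0,0}}$ using the involution $\mathscr{E}$. The Fredholm index approach handles complex $\alpha$ uniformly, which would be awkward via Kato analytic perturbation theory alone, since for non-real $\alpha$ the Hamiltonian $H_{1,0}(\alpha)$ is no longer self-adjoint and one cannot simply argue via a self-adjoint simple eigenvalue at $0$ that is forced to remain at $0$ by the chiral symmetry $\mathscr{W}$ (although that symmetry can be used as motivation).

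First I would verify that $D(\alpha)$ induces a bounded operator
\[
 D(\alpha)\colon H^1_{\rho_{1,0}}(\CC/\Gamma;\CC^2) \longrightarrow L^2_{\rho_{1,2}}(\CC/\Gamma;\CC^2)
\]
depending holomorphically on $\alpha \in \CC$. The invariance of the $k$-label under $\mathscr{L}_{\mathbf a}$ follows directly from \eqref{eq:defLa0}. The shift of the $p$-label from $0$ to $-1 \equiv 2 \bmod 3$ is a consequence of the identity $D(\alpha) L_\omega = \bar\omega L_\omega D(\alpha)$ derived in \S\ref{s:sH} from the rotational symmetry of $U$ (the $\CC^2$-restriction of $\mathscr{C}$ is just $L_\omega$ up to the diagonal twist built into the $\CC^4$ formulation). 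Ellipticity of $D(\alpha)$ (principal symbol $2D_{\bar z}\otimes I_{\CC^2}$) makes this operator Fredholm, and since the index is locally constant under holomorphic deformation of Fredholm operators, it is constant on the connected set $\CC$.

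Next I would compute the index at $\alpha = 0$, where $D(0) = 2 D_{\bar z}\otimes I$ acts diagonally. An element $(u_1,u_2)\in \ker D(0)|_{L^2_{\rho_{1,0}}}$ must consist of two holomorphic functions on $\CC$ satisfying $u_1(z+\mathbf a)=u_1(z)$, $u_2(z+\mathbf a)=\omega^{a_1+a_2}u_2(z)$ for $\mathbf a\in\Gamma_3$, and $u_j(\omega z)=u_j(z)$. Periodicity forces $u_1$ to be constant, while the nontrivial twist kills $u_2$; hence the kernel is one-dimensional, spanned by $(1,0)$. The analogous analysis of $\ker D(0)^*|_{L^2_{\rho_{1,2}}}$ asks for antiholomorphic pairs satisfying $v_j(\omega z) = \bar\omega^2 v_j(z)$, which forces $v_1=v_2=0$, so this cokernel is trivial. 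The index therefore equals $1$, and by the previous paragraph $\dim \ker D(\alpha)|_{L^2_{\rho_{1,0}}}\ge 1$ for every $\alpha\in\CC$. For the second statement, $\mathscr{E}$ commutes with $D(\alpha)$, satisfies $\mathscr{E}^2 = -I$, and a short computation shows that $\mathscr{E}$ maps $L^2_{\rho_{1,0}}$ isomorphically onto $L^2_{\rho_{0,0}}$: the component-swap converts the weight $\omega^{a_1+a_2}$ to $1$ (moving $k=1$ to $k=0$), and $\mathscr{E}$ commutes with $L_\omega$ so $p=0$ is preserved. Thus $\mathscr{E}$ restricts to an isomorphism $\ker D(\alpha)|_{L^2_{\rho_{1,0}}} \to \ker D(\alpha)|_{L^2_{\rho_{0,0}}}$.

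The main obstacle is the bookkeeping of representation labels: one must carefully distinguish the action of the group $G_3$ on $\CC^2$-valued functions (where $\mathscr C$ acts simply by $L_\omega$) from its twisted action on $\CC^4$-valued functions, keep track of the direction and magnitude of the $p$-shift induced by $D(\alpha)$, and verify that the monodromy conditions one writes down for $L^2_{\rho_{1,0}}$ and $L^2_{\rho_{1,2}}$ really leave room for exactly the spaces claimed. Once these identifications are secured, the Fredholm index and the $\mathscr{E}$-intertwining calculations themselves are elementary.
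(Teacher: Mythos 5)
Your proposal is correct, and it is a recognizable reformulation of the paper's argument rather than a wholly different one. The paper works with the self-adjoint $H(\alpha)$: it computes $\ker H(0)$, identifies $\mathbf e_1\in L^2_{\rho_{1,0}}$ (and $\mathbf e_2,\mathbf e_3,\mathbf e_4$ in the other three representations), then invokes the chiral symmetry $\mathscr W H(\alpha)\mathscr W^*=-H(\alpha)$ to argue that the simple zero eigenvalue of each $H_{k,\ell}(0)$ cannot leave $0$ as $\alpha$ varies, and finally decomposes $\ker H(\alpha)$ into $\ker D(\alpha)\oplus\ker D(\alpha)^*$. You instead work directly with $D(\alpha)\colon H^1_{\rho_{1,0}}\to L^2_{\rho_{1,2}}$ and compute the Fredholm index at $\alpha=0$. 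These are two faces of the same invariant: the $\mathscr W$-grading splits $L^2_{\rho_{1,0}}(\CC/\Gamma;\CC^4)$ into $L^2_{\rho_{1,0}}(\CC/\Gamma;\CC^2)\oplus L^2_{\rho_{1,2}}(\CC/\Gamma;\CC^2)$, and the supersymmetric index $\dim\ker D|_{\rho_{1,0}}-\dim\ker D^*|_{\rho_{1,2}}$ that the chiral symmetry protects is precisely the Fredholm index you compute. The paper's phrasing (``since the spectra are symmetric\ldots it follows that\ldots'') is terse, and as you observe it does not literally apply for $\alpha\in\CC\setminus\RR$, where $H_{1,0}(\alpha)$ is not self-adjoint and ``simple eigenvalue pinned at $0$'' is not a direct consequence of spectral symmetry; your explicit index computation is the clean way to close that gap, and it is almost certainly what the authors had in mind. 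Your label bookkeeping ($D(\alpha)L_\omega=\bar\omega L_\omega D(\alpha)$ giving the shift $p\colon0\mapsto2$; $\mathscr E$ sending $\rho_{1,0}$ to $\rho_{0,0}$) and the constant/antiholomorphic kernel--cokernel count at $\alpha=0$ are all correct, and the use of $\mathscr E D(\alpha)=D(\alpha)\mathscr E$ for the second claim matches the paper's. In short: same skeleton, but you have made the deformation step rigorous via the index rather than leaving it to the chiral-symmetry heuristic, which is a genuine (if modest) improvement in exposition.
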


\subsection{Floquet theory}
\label{s:flo}
Since the statement \eqref{eq:defmal} is interpreted as 
having a ``flat Floquet band'' at zero energy, we conclude this section with a brief account of Floquet theory. 

In principle, we could use the unitarity dual of $ G$ defined in 
\eqref{eq:defG} (and described similarly to the unitary dual of $ G_3 $
in \S \ref{s:rep}) and decompose $ L^2 ( \CC ) $ into
irreducible representations under the action of $ G $.
However, let us take the standard Floquet theory approach based on invariance
under $ \Gamma $ (see \eqref{eq:Gam})
\begin{gather*}  \Gamma \ni \mathbf a :  \psi \longmapsto \mathscr L_{\mathbf a } \mathbf \psi ( z)  = \mathbf \psi ( z + \mathbf a ) , \ \ 
\psi \in L^2 ( \CC ; \CC^2 ) , \ \ 
D ( \alpha )  \mathscr
L_{\mathbf a } = \mathscr L_{\mathbf a } D ( \alpha ) . 
 \end{gather*}
(This definition agrees with \eqref{eq:defLa0} when $ \mathbf a \in \Gamma$.) 

 We start by recording basic properties of the operator $ D ( \alpha ) $.
We first observe that 
\begin{equation}
\label{eq:spec0}
\Spec_{{ L^2 ( \CC/\Gamma ) } } D ( 0 ) = \Gamma^*, \ \ \ D ( 0 )  e_{\mathbf k } \mathbf e_j = 
\mathbf k e_{\mathbf k } \mathbf e_j , \ \ 
e_{\mathbf k} ( z) := 
e^{ \frac i 2 ( 
\bar{\mathbf k } z + \mathbf k \bar z ) }, \  \   \mathbf k \in \Gamma^*, 
\ \ j = 1, 2 , 
\end{equation}
where the exponentials $ e_\mathbf k / \vol( \CC/\Gamma)^{\frac12} $ form an 
orthonormal basis of $ L^2 ( \CC/\Gamma ) $ and $ \mathbf e_j $ are
the standard basis of $ \CC^2 $.

We then have the following simple  
\begin{prop}
\label{p:slight}
The family $ \CC \ni \alpha \mapsto D ( \alpha ) : H^1 ( \CC/\Gamma ; 
\CC^2 ) \to L^2 ( \CC/\Gamma; \CC^2 ) $
is a holomorphic family of elliptic Fredholm operators of index $ 0 $,
and for all $ \alpha $, { the spectrum of $D(\alpha)$ is $\Gamma^*$-periodic: }
\begin{equation}
\label{eq:per}
\Spec_{ L^2 ( \CC/ \Gamma ) } D ( \alpha ) = 
\Spec_{ L^2 ( \CC/ \Gamma ) } D ( \alpha ) + \mathbf k , \ \ 
\mathbf k \in \Gamma^* . 
\end{equation}
 \end{prop}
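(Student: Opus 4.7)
The plan is to verify the three assertions separately: holomorphy, ellipticity with Fredholm index zero, and $\Gamma^*$-periodicity of the spectrum. Holomorphy will be essentially free, ellipticity will follow by inspecting the principal symbol, and periodicity will come from a gauge transformation by the $\Gamma$-periodic exponentials $e_{\mathbf k_0}$ with $\mathbf k_0\in\Gamma^*$.

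First I would note that $\alpha\mapsto D(\alpha)$ is affine in $\alpha$, so holomorphy of $\CC\ni\alpha\mapsto D(\alpha)\in\mathcal L(H^1,L^2)$ is automatic. For ellipticity, the principal symbol of $D(\alpha)$ is a constant multiple of $\operatorname{diag}(\zeta,\zeta)$ with $\zeta = \xi_1 + i\xi_2$; this is invertible for $\zeta\neq 0$, so $D(\alpha)$ is an elliptic first-order system on the closed manifold $\CC/\Gamma$ and is therefore Fredholm as a map $H^1\to L^2$ for every $\alpha$. Since the Fredholm index is invariant under norm-continuous deformation, $\operatorname{ind}D(\alpha)=\operatorname{ind}D(0)$. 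At $\alpha=0$ the operator decouples into $\operatorname{diag}(2D_{\bar z},2D_{\bar z})$; the kernel of $2D_{\bar z}$ on $L^2(\CC/\Gamma)$ consists of holomorphic functions on the torus, i.e., the constants, and likewise for the formal adjoint $2D_z$, so $\operatorname{ind}(2D_{\bar z})=0$ and therefore $\operatorname{ind}D(0)=0$.

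For the periodicity \eqref{eq:per}, fix $\mathbf k_0\in\Gamma^*$ and consider multiplication by the plane wave $e_{\mathbf k_0}(z)=e^{\frac{i}{2}(\bar{\mathbf k}_0 z+\mathbf k_0\bar z)}$ introduced in \eqref{eq:spec0}. Because $\mathbf k_0\in\Gamma^*$, the function $e_{\mathbf k_0}$ is $\Gamma$-periodic and hence a unitary operator on $L^2(\CC/\Gamma;\CC^2)$. From \eqref{eq:spec0} we already have $2D_{\bar z}e_{\mathbf k_0}=\mathbf k_0 e_{\mathbf k_0}$, and multiplication by $\alpha U(\pm z)$ commutes with multiplication by $e_{\mathbf k_0}$; a short Leibniz-rule calculation then yields
\[ e_{\mathbf k_0}^{-1}\,D(\alpha)\,e_{\mathbf k_0} \;=\; D(\alpha) + \mathbf k_0 . \]
Thus $D(\alpha)$ and $D(\alpha)+\mathbf k_0$ are unitarily equivalent, their spectra coincide, and \eqref{eq:per} follows.

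There is no serious obstacle here: the ellipticity and Fredholm statements are standard facts about first-order elliptic systems on a closed surface, and the periodicity is the usual Bloch-wave gauge transformation. The only point one has to be careful with is the factor conventions, making sure that $e_{\mathbf k_0}$ is $\Gamma$-periodic precisely when $\mathbf k_0\in\Gamma^*$, which is exactly how the dual lattice is set up in \eqref{eq:Gam}.
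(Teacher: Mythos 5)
Your proof is correct and follows essentially the same route as the paper: ellipticity gives the Fredholm property, the index-zero statement is reduced to $\alpha=0$, and the $\Gamma^*$-periodicity of the spectrum comes from the gauge transformation $\mathbf u\mapsto e_{\mathbf k}\mathbf u$ with $\mathbf k\in\Gamma^*$, which the paper states in the equivalent form that $(D(\alpha)-\lambda)\mathbf u=0$ implies $(D(\alpha)-(\lambda+\mathbf k))(e_{\mathbf k}\mathbf u)=0$. The only (cosmetic) difference is in how index zero is obtained at $\alpha=0$: you compute $\dim\ker(2D_{\bar z})=\dim\ker(2D_z)=1$ directly on the torus, whereas the paper invokes the invertibility of $D(0)-\mathbf k$ for $\mathbf k\notin\Gamma^*$ given by \eqref{eq:spec0}; both are immediate and equivalent in content.
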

\begin{proof}  
Since $ D_{\bar z } $ is an elliptic operator in dimension 2,
existence of parametrices (see for instance \cite[Proposition E.32]{res}) immediately shows the Fredholm property (see for instance 
\cite[\S C.2]{res} for that and other basic properties of Fredholm operators). In view of \eqref{eq:spec0},  $  D ( 0 ) - \mathbf k $
is invertible for  $ \mathbf k \notin \Gamma^* $ and hence
 $ D ( 0 ) : H^1 ( \CC/\Gamma ) 
\to L^2 ( \CC/\Gamma ) $ is an operator of
index $ 0 $. The same is true for the Fredholm family
$ D ( \alpha ) $.
To see \eqref{eq:per}, note that 
if $ ( D ( \alpha ) - \lambda ) \mathbf u = 0 $ 
then $ ( D ( \alpha ) - ( \lambda + \mathbf k ) ) ( e_{\mathbf k} 
\mathbf u)  = 0 $, $ \mathbf k \in \Gamma^* $.
\end{proof}

For $ \mathbf k \in \CC/ \Gamma^* $ (or simply $ \mathbf k \in \CC $)
we defined the Floquet boundary condition as 
\[  \mathbf \psi  ( z + \mathbf a ) = 
e^{ - \frac i2 ( \mathbf a \bar{ \mathbf k} + 
\bar{\mathbf a } \mathbf k ) } \psi ( z ) , \ \
\ \psi  \in L^2_{\rm{loc}} ( \CC ; \CC^2 ) , \ \ \ 
\mathbf a \in \Gamma . \]
This means that
\[  \mathbf v ( z ) := e^{ \frac i 2 ( z \bar {\mathbf k}
+ \bar z \mathbf k ) } \psi ( z ) \]
satisfies 
\[  \mathbf v( z + \mathbf a ) = \mathbf v (z), \ \ \mathbf a \in \Gamma, \ \    e^{  \frac i 2 ( z \bar {\mathbf k}
+ \bar z \mathbf k ) }   D( \alpha ) \psi ( z ) =  ( D ( \alpha ) - \mathbf k ) 
\mathbf v ( z ) . \]
It follows that
\begin{equation}
\label{eq:Dba}   e^{  \frac i 2 ( z \bar {\mathbf k}
+ \bar z \mathbf k ) }  H ( \alpha ) e^{  \frac i 2 ( z \bar {\mathbf k}
+ \bar z \mathbf k ) } 
=  H_{\mathbf k } ( \alpha ) := \begin{pmatrix}  0 & D ( \alpha )^* - \bar {\mathbf k} \\
D( \alpha ) - {\mathbf k } & 0 \end{pmatrix} ,
\end{equation}
where $ H_{\mathbf k} ( \alpha ) $ is the operator in \eqref{eq:defmal}.

We now proceed with standard Floquet theory and introduce the 
unitary transformation
\[  \mathscr U : L^2 ( \CC ; \CC^4 ) \to L^2 ( \CC/ \Gamma^* ; L^2 ( \CC/ \Gamma ) ), \ \
\mathscr U  \mathbf u ( \mathbf k  , z ) := \sum_{\mathbf a \in \Gamma } 
u ( z + \mathbf a ) e^{ \frac i 2 (  ( z + \mathbf a ) \bar{\mathbf k } 
+ ( \bar z + \bar { \mathbf a } ) \mathbf k) } . \]
We then have
\[  \mathscr U H \mathscr U^* \mathbf v ( z , \mathbf k ) =  
H_{\mathbf k } \mathbf v ( z , \mathbf k ) , \ \ \ 
  \mathbf v ( \bullet, \mathbf k ) \in C^\infty ( \CC / \Gamma ; \CC^4 ) , \]
that is, for a fixed $ \mathbf k \in \CC / \Gamma^* $, 
$ \mathscr U H \mathscr U^* $ acts on {\em periodic functions with respect to $ \Gamma $} 
as the operator in \eqref{eq:Dba}. 
For each $ \mathbf k $, the operator $ H_{\mathbf k} ( \alpha ) $
is an elliptic differential system (see Proposition \ref{p:slight} above) and hence it has a discrete spectrum that then describes the spectrum of $ H( \alpha ) $ on $ L^2 ( \CC ) $:
\begin{equation}
\label{eq:specH} 
\begin{gathered}    \Spec_{L^2 ( \CC) }  ( H ( \alpha ) ) = \bigcup_{ \mathbf k \in \CC / \Gamma^* }
\Spec_{L^2 ( \CC/ \Gamma )}  ( H_{\mathbf k } ( \alpha ) ) , \\ 
\Spec_{L^2 ( \CC/ \Gamma )} ( H_{\mathbf k } ( \alpha ) ) = \{ \pm E_j 
( {\mathbf k } , \alpha ) \}_{ j=0}^\infty, \ \  E_{ j+1 } ( \mathbf k, \alpha ) \geq E_j ( \mathbf k, \alpha ) 
\geq 0 . \end{gathered} \end{equation}
To see the last statement we recall that 
\[ (\lambda - \mathscr A )^{-1} = \begin{pmatrix}  ( \lambda^2 - A^* A )^{-1} & 0 
\\ 0 & ( \lambda^2 - A A^* )^{-1} \end{pmatrix} \begin{pmatrix}
 \lambda & A^* \\
 A & \lambda \end{pmatrix}, \ \  \mathscr A := \begin{pmatrix}
 0 & A^* \\ A & 0 \end{pmatrix} . \]
Hence, the non-zero eigenvalues of $ H_{\mathbf k } $ are given by 
$ \pm $ the  non-zero singular values
of $ D( \alpha) + \mathbf k $ (that is, the eigenvalues of 
$ [ ( D ( \alpha ) + \mathbf k )^* ( D( \alpha ) + \mathbf k) ]^{\frac12} $), included according to their multiplicities). 
We need to check that the eigenvalue $ 0 $ of 
$ ( D ( \alpha ) + \mathbf k )^* ( D( \alpha ) + \mathbf k)$ has the 
same multiplicity as the zero eigenvalue of
$ ( D ( \alpha ) + \mathbf k ) ( D( \alpha ) + \mathbf k)^* $, 
so that eigenvalues $ E_j (  \mathbf k, \alpha ) = 0$ are included
exactly twice (for $ \pm $).

 For that we use Proposition \ref{p:slight}, which also shows that 
 $ D( \alpha ) + \mathbf k $ is a Fredholm operator of order zero,
 and hence 
 \[ \dim \ker_{{ L^2 ( \CC/\Gamma ; \CC^2 ) }} ( D ( \alpha)  + \mathbf k ) = \dim \ker_{
 { L^2 ( \CC/\Gamma ; \CC^2 ) }} ( D ( \alpha )^*  + \bar{\mathbf k} ) . \]
 In \eqref{eq:specH} we abuse notation by counting
$ \pm 0 $ twice in the spectrum of $ H_{\mathbf k } ( \alpha ) $.

From this discussion we can re-interpret \eqref{eq:defmal} as the existence of a flat band:

\begin{prop}
\label{p:flat}
In the notation of \eqref{eq:defmal} and \eqref{eq:specH}
\begin{equation}
\label{eq:flat} 0 \in \bigcap_{ \mathbf k \in \CC } \Spec_{{ L^2 ( \CC/\Gamma, \CC^4) }}  H_{\mathbf k}( \alpha ) 
\ \Longleftrightarrow \ E_0 ( \mathbf k , \alpha ) = 0 \text{ for all
$ \mathbf k \in \CC/\Gamma^* $.} \end{equation}
\end{prop}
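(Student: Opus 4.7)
The plan is to peel back the definitions and reduce the equivalence to the $\Gamma^*$-periodicity of $\Spec D(\alpha)$ recorded in Proposition \ref{p:slight}.

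First, from the matrix structure of $H_{\mathbf{k}}(\alpha)$ in \eqref{eq:Dba} and the off-diagonal resolvent formula recalled in the paragraph just before the proposition, $0$ is an eigenvalue of $H_{\mathbf{k}}(\alpha)$ on $L^2(\CC/\Gamma;\CC^4)$ if and only if $\ker(D(\alpha)-\mathbf{k})$ or $\ker(D(\alpha)^*-\bar{\mathbf{k}})$ is nontrivial. Since $D(\alpha)-\mathbf{k}$ is Fredholm of index zero (by Proposition \ref{p:slight}), these two conditions coincide, and are both equivalent to $\mathbf{k}\in\Spec_{L^2(\CC/\Gamma)}D(\alpha)$. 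Under the convention $E_0(\mathbf{k},\alpha)=\min_j|E_j(\mathbf{k},\alpha)|$ this yields
\[
E_0(\mathbf{k},\alpha)=0 \ \Longleftrightarrow\ 0\in\Spec_{L^2(\CC/\Gamma;\CC^4)}H_{\mathbf{k}}(\alpha) \ \Longleftrightarrow\ \mathbf{k}\in\Spec_{L^2(\CC/\Gamma)}D(\alpha).
\]

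The forward implication in \eqref{eq:flat} is then immediate: restricting $\mathbf{k}\in\CC$ to representatives of $\CC/\Gamma^*$ gives $E_0(\mathbf{k},\alpha)=0$ throughout $\CC/\Gamma^*$. For the converse, assume $E_0(\mathbf{k},\alpha)=0$ for all $\mathbf{k}\in\CC/\Gamma^*$. By the displayed equivalence, the set $\Spec_{L^2(\CC/\Gamma)}D(\alpha)$ meets every coset of $\Gamma^*$ in $\CC$. Invoking the $\Gamma^*$-periodicity \eqref{eq:per} (whose verification via the twist $\mathbf{u}\mapsto e_{\mathbf{k}'}\mathbf{u}$ for $\mathbf{k}'\in\Gamma^*$ already appears in Proposition \ref{p:slight}) upgrades this to $\Spec_{L^2(\CC/\Gamma)}D(\alpha)=\CC$, so $\mathbf{k}\in\Spec D(\alpha)$ for every $\mathbf{k}\in\CC$, which is precisely the left-hand side of \eqref{eq:flat}.

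There is essentially no obstacle here: the argument is an unpacking of the two equivalent descriptions of $E_0$ combined with the previously established periodicity. The only point requiring a moment's care is the passage from ``$0$ is an eigenvalue of $H_{\mathbf{k}}(\alpha)$'' to ``$\mathbf{k}\in\Spec D(\alpha)$'', which relies on the Fredholm index-zero identity $\dim\ker(D(\alpha)-\mathbf{k})=\dim\ker(D(\alpha)^*-\bar{\mathbf{k}})$ to handle the two off-diagonal kernels uniformly.
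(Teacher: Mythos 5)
Your proposal is correct and follows essentially the same route as the paper, which leaves Proposition \ref{p:flat} as an immediate consequence of the Floquet discussion preceding it: the identification of $0\in\Spec H_{\mathbf k}(\alpha)$ with $\mathbf k\in\Spec D(\alpha)$ via the block structure and the Fredholm index-zero argument, together with the $\Gamma^*$-periodicity \eqref{eq:per}. Your write-up merely makes explicit the step from "all $\mathbf k\in\CC/\Gamma^*$" to "all $\mathbf k\in\CC$", which the paper leaves implicit.
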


\section{Resonant and magic angles}
\label{s:alph}

We now want to obtain a computable condition on $ \alpha $ guaranteeing \eqref{eq:defmal}, that is, the flatness of a band \eqref{eq:flat}.
In view of \eqref{eq:Dba} and \eqref{eq:specH}, \eqref{eq:defmal} is equivalent to 
$ \Spec_{ L^2 ( \CC/\Gamma ) } D ( \alpha ) = \CC $.

\subsection{Spectrum of $ D ( \alpha ) $}

To investigate the spectrum of $ D ( \alpha ) $ we use the operator
$ T_{\mathbf k } $ defined in \eqref{eq:defT}. We note that for 
$ \mathbf k \notin \Gamma^* $, \eqref{eq:spec0} shows that
\begin{equation}
\label{eq:D2T} 
D ( \alpha ) - \mathbf k = ( D ( 0 ) - \mathbf k ) ( I + 
\alpha T_{\mathbf k } ) , \ \ \ D ( 0 ) = 2 D_{\bar z } . 
\end{equation} 
The operator $ T_{\mathbf k } : L^2 ( \CC /\Gamma ; \CC^2 ) 
\to L^2 ( \CC / \Gamma ; \CC^2 ) $ is compact and hence its spectrum 
can only accumulate at $ 0 $. This means that
\begin{equation}
\label{eq:D2Ak}   \Gamma^* \not \ni \mathbf k  \in \Spec_{ L^2 ( \CC/\Gamma) } 
D ( \alpha ) \ \Longleftrightarrow \ \alpha \in \mathcal A_{\mathbf k}, \  \ \ 
\mathcal A_{\mathbf k} := 1/(\Spec_{{ L^2 ( \CC/\Gamma ) }}  ( T_{\mathbf k} ) \setminus \{ 0 \} ), 
\end{equation}
where $ \mathcal A_{\mathbf k } $ is a discrete subset of $ \CC $.

We now have a proposition proving the first part of Theorem \ref{t:spec}.
It also defines the family of functions appearing in Theorem \ref{t:magic}.
\begin{prop}
\label{p:Ak} 
For $ \mathbf k \notin \Gamma^* $, 
the discrete set $ \mathcal A = \mathcal A_{\mathbf k} $ 
is {\em independent} of $ \mathbf k $ and 
\begin{equation}
\label{eq:SpecD}
\Spec_{ L^2 ( \CC / \Gamma ) }  ( D ( \alpha ) ) = \left\{ 
\begin{array}{ll} \Gamma^*, & \alpha \notin \mathcal A; \\
\CC, & \alpha \in \mathcal A . \end{array} \right. 
\end{equation}
Moreover, for all $ \alpha \notin \mathcal A $, 
\begin{equation}
\label{eq:kerDa}
\ker_{ L^2 ( \CC/\Gamma ; \CC^2 ) } D ( \alpha ) = \CC \mathbf u ( \alpha ) \oplus \CC \mathscr E \mathbf u ( \alpha ) , \ \ 
\mathbf u ( \alpha ) \in L^2_{ \rho_{1,0}} ( \CC/\Gamma; \CC^2 ) , \ \
\mathbf u ( 0 ) = \mathbf e_1 , \end{equation}
where $ \mathscr E $ is defined in \eqref{eq:defE}  and $ \mathbf e_1 = 
( 1, 0 )^t $. For $ \alpha \in \RR $,  
$ \mathbf u $ extends to a real analytic family, 
 $\RR \ni \alpha \mapsto \mathbf u ( \alpha ) 
\in \ker_{ L^2_{\rho_{1,0}}  ( \CC/\Gamma ; \CC^2 )} D( \alpha ) $.
\end{prop}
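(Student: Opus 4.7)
The plan is to combine the factorization~\eqref{eq:D2T} with analytic Fredholm theory and a theta-function construction. The identity $D(\alpha) - \mathbf{k} = (D(0) - \mathbf{k})(I + \alpha T_\mathbf{k})$ for $\mathbf{k} \notin \Gamma^*$ reduces non-invertibility of $D(\alpha) - \mathbf{k}$ to that of $I + \alpha T_\mathbf{k}$. Since $T_\mathbf{k}$ is compact and $I$ is invertible at $\alpha = 0$, analytic Fredholm theory in $\alpha$ gives that $\mathcal{A}_\mathbf{k}$ is discrete; the spectral symmetry $\Spec T_\mathbf{k} = -\Spec T_\mathbf{k}$ from the block antidiagonal structure of the potential in~\eqref{eq:defT} reconciles the definition of $\mathcal{A}_\mathbf{k}$ with the condition $-1/\alpha \in \Spec T_\mathbf{k}$.

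The main step is $\mathbf{k}$-independence of $\mathcal{A}_\mathbf{k}$: one shows that $\alpha \in \mathcal{A}_{\mathbf{k}_0}$ for some $\mathbf{k}_0 \notin \Gamma^*$ forces $\Spec D(\alpha) = \CC$. Given a nonzero $\mathbf{v}_{\mathbf{k}_0} \in \ker(D(\alpha) - \mathbf{k}_0)$, its lift $\tilde{\mathbf{v}}(z) := e^{-\frac{i}{2}(z\bar{\mathbf{k}}_0 + \bar z \mathbf{k}_0)} \mathbf{v}_{\mathbf{k}_0}(z)$ belongs to $\ker D(\alpha)$ on $\CC$ with quasi-momentum $-\mathbf{k}_0$. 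Since $D(\alpha)$ is of Cauchy--Riemann type, multiplication by a holomorphic scalar preserves its kernel; taking this scalar to be a ratio of Jacobi theta functions analogous to the factor in~\eqref{eq:recipe} shifts the quasi-momentum to any prescribed $-\mathbf{k}$. The resulting meromorphic poles are absorbed by zeros of $\tilde{\mathbf{v}}$, whose existence is controlled by the theta-function behavior of the holomorphic Wronskian $W(\mathbf{u}(\alpha), \tilde{\mathbf{v}})$ from~\eqref{eq:Wr}. This yields smooth $\Gamma$-periodic eigenfunctions of $D(\alpha) - \mathbf{k}$ at every $\mathbf{k}$, so $\Spec D(\alpha) = \CC$. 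Combined with the protected state from Proposition~\ref{p:prot} and the $\Gamma^*$-periodicity~\eqref{eq:per}, this establishes~\eqref{eq:SpecD}.

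For~\eqref{eq:kerDa} one uses the $G_3$-isotypic decomposition of $L^2(\CC/\Gamma; \CC^2)$ from~\S\ref{s:rep}. Since $D(\alpha)$ commutes with the $G_3$-action and is Fredholm of index zero on each isotypic component, and since at $\alpha = 0$ the kernel $\CC\mathbf{e}_1 \oplus \CC\mathbf{e}_2$ lies in $L^2_{\rho_{1,0}} \oplus L^2_{\rho_{0,0}}$ with one-dimensional contribution in each summand and zero in all other isotypic pieces, conservation of the total algebraic multiplicity at $\mathbf{k} = 0$ under the analytic perturbation in $\alpha$ (which for $\alpha \notin \mathcal{A}$ keeps $\Spec D(\alpha) = \Gamma^*$) together with the protected states $\mathbf{u}(\alpha) \in \ker_{L^2_{\rho_{1,0}}} D(\alpha)$ and $\mathscr{E}\mathbf{u}(\alpha) \in \ker_{L^2_{\rho_{0,0}}} D(\alpha)$ of Proposition~\ref{p:prot} pins the geometric multiplicities at $1$ in each of $\rho_{1,0}, \rho_{0,0}$ and $0$ elsewhere, giving~\eqref{eq:kerDa}.

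For the real-analytic extension of $\mathbf{u}(\alpha)$, a Grushin reduction applied to $D(\alpha) : H^1 \cap L^2_{\rho_{1,0}} \to L^2_{\rho_{1,0}}$ with auxiliary space $\CC\mathbf{e}_1$ is invertible near $\alpha = 0$; the corresponding block of its holomorphic inverse defines an analytic family $\mathbf{u}(\alpha)$ with $\mathbf{u}(0) = \mathbf{e}_1$, extended to $\RR$ by analytic continuation, with the protected state keeping the kernel nontrivial throughout. The main obstacle is the theta-function construction in the $\mathbf{k}$-independence step: it requires careful balancing of zeros and poles of theta function ratios against zeros of $\tilde{\mathbf{v}}$, and leans essentially on the Cauchy--Riemann structure of $D(\alpha)$ that has no analogue for generic first-order elliptic systems.
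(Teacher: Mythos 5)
Your proposal takes a genuinely different route from the paper on the key point --- $\mathbf k$-independence of $\mathcal A_{\mathbf k}$ --- and that step has a real gap. The paper's argument is short and purely spectral: for $\alpha\notin\mathcal A_{\mathbf k}$ the operator $D(\alpha)$ has discrete spectrum, the kernel is a holomorphic family with $\dim\ker D(\alpha)\ge 2$ by the protected state, hence $\dim\ker D(\alpha)=\dim\ker D(0)=2$; combined with continuity of the discrete spectrum and the $\Gamma^*$-periodicity \eqref{eq:per} this forces $\Spec D(\alpha)=\Gamma^*$, which immediately gives $\mathbf k$-independence and \eqref{eq:SpecD}. You instead try to prove the contrapositive constructively: from $\mathbf v_{\mathbf k_0}\in\ker(D(\alpha)-\mathbf k_0)$ you multiply the lift $\tilde{\mathbf v}$ by a theta-function ratio to shift quasi-momentum, claiming the poles are ``absorbed by zeros of $\tilde{\mathbf v}$.'' This is precisely where the argument fails: the theta-ratio $f_{\mathbf k}$ has a simple pole somewhere in the fundamental domain, and there is no a priori information about whether, or where, an \emph{arbitrary} eigenfunction $\tilde{\mathbf v}$ vanishes. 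The analogous construction in Proposition~\ref{p:magic} works only because the protected state $\mathbf u(\alpha)\in L^2_{\rho_{1,0}}$ is forced by symmetry to vanish at the stacking point $z_S$ (see \eqref{eq:stack}), so the pole of $f_{\mathbf k}$ can be placed there. A generic $\tilde{\mathbf v}$ does not carry that $G_3$-symmetry, and the ``theta-function behavior of the Wronskian $W(\mathbf u(\alpha),\tilde{\mathbf v})$'' you invoke is not a substitute: for $\alpha\in\mathcal A$ the analytic continuation of $\mathbf u(\alpha)$ from $\alpha=0$ may not exist, and even if a protected state exists, that Wronskian is $\bar z$-holomorphic quasi-periodic with quasi-momentum $-\mathbf k_0\notin\Gamma^*$ and hence vanishes identically, which tells you nothing about the zero \emph{locations} of $\tilde{\mathbf v}$ that you need to control.

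A second, smaller issue concerns the real-analytic extension of $\mathbf u(\alpha)$ to all of $\RR$. A Grushin reduction at $\alpha=0$ gives analyticity only in a neighbourhood of $\alpha=0$; the assertion that it ``extends to $\RR$ by analytic continuation, with the protected state keeping the kernel nontrivial'' is exactly the global statement that needs proof, and nontriviality of the kernel alone does not rule out a branch point or jump in multiplicity at some $\alpha\in\mathcal A$. The paper sidesteps this by invoking Rellich's theorem \cite[VII.\ Theorem 3.9]{kato} for the self-adjoint holomorphic family $\widetilde H(\alpha)$, which is the standard tool for global real-analyticity of eigenprojections along a real line. Your isotypic argument for \eqref{eq:kerDa}, on the other hand, is essentially the same as the paper's and is fine. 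Your observation that the antidiagonal structure of $T_{\mathbf k}$ gives $\Spec T_{\mathbf k}=-\Spec T_{\mathbf k}$ and reconciles $1/\alpha\in\Spec T_{\mathbf k}$ with $-1/\alpha\in\Spec T_{\mathbf k}$ is a nice point that the paper leaves implicit.
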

\begin{proof}
Suppose $ \alpha \in \CC \setminus \mathcal A_{\mathbf k } $, $ \mathbf k \notin 
\Gamma^* $. Then 
$ ( D ( \alpha ) - \mathbf k )^{-1} : L^2 ( \CC/\Gamma ) \to 
H^1 ( \CC/\Gamma ) \hookrightarrow L^2 ( \CC/\Gamma ) $ is a compact operator
and hence $ D ( \alpha ) $ has discrete spectrum. By Proposition \ref{p:prot}, $ 0  \in \Spec_{{ L^2 ( \CC/\Gamma ) } }(D(\alpha))$ for all $\alpha \in \CC$, and thus together with the periodicity condition \eqref{eq:per} this implies $\Spec_{{ L^2 ( \CC/\Gamma ) }} (D(\alpha))\supset \Gamma^*.$
Recall now that $D(\alpha)$ depends on $\alpha$ holomorphically and $0$ is isolated in the spectrum for $\alpha \notin \mathcal A_{\mathbf k}$. Thus, $ \ker_{ L^2 ( \CC/\Gamma ; \CC^2 ) } D ( \alpha ) $ depends holomorphically on $\alpha \notin \mathcal A_{\bf k}$ \cite[VII. Theorem~$1.7$]{kato} and by Proposition \ref{p:prot} $\operatorname{dim}(\ker_{ L^2 ( \CC/\Gamma ; \CC^2 ) } D ( \alpha )) \ge 2$ for all $\alpha \in \mathbb{C}$, we find
\[\operatorname{dim}(\ker_{ L^2 ( \CC/\Gamma ; \CC^2 ) } D ( \alpha ))=\operatorname{dim}(\ker_{ L^2 ( \CC/\Gamma ; \CC^2 )} D ( 0 ))=2{ \text{ for all } \alpha \notin \mathcal A_{\bf k}}.\] 

The discreteness of the spectrum implies that the spectrum depends continuously on $\alpha$ \cite[II. §6]{kato} for $\alpha \notin \mathcal A_{\mathbf k }$. Since $\operatorname{dim}(\ker_{ L^2 ( \CC/\Gamma ; \CC^2 ) } D ( \alpha ))=2$ for all $\alpha  \notin \mathcal A_{\mathbf k }$ and by periodicity \eqref{eq:per}, this implies that $\Spec_{{ L^2 ( \CC/\Gamma, \CC^2  ) } }(D(\alpha))= \Gamma^*.$

Using \eqref{eq:D2Ak} and that $\Spec_{{  L^2 ( \CC/\Gamma, \CC^2 ) } }(D(\alpha))=\Gamma^*$ for all $\alpha \notin \mathcal A_{\mathbf k}$, it follows that 
\[ \exists \, \mathbf k \notin \Gamma^* \ \text{ such that } \ \alpha \notin \mathcal A_{\mathbf k } \ \Longrightarrow \ \forall 
\, \mathbf p \notin \Gamma^* \ \text{ we have } \ 
\alpha \notin \mathcal A_{\mathbf p } . \] 
This shows independence of $ \mathcal A_{\mathbf k } =: 
\mathcal A $ of $ {\mathbf k}.$

Since 
\[ \CC \ni \alpha \mapsto \widetilde H(\alpha) := \begin{pmatrix} 0 & D(\bar \alpha ) ^*
 \\ D( \alpha ) & 0\end{pmatrix} , \ \     \widetilde H( \alpha )  = 
 H ( \alpha ) , \ \ \alpha \in \RR , \] 
is a holomorphic operator family with compact resolvents, 
self-adjoint for $\alpha\in \RR$, Rellich's theorem \cite[VII. Theorem $3.9$]{kato} implies that all eigenvalues and eigenfunctions of $H(\alpha )
= \widetilde H ( \alpha ) $ { can be chosen to depend real-analytically on $ \alpha \in \RR$.}
If we let $\varphi(\alpha) := ( \mathbf u(\alpha),0,0)^t \in L^2_{\rho_{1,0}}$,  $ \alpha \in \RR \setminus \mathcal A  $,  
then $\varphi(0)= \mathbf e_1 \in \CC^4$ and by the discussion above 
$ \varphi ( \alpha ) $ extends to a real analytic family for all $ \alpha
\in \RR $. 
\end{proof}

The next proposition provides the symmetries of the set $ \mathcal A$. 
\begin{prop}
\label{p:sym}
Suppose that in addition to \eqref{eq:symmU} we have 
$ U ( z ) = \overline{ U ( \bar z  ) }$. 
Then, $  \Spec  D ( \alpha ) = \Spec D ( - \alpha ) = \Spec D ( \bar \alpha ) $ and hence
\begin{equation*}
\label{eq:sym}
\mathcal A = - \mathcal A = \overline {\mathcal A }.  
\end{equation*}
In these statements $\Spec$ can be 
either the spectrum on $ {L^2 ( \CC ) }$, $ \Spec_{L^2 ( \CC ) } $, or
on $ {L^2 ( \CC/\Gamma ) }$, $ \Spec_{L^2 ( \CC /\Gamma) }$.

\end{prop}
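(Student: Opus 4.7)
The plan is to exhibit one unitary and one antiunitary intertwiner that relate $D(\alpha)$ to $D(-\alpha)$ and to $D(\bar\alpha)$ respectively, both defined on $L^2(\CC;\CC^2)$ and on $L^2(\CC/\Gamma;\CC^2)$. For the first, I would set $S:=\operatorname{diag}(1,-1)$ on $\CC^2$; since $S$ commutes with the diagonal $2D_{\bar z}$-part and anticommutes with the off-diagonal multiplication block of $D(\alpha)$, a direct computation gives $SD(\alpha)S^{-1}=D(-\alpha)$, and hence $\Spec D(\alpha)=\Spec D(-\alpha)$ on either space.

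For the reflection $\alpha\mapsto\bar\alpha$, I will use the antiunitary involution $\mathcal T\psi(z):=\overline{\psi(\bar z)}$, acting componentwise. This preserves $L^2(\CC/\Gamma;\CC^2)$ because $\bar\Gamma=\Gamma$, which one checks by observing that $\overline{4\pi i\omega}=-4\pi i\omega^2\in\Gamma$. The core calculation, $\mathcal TD_{\bar z}\mathcal T^{-1}=-D_{\bar z}$, follows from the interplay between complex conjugation and the reflection $z\mapsto\bar z$. Together with the symmetry assumption $U(z)=\overline{U(\bar z)}$ (which, via $z\mapsto -z$, also yields $U(-z)=\overline{U(-\bar z)}$) and the antilinearity rule $\mathcal T\alpha\mathcal T^{-1}=\bar\alpha$, I expect this to produce
\[
\mathcal TD(\alpha)\mathcal T^{-1}=-D(-\bar\alpha).
\]
Because $\mathcal T$ is antiunitary, $\Spec(\mathcal TA\mathcal T^{-1})=\overline{\Spec A}$; combining this with the first step gives $\Spec D(\bar\alpha)=-\overline{\Spec D(\alpha)}$.

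To upgrade this to $\Spec D(\bar\alpha)=\Spec D(\alpha)$, I will use the invariance of the spectrum under $\lambda\mapsto-\bar\lambda$. On $L^2(\CC/\Gamma;\CC^2)$, Proposition~\ref{p:Ak} gives $\Spec D(\alpha)\in\{\Gamma^*,\CC\}$, both of which are invariant: $\bar\omega=\omega^2$ permutes the generators $\omega/\sqrt3$ and $\omega^2/\sqrt3$ of the lattice $\Gamma^*$, while $-\Gamma^*=\Gamma^*$ trivially. On $L^2(\CC)$, the Floquet decomposition together with Proposition~\ref{p:Ak} forces $\Spec_{L^2(\CC)}D(\alpha)=\CC$ in all cases, making the invariance automatic. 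The statements about $\mathcal A$ then follow from the equivalence $\alpha\in\mathcal A\iff\Spec D(\alpha)=\CC$ of Theorem~\ref{t:spec}, with multiplicities preserved since $S$ and $\mathcal T$ conjugate $T_{\mathbf k}$ to $-T_{\mathbf k}$ and to $-T_{-\bar{\mathbf k}}$ respectively, and $\Spec T_{\mathbf k}$ is $\mathbf k$-independent by Proposition~\ref{p:Ak}.

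The main subtlety is the sign bookkeeping in the antiunitary step: $\mathcal T$ introduces a minus sign in $\mathcal TD_{\bar z}\mathcal T^{-1}$, the antilinearity shifts $\alpha$ to $\bar\alpha$, and the resulting identity reads $-D(-\bar\alpha)$ rather than $D(\bar\alpha)$. These extra signs cancel only after combining with the $S$-conjugation and with the lattice symmetry $-\overline{\Gamma^*}=\Gamma^*$.
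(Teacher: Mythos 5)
Your proof is correct, and for the $\bar\alpha$-symmetry it uses exactly the paper's conjugation (your $\mathcal T$ is the paper's $F\mathbf v(z):=\overline{\mathbf v(\bar z)}$, yielding the same relation $FD(\alpha)F^{-1}=-D(-\bar\alpha)$). Where you genuinely diverge, and simplify, is the $-\alpha$-symmetry: the paper uses the \emph{antilinear} involution $Q\mathbf v(z)=\overline{\mathbf v(-z)}$, obtains $D(\alpha)Q=-QD(-\alpha)^*$, passes through $\Spec D(\alpha)=-\Spec D(-\alpha)$, and only then appeals to the dichotomy \eqref{eq:SpecD} to remove the sign. Your $S:=\operatorname{diag}(1,-1)$ is a genuinely unitary, $\mathbf k$-independent, $\Gamma$-periodic intertwiner with $SD(\alpha)S^{-1}=D(-\alpha)$, which gives $\Spec D(\alpha)=\Spec D(-\alpha)$ on both $L^2(\CC)$ and $L^2(\CC/\Gamma)$ directly, with no detour through adjoints or through Proposition \ref{p:Ak}; it also immediately gives $ST_{\mathbf k}S^{-1}=-T_{\mathbf k}$, so $\mathcal A=-\mathcal A$ with multiplicities is automatic. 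Your write-up is also more careful than the paper's on the antiunitary side: the paper mislabels $F$ as ``unitary'' and suppresses the resulting complex conjugation of the spectrum, whereas you track it explicitly and then observe that the dichotomy $\Spec D(\alpha)\in\{\Gamma^*,\CC\}$ (with $-\overline{\Gamma^*}=\Gamma^*$, since $-\bar\omega=-\omega^2$) absorbs both the sign and the conjugation. Two small points worth spelling out if this were to become a formal proof: (i) you should verify that $\mathcal T$ preserves $L^2(\CC/\Gamma;\CC^2)$ via $\bar\Gamma=\Gamma$, which you do; (ii) the claim $\Spec_{L^2(\CC)}D(\alpha)=\CC$ for all $\alpha$ follows from the Floquet decomposition together with \eqref{eq:per} and \eqref{eq:SpecD}, but strictly speaking the Floquet machinery in \S\ref{s:flo} is set up for the self-adjoint $H(\alpha)$, so one should either reduce to $H$ via \eqref{eq:Dba}, or, more simply, note that the $S$- and $\mathcal T$-conjugations already act directly on $L^2(\CC)$ without any need to know the shape of $\Spec_{L^2(\CC)}D(\alpha)$.
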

\begin{proof}
To see the symmetries of the spectrum, we note that since
$ Q \mathbf v ( z ) = \overline{ \mathbf v ( - z ) },$ the anti-linear
involution satisfies
\[
 D ( \alpha) Q \mathbf v = -Q D( - \alpha)^* \mathbf v ,\]
which in turn implies
$ \Spec D ( \alpha ) = - \overline {\Spec } D ( - \alpha )^* = - 
{ \Spec } D ( - \alpha ) $.
But then \eqref{eq:SpecD} shows that $ \Spec  D ( \alpha ) = \Spec D ( - \alpha ) $.

Next we notice that $\overline{U(\bar z)}=U(z)$. If we define the unitary map $F \mathbf v(z):=\overline{ \mathbf v(\bar z)}$,
then we find using
$(D_{\bar z} F \mathbf v)(z)=(D_z \overline{\mathbf v})(\bar z) =- (\overline{D_{\bar z}  \mathbf v})(\bar z)=-(FD_{\bar z} \mathbf v)(z)$
the relation
\[D(\alpha)(F \mathbf v)=-F(D(-\bar \alpha) \mathbf v),\] 
which implies that $\Spec(D(\alpha))=-\Spec(D(-\bar \alpha))=\Spec(D(\bar \alpha)).$
\end{proof}

The description of the kernel of $ D( \alpha ) $ gives us an expression 
for the inverse of $ D ( \alpha ) - \mathbf k $, $ \mathbf k \notin 
\Gamma^* $ and $ \alpha \notin \mathcal A$. We start with 
the following simple
\begin{prop}
\label{p:inverse}
Suppose that $ \mathbf u ( \alpha ) $ is given in \eqref{eq:kerDa} and define a two-by-two matrix 
\begin{equation*}
\label{eq:defVa}   \mathbf V ( \alpha ) := [ \mathbf u ( \alpha ) , \mathscr E \mathbf u ( \alpha ) ] ,   \ \ v ( \alpha ) := \det \mathbf V ( \alpha ) . \end{equation*}
Then $ v( \alpha ) \neq 0 $ and 
$ \mathbf k \notin \Gamma^* $ imply that{, with the cofactor matrix denoted by $\operatorname{adj}$,}
\begin{equation}
\label{eq:inv}
( D ( \alpha ) - \mathbf k )^{-1} = \frac{1}{ v ( \alpha ) }  
 {\rm{adj}} (\mathbf V ( \alpha ))  ( 2 D_{\bar z } - \mathbf k )^{-1}( \mathbf V ( \alpha ) ) . 
\end{equation}
For a fixed $ \mathbf k \notin \Gamma^*$, 
$ \alpha \mapsto ( D ( \alpha ) - \mathbf k )^{-1} $ is a meromorphic 
family of compact operators with poles of finite rank at $ \alpha \in \mathcal A $.
\end{prop}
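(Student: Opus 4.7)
The plan rests on an intertwining identity between $D(\alpha)-\mathbf k$ and the scalar operator $2D_{\bar z}-\mathbf k$, mediated by the matrix $\mathbf V(\alpha)$, combined with the crucial fact that $v(\alpha)$ is a constant as a function of $z$.

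First I would verify that $v(\alpha)$ is $z$-constant. Both columns of $\mathbf V(\alpha)$ lie in $\ker D(\alpha)$: the column $\mathbf u(\alpha)$ by Proposition \ref{p:Ak}, and $\mathscr E\mathbf u(\alpha)$ because $\mathscr E D(\alpha)=D(\alpha)\mathscr E$ by \eqref{eq:defE}. The Wronskian observation directly after \eqref{eq:Wr} then says that $v(\alpha)=\det\mathbf V(\alpha)$ is holomorphic in $z$ and $\Gamma$-periodic, hence constant in $z$. Consequently, whenever $v(\alpha)\neq0$ the matrix $\mathbf V(\alpha)(z)$ is pointwise invertible with inverse $v(\alpha)^{-1}\operatorname{adj}\mathbf V(\alpha)(z)$, and the scalar $v(\alpha)^{-1}$ commutes with every differential operator in sight.

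Next I would establish the key intertwining identity. Writing $D(\alpha)=2D_{\bar z}I_2+\alpha W(z)$ with $W(z)=\begin{pmatrix} 0 & U(z)\\ U(-z) & 0\end{pmatrix}$, the Leibniz rule applied to a smooth column vector $\mathbf g$ gives
\[
D(\alpha)(\mathbf V(\alpha)\mathbf g) = (D(\alpha)\mathbf V(\alpha))\mathbf g + 2\mathbf V(\alpha)D_{\bar z}\mathbf g = \mathbf V(\alpha)\,2D_{\bar z}\mathbf g,
\]
using $D(\alpha)\mathbf V(\alpha)=0$ and that multiplication by $\alpha W$ acts pointwise. Subtracting $\mathbf k\mathbf V(\alpha)\mathbf g$ yields $(D(\alpha)-\mathbf k)\mathbf V(\alpha)=\mathbf V(\alpha)(2D_{\bar z}-\mathbf k)$. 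Solving $(D(\alpha)-\mathbf k)\mathbf w=\mathbf f$ by the ansatz $\mathbf w=\mathbf V(\alpha)\mathbf g$ thus reduces to $\mathbf V(\alpha)(2D_{\bar z}-\mathbf k)\mathbf g=\mathbf f$, i.e.\ $(2D_{\bar z}-\mathbf k)\mathbf g = v(\alpha)^{-1}\operatorname{adj}\mathbf V(\alpha)\mathbf f$. Since $2D_{\bar z}-\mathbf k$ is invertible on $L^2(\CC/\Gamma;\CC^2)$ for $\mathbf k\notin\Gamma^*$ by \eqref{eq:spec0}, inverting it and composing with $\mathbf V(\alpha)$ recovers the representation \eqref{eq:inv}.

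For the meromorphy claim, I would invoke that $D(\alpha)$ is a holomorphic polynomial family of Fredholm operators of index zero (Proposition \ref{p:slight}) and that, as argued in the proof of Proposition \ref{p:Ak}, the selection $\mathbf u(\alpha)$, and hence $\mathbf V(\alpha)$, $\operatorname{adj}\mathbf V(\alpha)$, and $v(\alpha)$, extend to holomorphic families on $\CC\setminus\mathcal A$ by Kato's analytic perturbation theory \cite[VII]{kato}. Formula \eqref{eq:inv} then exhibits $(D(\alpha)-\mathbf k)^{-1}$ as a product of holomorphic operator-valued functions divided by the scalar $v(\alpha)$, so the family is meromorphic in $\alpha$, with poles confined to the zero set of $v(\alpha)$, which coincides with $\mathcal A$; finiteness of rank at each pole follows from compactness of $(D(\alpha)-\mathbf k)^{-1}$ together with standard analytic Fredholm theory. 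The main obstacle is keeping the matrix and operator orderings straight in the intertwining step, but the constancy of $v(\alpha)$ in $z$ makes the remaining algebra transparent.
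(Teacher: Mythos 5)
Your intertwining identity $(D(\alpha)-\mathbf k)\mathbf V(\alpha)=\mathbf V(\alpha)(2D_{\bar z}-\mathbf k)$ is precisely the ``matrix-valued integrating factor'' observation the paper alludes to, and the preliminary observation that $v(\alpha)$ is $z$-independent (holomorphic and $\Gamma$-periodic) is correct. However, note that your derivation yields
$(D(\alpha)-\mathbf k)^{-1}=v(\alpha)^{-1}\,\mathbf V(\alpha)\,(2D_{\bar z}-\mathbf k)^{-1}\,\operatorname{adj}(\mathbf V(\alpha))$,
i.e.\ $\mathbf V$ on the left and $\operatorname{adj}\mathbf V$ on the right, which is not literally the display \eqref{eq:inv}. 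Your version is the correct one (it follows immediately from $\mathbf V(2D_{\bar z}-\mathbf k)\mathbf V^{-1}=D(\alpha)-\mathbf k$), so you should flag the discrepancy rather than assert that you have ``recovered'' \eqref{eq:inv}; as written, \eqref{eq:inv} would require $D(\alpha)\operatorname{adj}(\mathbf V)=\operatorname{adj}(\mathbf V)\,2D_{\bar z}$, which would force $\operatorname{adj}(\mathbf V)W+W\operatorname{adj}(\mathbf V)=0$ (with $W$ the off-diagonal potential matrix), and that is false in general.

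The meromorphy argument is where you diverge from the paper, and here there are real gaps. The paper factors $D(\alpha)-\mathbf k=(D(0)-\mathbf k)(I+\alpha T_{\mathbf k})$ via \eqref{eq:D2T} and applies analytic Fredholm theory to $(I+\alpha T_{\mathbf k})^{-1}$, which directly gives a meromorphic family with finite-rank poles. Your route through the explicit inverse formula has three problems. First, Proposition \ref{p:Ak} only provides holomorphy of $\mathbf u(\alpha)$ (hence $\mathbf V$, $v$) on $\CC\setminus\mathcal A$; to conclude that the singularities at $\mathcal A$ are poles (rather than, say, essential), you would need to extend $\mathbf V$ and $v$ holomorphically across $\mathcal A$, which is not established. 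Second, you assert that the zero set of $v$ ``coincides with $\mathcal A$''; but the inclusion $\{v=0\}\subset\mathcal A$ is exactly Proposition \ref{p:magic}, proved only afterwards and used together with the present proposition to establish Theorem \ref{t:magic} -- invoking it here is circular. Third, you end up appealing to analytic Fredholm theory anyway for finiteness of rank; since you need that machinery regardless, it is both more economical and logically safe to apply it to $I+\alpha T_{\mathbf k}$ from the outset, as the paper does, rather than routing through the explicit formula.
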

\begin{proof}
If $ v ( \alpha ) \neq 0 $, then $ \mathbf V ( \alpha )^{-1} 
= {\rm{adj}} \mathbf V ( \alpha ) / v ( \alpha ) $ and
 \eqref{eq:inv} follows from a simple calculation 
($ \mathbf V ( \alpha ) $ provides a matrix-valued integrating factor).
In view of \eqref{eq:D2T},
\[ ( D ( \alpha ) - \mathbf k )^{-1} = ( I + \alpha T_{\mathbf k})^{-1}
( D( 0 ) - \mathbf k )^{-1} , \]
where, using analytic Fredholm theory (see for instance 
\cite[Theorem C.8]{res}), $ \alpha \mapsto ( I + \alpha T_{\mathbf k} )^{-1} $ is a meromorphic family of operators with poles of finite rank.
\end{proof} 

The proposition shows that $ \alpha \in \mathcal A $ implies that
$ v ( \alpha ) = 0$. To obtain the opposite implication (which then gives
Theorem \ref{t:magic}) we will use the theta function argument from 
\cite{magic}.

\begin{figure}
\begin{center}
\includegraphics[width=7.5cm]{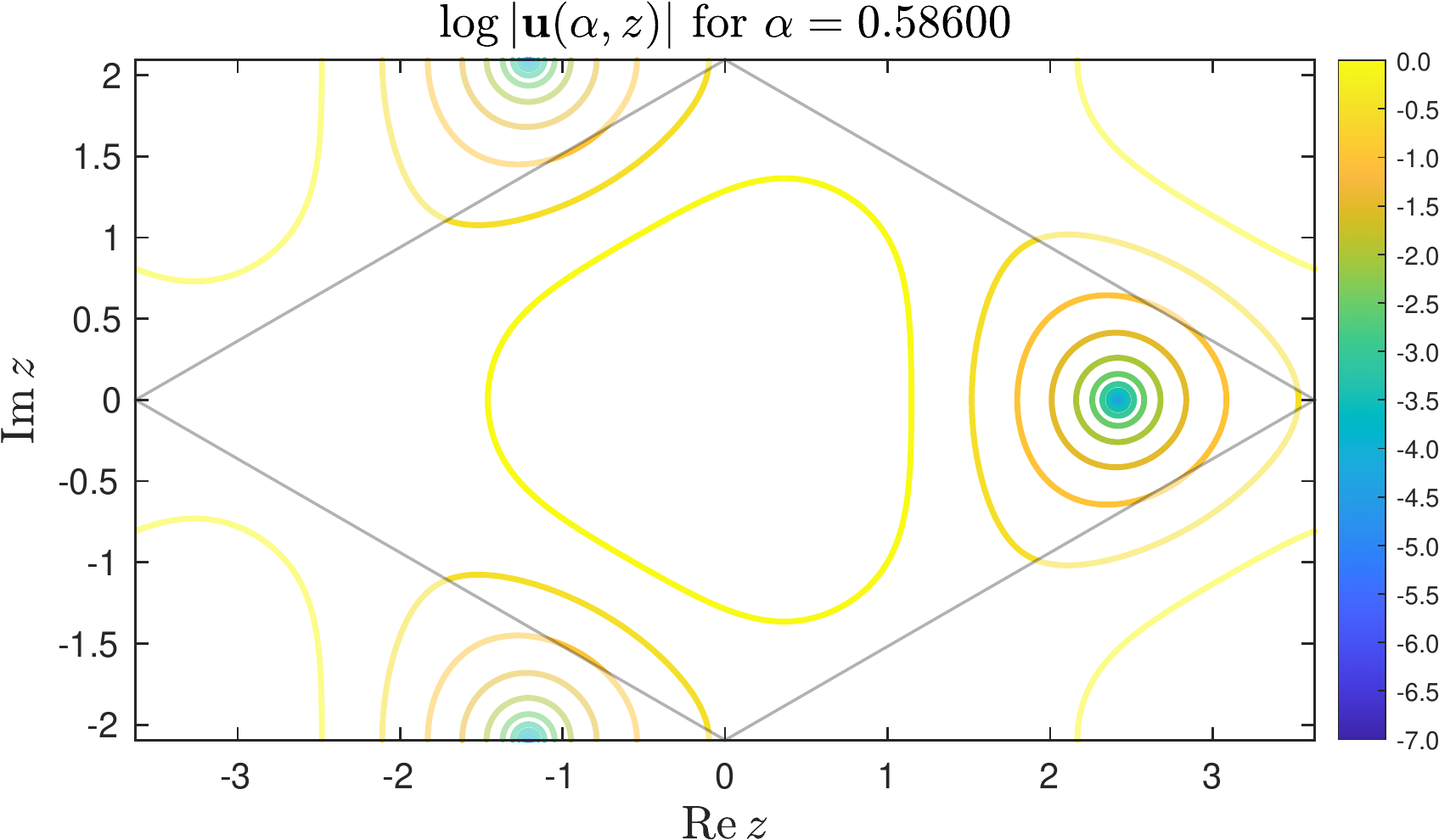}\quad \includegraphics[width=7.5cm]{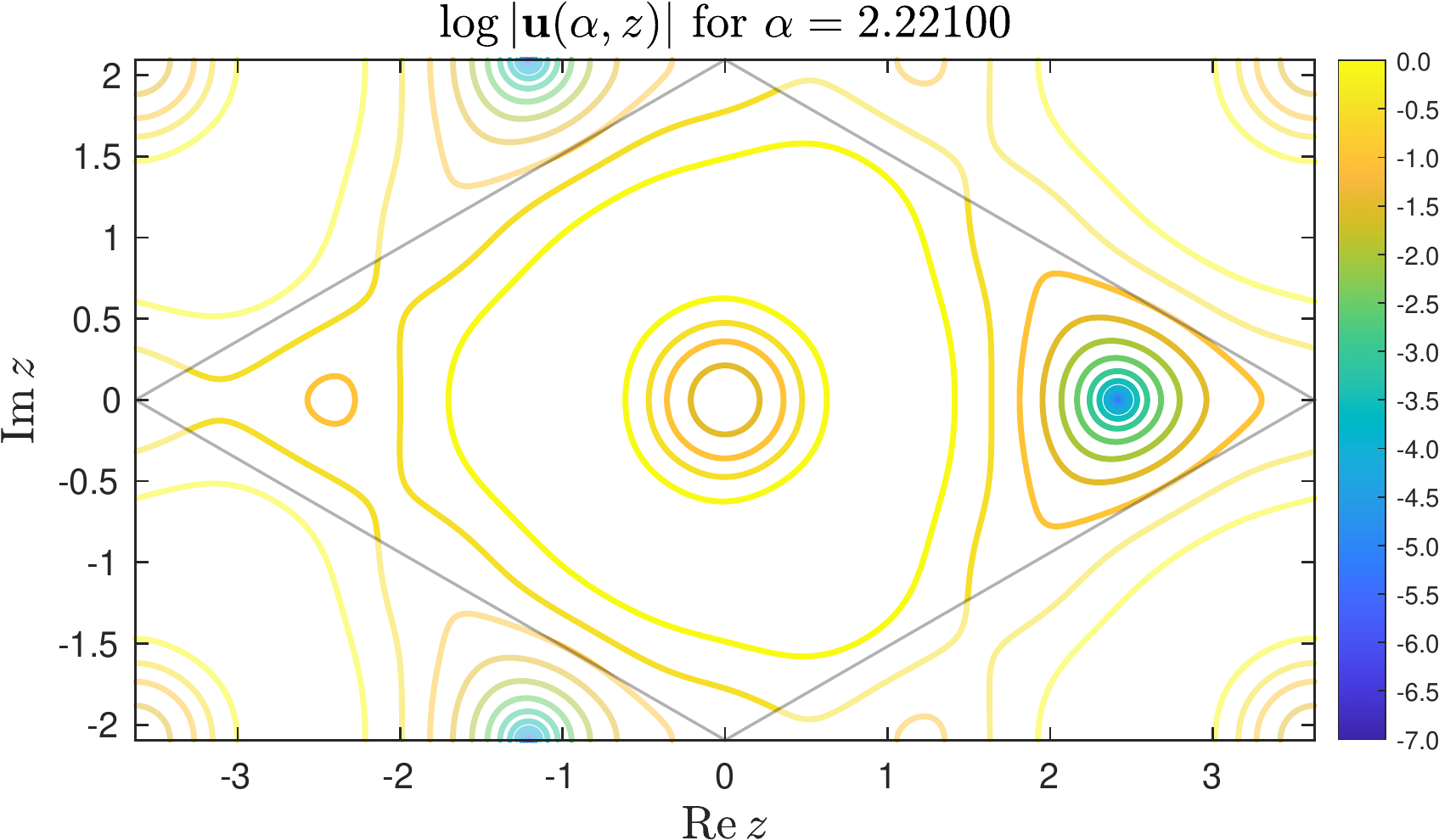}\\[5pt]
\includegraphics[width=7.5cm]{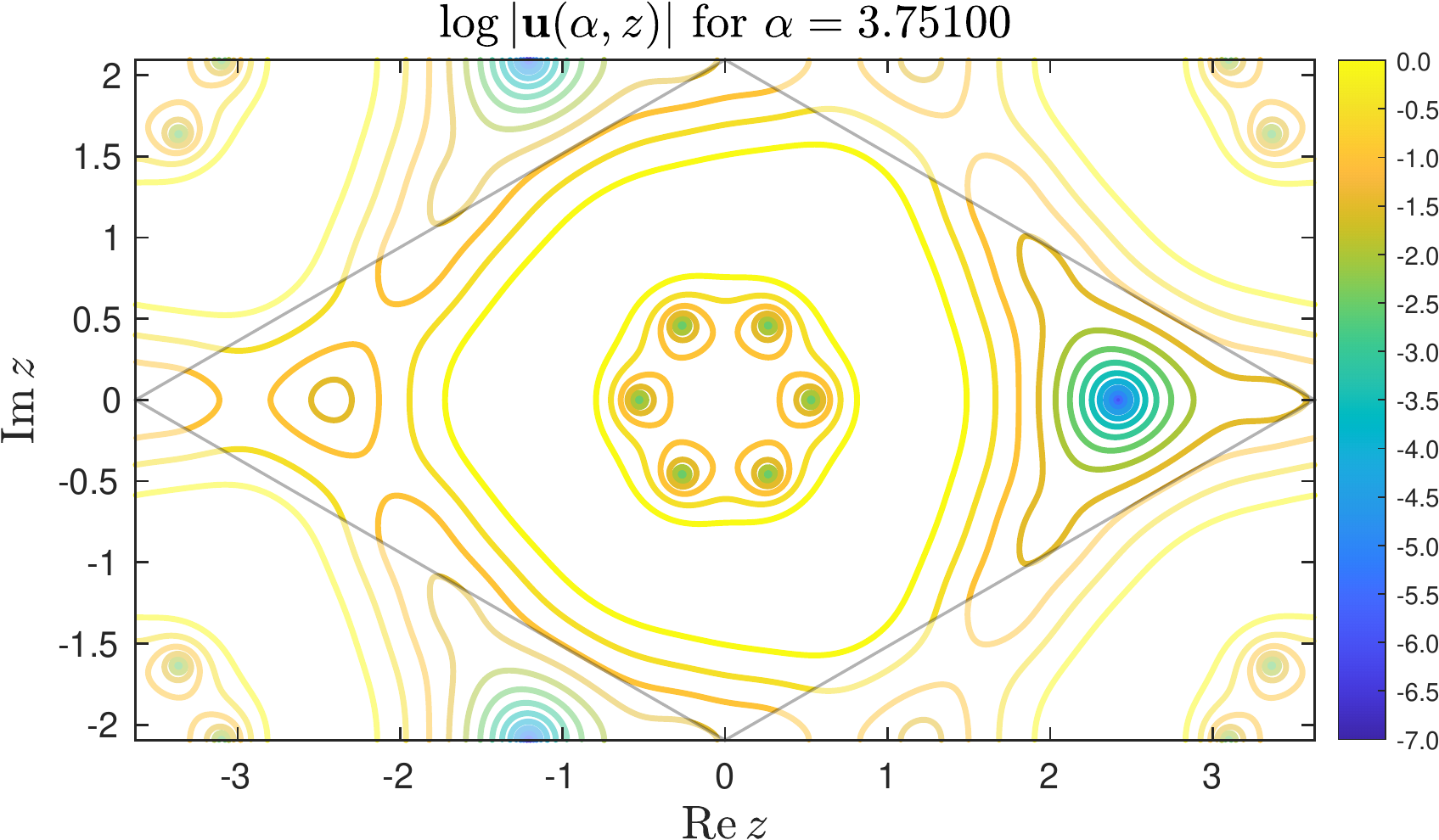}\quad \includegraphics[width=7.5cm]{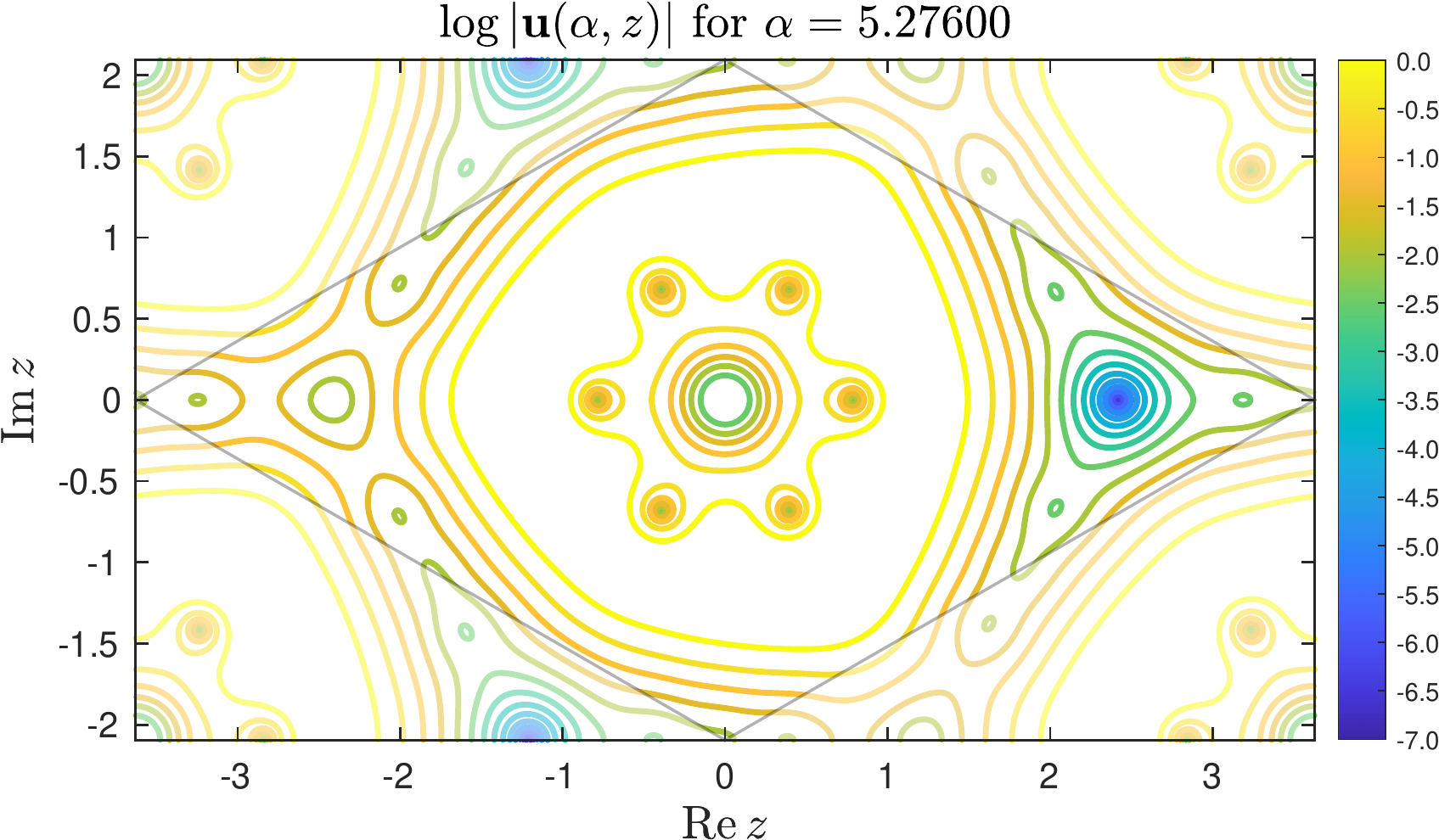}\\[5pt]
\includegraphics[width=7.5cm]{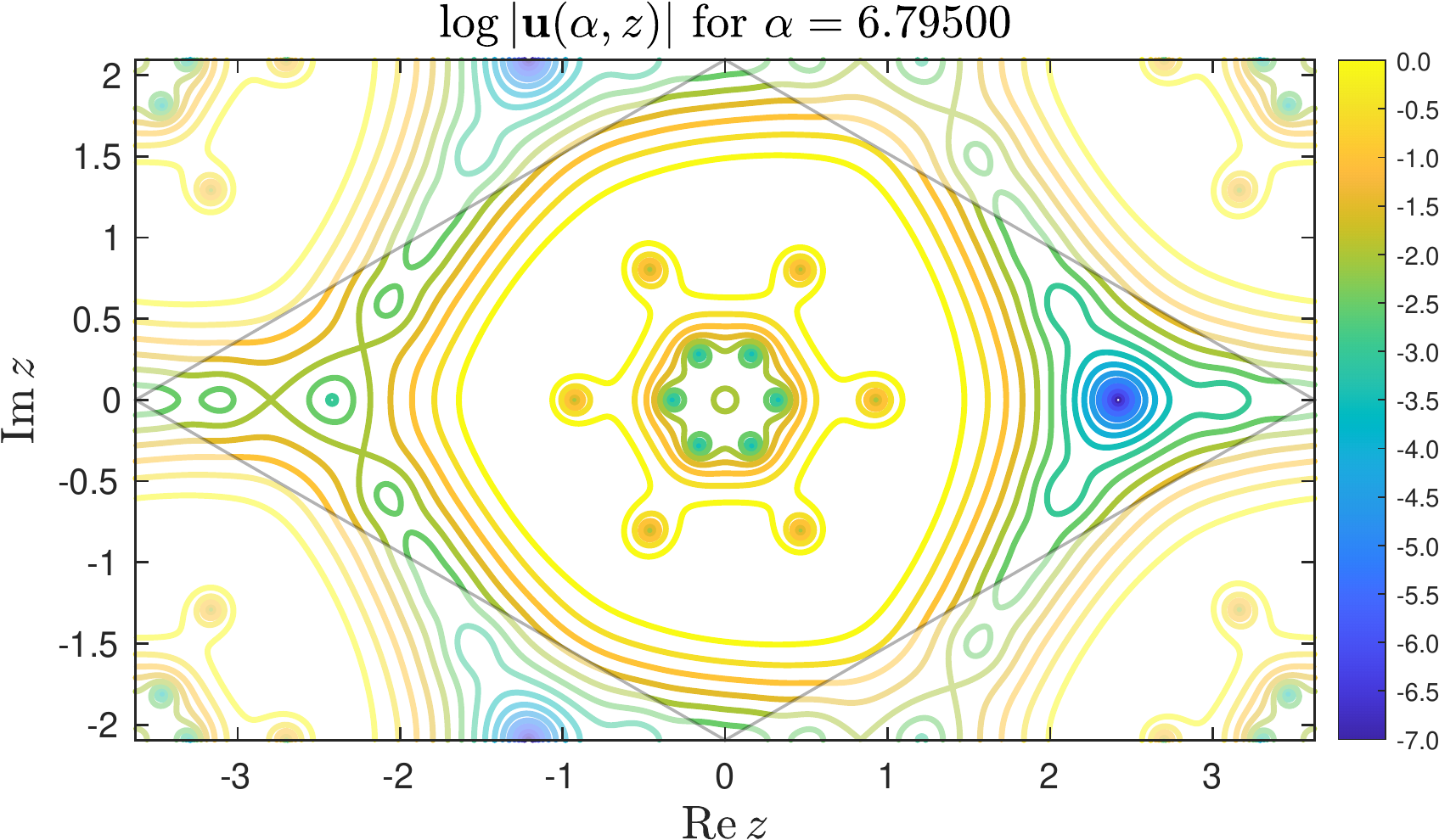}\quad \includegraphics[width=7.5cm]{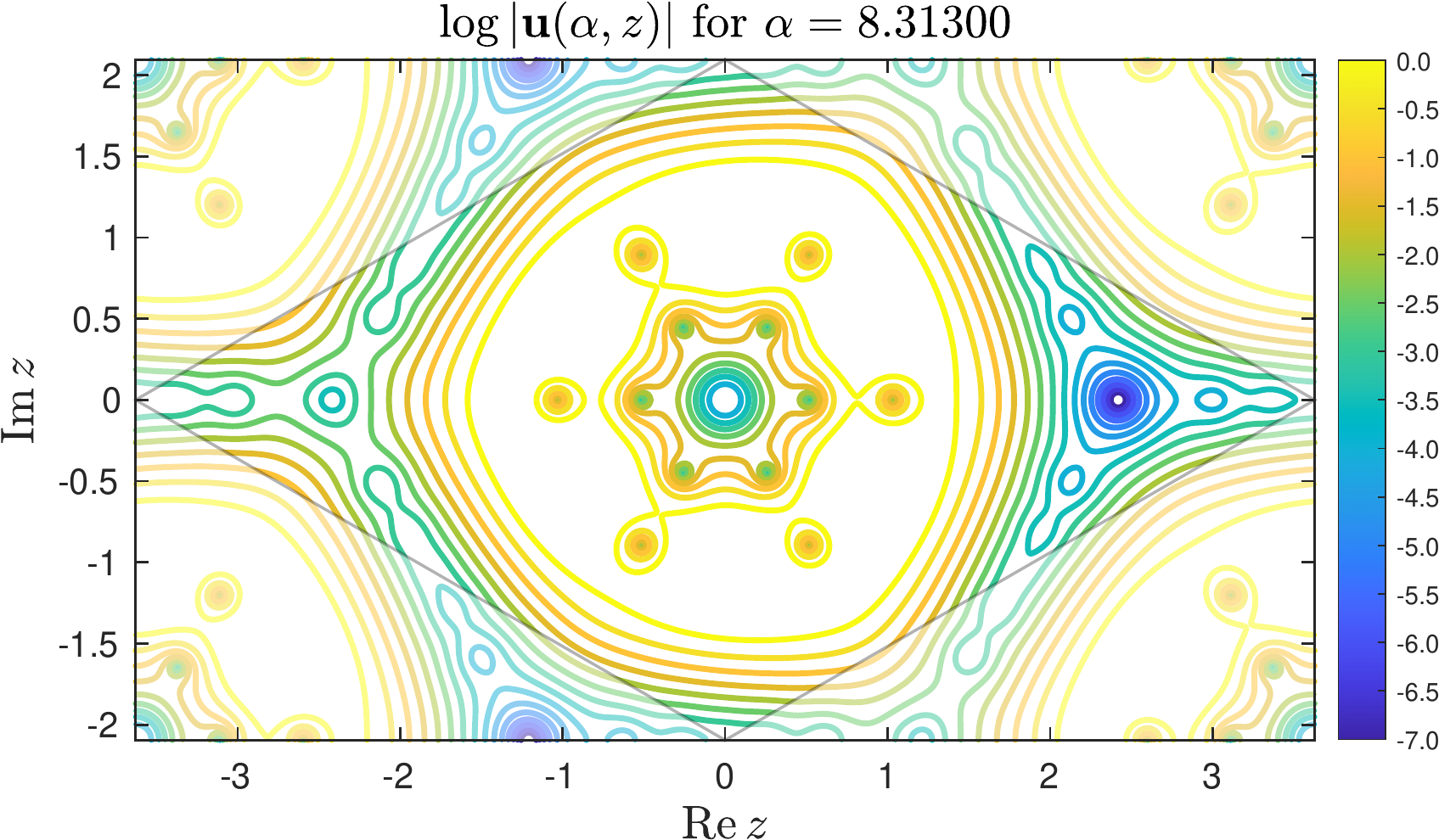}\\[5pt]
\includegraphics[width=7.5cm]{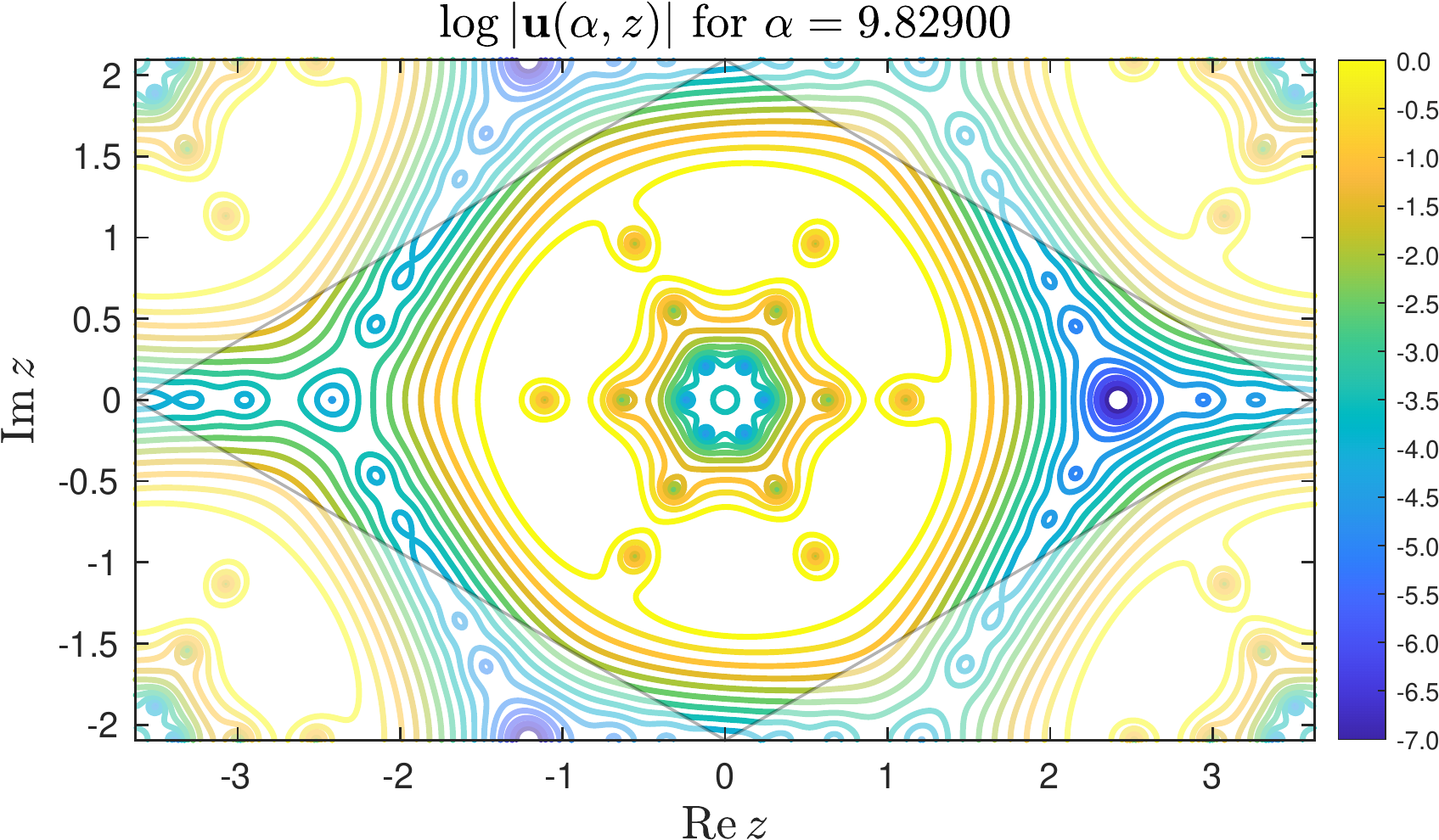}\quad \includegraphics[width=7.5cm]{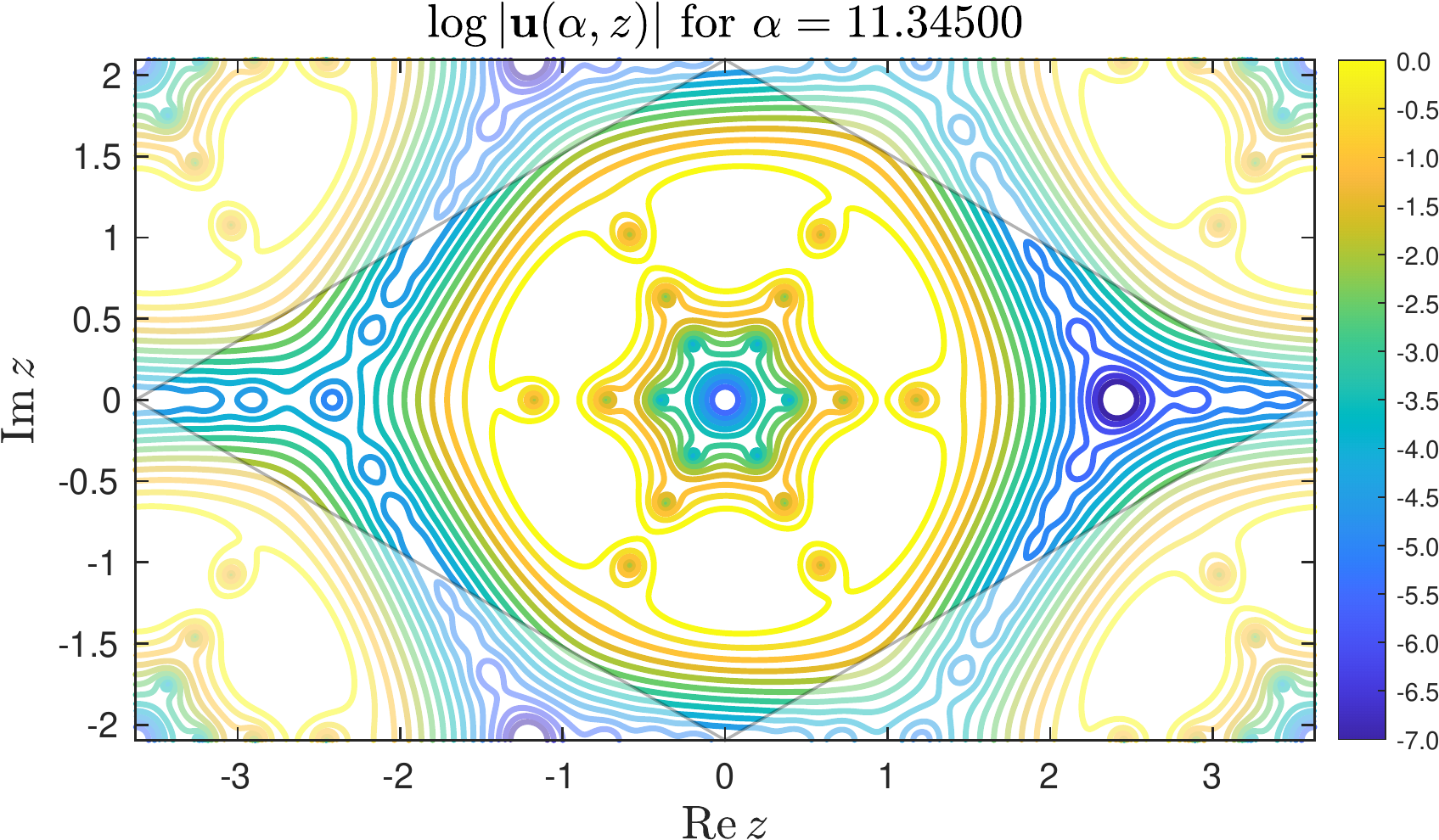}
\end{center}
\caption{\label{f:eig} Plots of $ z \mapsto 
\log| \mathbf u ( \alpha , z ) | $ (in the notation of Proposition \ref{p:Ak}) for $ \alpha $ {\em close} to magic values (due to pseudospectral effects it is difficult to compute the exact eigenfunction 
at a magic angle) showing that the value of $ \mathbf u $ at $ z_S=\frac{4\sqrt{3}}{9} \pi$ is close to $ 0 $.}
\end{figure}

\subsection{A theta function argument}
\label{s:theta}

We first review basic definitions and properties of $ \theta $
functions -- see \cite{tata}. We have
\begin{equation}
\label{eq:theta1}
\begin{gathered} 
\theta_{a,b} ( z| \tau ) := \sum_{ n \in \ZZ } \exp ( 
\pi i ( a  + n)^2 \tau + 2 \pi i ( n + a ) ( z + b ) ) , \ \ \Im \tau >0 , \\
\theta_{a,b} ( z + 1 | \tau ) = e^{ 2 \pi i a} \theta_{a,b} ( z| \tau), \ \ 
\theta_{a,b} ( z + \tau | \tau ) = e^{ -2 \pi i ( z + b ) - \pi i \tau }
\theta_{a,b} ( z| \tau ) , \\
\theta_{ a + 1, b }  ( z| \tau ) = \theta_{a,b} ( z| \tau ) , \ \ \
\theta_{ a, b+1} ( z| \tau ) = e^{ 2 \pi i a } \theta_{a,b} ( z | \tau) .
\end{gathered}
\end{equation}
The (simple) zeros of the (entire) function $ z \mapsto \theta_{a,b} ( z| \tau ) $
are given by 
\begin{equation}
\label{eq:theta2}
z_{ n,m} = ( n - \tfrac12 - a ) \tau + \tfrac12 - b - m .
\end{equation}
%({\bf The connection to the discrete Heisenberg group needs to be explained/exploited})
If 
\begin{equation}
\label{eq:goz} g ( z):=  \frac{ \theta_{a',b'} ( z/ \tau'| \tau )}{\theta_{a,b} ( z/\tau'| \tau ) }, \end{equation}
then \eqref{eq:theta1} 
shows that
\begin{equation}
\label{eq:gper}   g ( z + \tau' ) = e^{ 2 \pi i ( a' - a ) } g ( z ) , \ \
g ( z + \tau \tau' ) = e^{ - 2 \pi i ( b' - b) } g ( z ) , \end{equation}
and from \eqref{eq:theta2} we know the zeros and poles of $ g $.
%: $ z_{n,m} = (n-\frac12 - a ) \tau \tau' + ( \tfrac12 - b - m ) \tau' $.

With this in place we can prove
\begin{prop}
\label{p:magic}
In the notation of Propositions \ref{p:Ak} and \ref{p:inverse}
we have
\begin{equation*}
\label{eq:magic3}
v( \alpha ) = 0, \ \ \alpha \in \RR \ \Longrightarrow \ \alpha \in \mathcal A . \end{equation*}
\end{prop}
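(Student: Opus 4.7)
The plan is to show that $v(\alpha)=0$ forces $\Spec_{L^2(\CC/\Gamma)}D(\alpha)=\CC$, which by the already-established Proposition \ref{p:Ak} means $\alpha\in\mathcal{A}$. Concretely, for each $\mathbf{k}\notin\Gamma^*$ I will exhibit a non-zero element of $\ker_{L^2(\CC/\Gamma;\CC^2)}(D(\alpha)-\mathbf{k})$ via the theta-function recipe \eqref{eq:recipe}: set $\mathbf{u}_{\mathbf{k}}(z):=e^{\frac i 2(z\bar{\mathbf{k}}+\bar z\mathbf{k})}g_{\mathbf{k}}(z)\mathbf{u}(\alpha,z)$, where $g_{\mathbf{k}}$ is the theta ratio in \eqref{eq:recipe} and $\mathbf{u}(\alpha)$ is the real-analytic family produced by Proposition \ref{p:Ak}.

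The equivalence \eqref{eq:stack1} (to be proved separately in this subsection, after possibly interchanging $\mathbf{u}$ and $\mathscr{E}\mathbf{u}$) provides the key input: the hypothesis $v(\alpha)=0$ forces $\mathbf{u}(\alpha,\cdot)$ to vanish at the specific point on which the denominator $\theta_{-1/6,1/6}(3z/4\pi i\omega\mid\omega)$ degenerates. I need to check three properties of $\mathbf{u}_{\mathbf{k}}$: smoothness, $\Gamma$-periodicity, and the identity $(D(\alpha)-\mathbf{k})\mathbf{u}_{\mathbf{k}}=0$. For smoothness one computes via \eqref{eq:theta2} (with $a=-1/6$, $b=1/6$, $\tau=\omega$) that the poles of $g_{\mathbf{k}}$ lie on a single $\Gamma_3$-orbit; because $\mathbf{u}$ is $\rho_{1,0}$-equivariant under $\Gamma_3$, the zero supplied by \eqref{eq:stack1} spreads to that entire orbit. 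Since $\mathbf{u}$ is only smooth (not a priori holomorphic), I invoke Carleman's similarity principle for the first-order elliptic system $2D_{\bar z}\mathbf{u}=-\alpha V\mathbf{u}$ to obtain a local factorization $\mathbf{u}(z)=(z-z_0)^n h(z)$ with $n\ge 1$ and $h$ non-vanishing near each such $z_0$, which matches the simple poles of $g_{\mathbf{k}}$ and yields a smooth extension of $g_{\mathbf{k}}\mathbf{u}$.

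For $\Gamma$-periodicity of $\mathbf{u}_{\mathbf{k}}$, the identities \eqref{eq:gper} applied with $\tau'=4\pi i\omega/3$, together with the $\Gamma$-periodicity of $\mathbf{u}$ and the explicit phase of the exponential prefactor, are engineered so that all quasi-periodic phases telescope; this is precisely the role of the $k_j$-dependent characteristics of the numerator $\theta$ in \eqref{eq:recipe}. For the kernel equation, since $g_{\mathbf{k}}$ is holomorphic on its domain of regularity and the off-diagonal potential in $D(\alpha)$ acts by pointwise multiplication, the conjugation identity $D(\alpha)e^{\frac i 2(z\bar{\mathbf{k}}+\bar z\mathbf{k})}=e^{\frac i 2(z\bar{\mathbf{k}}+\bar z\mathbf{k})}(D(\alpha)+\mathbf{k})$ gives
\[
D(\alpha)\mathbf{u}_{\mathbf{k}}=e^{\frac i 2(z\bar{\mathbf{k}}+\bar z\mathbf{k})}\bigl(g_{\mathbf{k}}D(\alpha)\mathbf{u}+2D_{\bar z}(g_{\mathbf{k}})\mathbf{u}+\mathbf{k}g_{\mathbf{k}}\mathbf{u}\bigr)=\mathbf{k}\mathbf{u}_{\mathbf{k}},
\]
the first two terms vanishing because $D(\alpha)\mathbf{u}=0$ and $D_{\bar z}g_{\mathbf{k}}=0$. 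Since $\mathbf{u}_{\mathbf{k}}\not\equiv 0$, this yields $\mathbf{k}\in\Spec_{L^2(\CC/\Gamma)}D(\alpha)$ for every $\mathbf{k}\notin\Gamma^*$, hence $\Spec D(\alpha)=\CC$ and $\alpha\in\mathcal{A}$.

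The principal obstacle is the pole-cancellation: verifying that the $\Gamma_3$-orbit of poles of $g_{\mathbf{k}}$ produced by \eqref{eq:theta2} lines up exactly with the orbit of zeros of $\mathbf{u}$ supplied by \eqref{eq:stack1} and $\rho_{1,0}$-equivariance, and upgrading the a priori only smooth vanishing of $\mathbf{u}$ at those points to the holomorphic-type vanishing needed to cancel a first-order pole -- this is where the similarity principle for the Beltrami-type system $D(\alpha)\mathbf{u}=0$ does the essential work. The remaining steps (phase-matching and intertwining with $D(\alpha)$) are direct computations.
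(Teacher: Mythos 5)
Your overall plan is the same as the paper's: use \eqref{eq:stack1} to place a zero of $\mathbf u(\alpha,\cdot)$ at a stacking point, then multiply by a theta-function integrating factor to build a $\Gamma$-periodic element of $\ker(D(\alpha)-\mathbf k)$ for every $\mathbf k\notin\Gamma^*$, forcing $\Spec D(\alpha)=\CC$. The intertwining computation and the $\Gamma_3$-equivariance bookkeeping are both correct. But there is a genuine gap, and a secondary concern about the tool you reach for.

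The gap is that you take the forward implication of \eqref{eq:stack1} --- that $v(\alpha)=0$ produces a zero of $\mathbf u(\alpha,\cdot)$ at $z_S$ (or $-z_S$) --- as an external input ``to be proved separately.'' It is not separate: it is precisely the opening step of the paper's proof of this proposition, and without it the argument does not get off the ground. The content you are missing is the symmetry computation showing $\psi_2(\pm z_S)=0$ unconditionally: since $\mathbf u\in L^2_{\rho_{1,0}}$, the $\mathscr C$- and $\mathscr L_{\mathbf a}$-equivariance at the fixed points of $z\mapsto\omega z$ on $\CC/\Gamma_3$ give $\mathbf u(\alpha,\pm z_S)=\operatorname{diag}(1,\omega^{\mp1})\mathbf u(\alpha,\pm z_S)$, hence $\psi_2(\pm z_S)=0$. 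Only then does the constancy of $v(\alpha)=\psi_1(z)\psi_1(-z)+\psi_2(z)\psi_2(-z)$ collapse to $v(\alpha)=\psi_1(z_S)\psi_1(-z_S)$, so that $v(\alpha)=0\Rightarrow\mathbf u(\alpha,z_S)=0$ or $\mathbf u(\alpha,-z_S)=0$, with the harmless switch $\mathbf u\leftrightarrow\mathscr E\mathbf u$ handling the second case. Without this your proof assumes the conclusion of the step it most needs.

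The secondary issue is the appeal to ``Carleman's similarity principle.'' That theorem, in its standard form, is for scalar pseudoanalytic functions; here $D(\alpha)\mathbf u=0$ is a genuinely coupled $2\times2$ system, and the sharp factorization $\mathbf u(z)=(z-z_0)^nh(z)$ with $h$ non-vanishing is not something you can just cite. Fortunately, you do not need that strength, and the paper's route is both more elementary and airtight: from $\mathbf u(z_S)=0$ and the equations $2D_{\bar z}\psi_1=-\alpha U(z)\psi_2$, $2D_{\bar z}\psi_2=-\alpha U(-z)\psi_1$, one shows inductively $\partial_{\bar z}^\ell\psi_j(z_S)=0$ for all $\ell$; combined with the real analyticity of $\mathbf u$ (elliptic regularity with analytic coefficients), this yields $\mathbf u(z)=(z-z_S)\widetilde{\mathbf u}(z)$ with $\widetilde{\mathbf u}$ smooth, which is exactly what cancels the simple pole of the theta ratio. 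No non-vanishing of $\widetilde{\mathbf u}$ is required. If you insist on the similarity-principle route you must cite the vector-valued (Bojarski-type) version and still extract only the weaker factorization; the bootstrap argument is simpler and is what the paper uses.
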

\noindent
{\em Proof.} If $ \mathbf u ( \alpha ) = ( \psi_1 , \psi_2 ) $ then 
\[ v ( \alpha ) = \psi_1 ( z ) \psi_1 ( -z ) + \psi_2 ( z ) \psi_2 ( -z ) .\]
As remarked after \eqref{eq:Wr}, $ v ( \alpha ) $ is independent of 
$z  $. 

The observation made in \cite{magic} is that $ \psi_2 $ vanishes
at special {\em stacking} points. These are fixed points of the action 
$ z \mapsto \omega z $ on $ \CC/\Gamma_3 $ (see \eqref{eq:defGam3}):
\begin{equation}
\label{eq:stack}
\begin{gathered}
\psi_2 ( \alpha, \pm z_S  ) = 0 , \ \ z_S :=  \tfrac13 ( \mathbf a_2 
- \mathbf a_1 ) = \tfrac{ 4 \sqrt {3} } 9 \pi , \ \  \mathbf a_j = \tfrac 43 \pi i \omega^j . 
\end{gathered}
\end{equation}
To see this, note that (with the action of $ \mathscr C $ identified with 
the action on $ ( \mathbf u , 0_{\CC^2}  )^t \in L^2 ( \CC/\Gamma; \CC^4 )$)
\[ \begin{split} \mathbf u ( \alpha , \pm z_S ) 
& = \mathscr C   \mathbf u ( \alpha , \pm z_S ) = 
 \mathbf u  ( \alpha , \pm \omega z_S ) = 
 \mathbf  u ( \alpha , \pm z_S \mp \mathbf a_2 ) \\
 & = 
 \begin{pmatrix}  \omega^{\pm 1 } & 0 \\
 0 & 1 \end{pmatrix} \mathscr L_{\mp \mathbf a_2 } \mathbf u ( \alpha, \pm z_S) =  \begin{pmatrix}  1 & 0 \\
 0 & \omega^{\mp 1 } \end{pmatrix} \mathbf u ( \alpha , \pm z_S) . 
\end{split}  \]
Hence $ \psi_2 ( \pm z_S  ) = \omega^{\mp 1 } \psi_2 ( \pm z_S ) $,
which proves \eqref{eq:stack}.

We conclude that if $ v ( \alpha ) = 0 $ then $ \psi_1 ( z_S ) \psi_1 ( - z_S ) = 0 $, and hence $ \mathbf u ( \alpha, z_S ) = 0 $ or
$ \mathbf u ( \alpha, - z_S ) = 0 $.  Assume the former holds (otherwise 
we replace $ \mathbf u $ with $ \mathscr E \mathbf u $). We can then 
construct a periodic solution to $ ( D ( \alpha ) - \mathbf k ) \mathbf v_{\mathbf k }  
= 0 $ for any $ \mathbf k \in \CC $, and in particular for $ 
\mathbf k \notin \Gamma^* $, implying, in view of \eqref{eq:SpecD}, that
$ \alpha \in \mathcal A $.

In fact, if $ f_{\mathbf k } $ is holomorphic with simple poles 
at the zeros of $ \mathbf u $ allowed (we note that the equations
$ 2 D_{\bar z } \psi_1 + U ( z ) \psi_2 = 2 D_{\bar z } \psi_2 + 
U(-z ) \psi_1 = 0 $ imply that  $ \partial_{
\bar z }^\ell \psi_j ( z_S ) = 0 $ and hence $ \mathbf u = 
( z - z_S ) \widetilde {\mathbf u } $, where $ \widetilde {\mathbf u } $
is smooth near $ z_S $) then
\begin{equation*}
\label{eq:magic1}   ( D ( \alpha ) - \mathbf k ) \mathbf v_{\mathbf k} = 
0 , \ \ \ \mathbf v_{\mathbf k} ( z ) =  e^{  \frac i 2 ( z \bar {\mathbf k }+  \bar z {\mathbf k}) } f_{\mathbf k}  ( z ) \mathbf v (z ) . 
\end{equation*}
To obtain periodicity we need
\begin{equation*}
\label{eq:magic2} 
\begin{gathered}
f_{\mathbf k} ( z + \mathbf a ) = e^{  - \frac i 2 ( \mathbf a \bar {\mathbf k }+  \bar {\mathbf a}  {\mathbf k}) } f_{\mathbf k} ( z ) , \ \ \mathbf a \in \Gamma , \ \ 
\tfrac 1 2 ( \mathbf a \bar {\mathbf k }+  \bar {\mathbf a}  {\mathbf k}) 
= 2 \pi ( a_1 k_1 + a_2 k_2 ) , \\ 
\mathbf a = 4 \pi ( a_1 i\omega  + a_2 i\omega^2  ) , \ \ 
\mathbf k = \tfrac{1}{\sqrt 3 } ( k_1 \omega^2 - k_2 \omega ) .
\end{gathered}
\end{equation*}
But now, \eqref{eq:theta2}--\eqref{eq:gper} show that we can take
\[ \hspace{1.65in} f_{\mathbf k} ( z ) = \frac{ \theta_{ -\frac16 + k_1/3, \frac16 - k_2/3 } ( 3 z/ 4 \pi i 
\omega | \omega )}{ \theta_{ -\frac16, \frac16} ( 3 z/ 4\pi i \omega | \omega ) }  
.
\hspace{1.65in} \Box \] 

\begin{proof}[Proof of Theorem \ref{t:spec}]
The lack of dependence of the spectrum of $ T_{\mathbf k }$ on $ \mathbf k \notin  \Gamma^* $ and equivalence of statements~(1) and (2) are the content of 
Proposition \ref{p:Ak}. The definition of $ H_{\mathbf k } ( \alpha ) $
in \eqref{eq:defmal} immediately shows their equivalence to statement~(3).
\end{proof}

\begin{proof}[Proof of Theorem \ref{t:magic}]
In Proposition \ref{p:Ak} we already obtained a (real) analytic family
$ \alpha \mapsto \mathbf u ( \alpha ) $. Then 
$ v ( \alpha ) = W ( \mathbf u ( \alpha ) , \mathscr E \mathbf u ( \alpha ) ) $ and the equivalence of $ v( \alpha ) $ to (1) in 
Theorem \ref{t:spec} follows from Proposition \ref{p:inverse} and
\ref{p:magic}.
\end{proof}

\noindent
{\bf Remarks.} 1. The zero of $ \mathbf u ( \alpha ) \in \ker_{ L^2_{\rho_{1,0}} ( \CC/\Gamma, \CC^2 )} D ( \alpha ) $
seems to occur at $ z_S $ only -- see Figure~\ref{f:eig}. This is
also suggested by the following argument: from $ v ( \alpha ) = 0 $ we see that
$ \mathscr E \mathbf u ( z) = f ( z ) \mathbf u ( z)  $,
where, using $ v ( \alpha ) = 0 $ again, 
\begin{equation}
\label{eq:deff} 
f ( z ) := \frac{ \psi_2 ( -z ) }{\psi_1 ( z ) }  = 
- \frac{ \psi_1 ( -z ) } { \psi_2 ( z ) } = 
\frac{ \alpha U ( z ) \psi_1 ( - z )}{ 2 D_{\bar z } \psi ( z ) }
, \end{equation}
is holomorphic away from $ \psi_1^{-1} ( 0 ) \cap (D_{\bar z } \psi_1)^{-1} 
( 0 ) $. We also see that $ f $ is meromorphic: in fact, near any point
$ z_0 $, 
$\psi_1 ( z_0 + \zeta ) = F_1 ( \zeta, \bar \zeta )$, 
$ \psi_2 ( - z_0 - \zeta ) = F_2 ( \zeta , \bar \zeta ) $, where
$ F_j : B _{\CC^2 } ( 0, \delta ) \to \CC $ are holomorphic functions (this follows
from real analyticity of $ \psi_j $, which follows in turn from the 
ellipticity of the equation -- see \cite[Theorem 8.6.1]{H1}). 
The definition of $ f $ and the fact that $ \partial_{\bar z } f = 0 $
away from zeros of $ \psi_1 $ shows
 that $ F_2 ( \zeta, \xi ) = f ( z_0 + \zeta ) F_1 ( \zeta , \xi ) $. We can then choose $ \xi_0 $ such that $ F_1 ( \zeta, \xi_0 ) $ is not 
identically zero (if no such $ \xi_0 $ existed, $ \psi_1 \equiv 0$, and hence, from the equation, $ \mathbf u \equiv 0 $). But then 
$ \zeta \mapsto f ( z_0 + \zeta ) = F_2 ( \zeta, \xi_0 )/F_1 ( \zeta, \xi_0 ) $ is meromorphic near $ \zeta = 0 $ and, as $ z_0 $ was
arbitrary, everywhere.
In addition, 
\begin{equation*}
\label{eq:propf} 
f ( z + \mathbf a ) = \omega^{-a_1 - a_2} f ( z ) ,  \ \ \mathbf a \in \Gamma_3, \ \  f ( \omega z ) = f ( z ) , \ \ 
f ( z ) f ( -z ) = -1 . 
\end{equation*}
These symmetries also show that 
$ f ( z_S + \omega \zeta ) = \omega^{-1} f ( z_S + \zeta ) $, 
which means that $ f ( z_S + \zeta ) = \sum_{ k \geq k_0 } \zeta^{-1+ 3k} f_k $ and $ f ( -z_S - \zeta ) = \sum_{ \ell \geq 1- k_0 } \zeta^{
-2 + 3 \ell } g_{ \ell }$, for some $ k_0 \in \ZZ $.
 Hence, if $ f $ has only poles of order 1, 
we have $ \mathbf u ( \alpha, z_S ) = 0 $. We formulate this bold guess
as follows:
\begin{equation}
\label{eq:guess}  \mathbf u ( \alpha ) \in \ker_{ L^2_{\rho_{1,0}} ( \CC/\Gamma, \CC^2 )} D ( \alpha ), \ \ \mathbf u ( \alpha ) \not \equiv 0 \ 
\Longrightarrow \ \mathbf u ( \alpha, z ) \neq 0 , \ \ z \notin z_S + \Gamma_3. \end{equation}
This is related to the following fact, which seems to hold as well:
\begin{equation}
\label{eq:dim}  \dim \ker_{ L^2_{\rho_{1,0}} ( \CC/\Gamma, \CC^2 )} D ( \alpha ) = 1 , \ \ \alpha \in \CC .
\end{equation}
\begin{proof}[Proof of \eqref{eq:guess} $ \Rightarrow $ \eqref{eq:dim}]
Suppose that $ \mathbf u = ( \psi_1, \psi_2 )^t $ and 
$ \mathbf v = ( \varphi_1 , \varphi_2 )^t $ are two elements of 
the kernel in $ L^2_{\rho_{1,0}} $. We then define the (constant) Wronskian $ w := \psi_1 \varphi_2 - \psi_2 \varphi_1 $. Since 
$ \varphi_2 ( \pm z_S ) = \psi_2 ( \pm z_S ) = 0 $ (see \eqref{eq:stack}),
we have $ w = 0 $ and hence $ \mathbf v = g \mathbf u $, where
$ g ( z ) = \varphi_1 ( z ) /\psi_1 ( z ) $. As in the discussion of
$ f$ given after \eqref{eq:deff}, we see that $ g ( z ) $ is a meromorphic
function periodic with respect to $ \Gamma_3 $. From \eqref{eq:guess}
applied to $ \psi_1 $ we see that $ g $ can only have poles at 
$ z_S + \Gamma_3 $, and applied to  $\varphi_1 ( z ) $
we see that $ g $ can only have zeros at the same place. But this implies that $ g $ is constant.
\end{proof}

\noindent
2. The elements of the kernel of $ D ( \alpha ) - \mathbf k$
can be obtained from the (finite rank) residue of the operator
\eqref{eq:inv}, and theta functions are already implicitly present there.
On one hand (see \S \ref{s:num}) the operator $ ( 2 D_{\bar z } - 
\mathbf k )^{-1} $ can be described using Fourier expansion, but on the other hand it can be represented using theta functions: it is the convolution with the fundamental solution of $ 2 D_{\bar z } - \mathbf k $
on $ \CC/\Gamma $. To obtain the convolution kernel (in a construction which works for any torus) we seek a function $ G_{\mathbf k } $ such that
\begin{equation*}
\label{eq:Gk} 
\begin{gathered}   ( 2 D_{\bar z } - \mathbf k ) G_{\mathbf k } = \delta_0 ( z ) , \ \ \
G_{\mathbf k} = e^{ \frac i 2 ( \mathbf k \bar z + \bar{\mathbf k} z)}
g_{\mathbf k } ( z ) , \ \ \
 \partial_{\bar z} g_{\mathbf k }|_{ \CC \setminus \Gamma}  =0 , \\
g_{\mathbf k } ( z + \mathbf a ) = 
e^{ - \frac i 2 (\bar{\mathbf k } 
\mathbf a + \mathbf k \bar{\mathbf a } )} g_{\mathbf k} ( z ) , \ \ \
{\rm{Res}}_{z = w} g_{\mathbf k} ( z ) = \left\{ \begin{array}{ll} i/(2 \pi), & w \in \Gamma;\\
\ \ 0, & w \notin \Gamma . \end{array} \right. \end{gathered}
\end{equation*}
(The last condition gives $ 2 D_{\bar z } g_{\mathbf k} ( z ) = \sum_{ \mathbf a \in \Gamma } \delta_{\mathbf a } ( z ) $, as $
\partial_{\bar z } ( 1/ (\pi z )) = \delta_0 ( z ) $.)

To find $ g_{\mathbf k } $ we return to \eqref{eq:theta2} 
and \eqref{eq:goz} and choose
\begin{equation*}
\label{eq:choicetau1} \tau' = 4 \pi i \omega  , \ \   \tau \tau' = 4 \pi i \omega^2  , \ \ 
 a =  \tfrac12 , \ \
b = \tfrac12 , \ \ a'  =  \tfrac 12 - k_1, \ \ b' =  \tfrac12 + k_2 .
\end{equation*}
Hence we have
\begin{equation}
\label{eq:gkz} 
\begin{gathered} g_{\mathbf k } ( z ) :=  
\frac{  e^{ - \pi i k_1^2 + 2 \pi i k_1 ( \frac12 + k_2 )} \theta_{ \frac12, \frac12 }' (  0 | \omega ) }  { 2 \pi i \theta_{ \frac12 , \frac12  } (  \omega k_1 + k_2 ) | \omega )}
\frac{ \theta_{ \frac12 - k_1, \frac12 + k_2 } ( z/ 4 \pi i 
\omega | \omega )}{ \theta_{ \frac12, \frac12 } (  z/ 4\pi i \omega |\omega ) } , 
\\  \mathbf k = \tfrac{1}{\sqrt 3 } ( k_1 \omega - k_2 \omega^2 ) , 
\ \  ( k_1 , k_2 ) \notin \ZZ^2. 
\end{gathered}
\end{equation}
It would be interesting to derive \eqref{eq:recipe} from \eqref{eq:inv}
and \eqref{eq:gkz}.
\qed

\subsection{Existence of magic $ \alpha$'s} 
\label{s:trace}
We now give a proof of Theorem \ref{t:trace} which amounts to calculating
$ \tr T_{\mathbf k}^4 $. For that it is convenient to switch to 
rectangular coordinates, which are also used in numerical computations
(see \S \ref{s:num}): $z = x_1 + i x_2 =  2i \omega y_1 + 2 i \omega^2 y_2$.
We have 
$ U ( z ) 
 = e^{ -i ( y_1 + y_2 ) } 
+ \omega e^{ i (  2 y_1 - y_2 ) } + \omega^2 e^{ i ( - y_1 + 2 y_2 )  }$
and $ 2 D_{\bar z } =  D_{x_1} + i D_{x_2} =
 \left(  \omega^2  D_{y_1 } - \omega D_{y_2} \right)/\sqrt{3} $.
We are then studying
\begin{equation}
\label{eq:Dkal} \begin{gathered}    D_{\mathbf k } ( \alpha ) := D ( \alpha ) + 
\mathbf k = \tfrac{1}{ \sqrt 3 } \begin{pmatrix} 
\mathscr D_{\mathbf k }  & 
 \alpha \mathscr V ( y )  \\
\alpha \mathscr V ( - y ) & \mathscr D_{\mathbf k} 
\end{pmatrix} , \\ \mathscr D_{\mathbf k } :={ \omega^2  
(D_{y_1} + k_1) - \omega (D_{y_2} + k_2 ) }, \\
\mathscr V ( y ) := \sqrt 3 ( e^{ -i ( y_1 + y_2 ) } 
+ \omega e^{ i (  2 y_1 - y_2 ) } + \omega^2 e^{ i ( - y_1 + 2 y_2 )}  ),
\end{gathered} 
\end{equation}
with {\em periodic} periodic boundary conditions (for $ y \mapsto
y + 2 \pi \mathbf n  $, $ \mathbf n  \in  \ZZ^2 $). {In the following, we shall write $\mathscr V_{\pm}(y):=\mathscr V(\pm y).$}
The operator $ T_{\mathbf k } $, $ \mathbf k = ( \omega^2 k_1 -
 \omega k_2 )/\sqrt 3$, $ ( k_1, k_2 ) \notin \ZZ^2 $, is given by 
\[   T_{\mathbf k} := \begin{pmatrix} 0
  &  \mathscr D_{\mathbf k }^{-1}  \mathscr V_+  \\
 \mathscr D_{\mathbf k }^{-1}  \mathscr V_-  & 0
\end{pmatrix}. \]
In this notation,
\begin{equation}
\label{eq:defA}   \tr T_{\mathbf k}^4 = 18 \tr A^2, \ \ 
 A := A_{\mathbf k } :=  \tfrac13 \mathscr D_{\mathbf k }^{-1}  \mathscr V_+
\mathscr D_{\mathbf k }^{-1}  \mathscr V_- , \end{equation}
where we note that $ A^2 $, a pseudodifferential operator of
order $ -4 $, is of trace class (see for instance \cite[Theorem B.21]{res}).

By taking the (discrete) Fourier transform on 
$ \RR^2 / 2 \pi \ZZ^2 $ we consider the operator $ D_{\mathbf k} ( \alpha ) $ as acting on $  \ell^2 ( \ZZ ) \otimes \ell^2 ( \ZZ ) $. 
With  $ D :=  {\rm{diag}}\, ( \ell )_{\ell \in \ZZ}$
and $ J ( (a_n)_{ n \in \ZZ}  ) ) = ( a_{n+1} )_{n \in \ZZ } $, 
we have 
\begin{equation}
\label{eq:DJ} 
\begin{split}
 \mathscr D_{\mathbf k } &= \omega^2 ( D + k_1  )
\otimes I -
\omega  I \otimes ( D + k_2 I ) ,\\
\mathscr  V_+ / \sqrt 3 &= J \otimes J + \omega \, J^{-2}  \otimes 
J + \omega^2 J \otimes J^{-2}, \\ 
\mathscr  V_- / \sqrt 3 &= J^{-1} \otimes J^{-1} + \omega J^2 \otimes 
J^{-1}  + \omega^2 \, J^{-1} \otimes  J^{2} .
 \end{split}
 \end{equation}

The numerical value in Theorem \ref{t:trace} will come from the following,
 surely classical, computation:
\begin{lemm}
\label{l:sum}
For $ \Gamma := \omega \ZZ \oplus \ZZ $, $ \omega := e^{ 2 \pi i/3} 
$ and $ \gamma_0 \in \Gamma \setminus \{ 0 \} $ define
\begin{equation} K ( \gamma_0 ) := \sum_{\gamma \in \Gamma \setminus \{ 0,  \gamma_0 \} } 
\gamma^{-2} ( \gamma - \gamma_0 )^{-2} . \label{eq:defK} \end{equation}
Then 
\begin{equation}
\label{eq:sum}
K ( \omega m + n ) = - \frac{ 4 \pi i ( \omega ( 2n - m ) + n + m ) }
{ 3 ( \omega m + n)^3 } . 
\end{equation}
\end{lemm}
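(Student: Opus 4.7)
The plan is to compute $K(\gamma_0)$ by realising it as the finite part, at $z = \gamma_0$, of an auxiliary meromorphic function built from the Weierstrass $\zeta$-function of the triangular lattice $\Gamma$. Set
\[
\phi(z) \;:=\; \sum_{\gamma \in \Gamma \setminus \{0\}} \frac{1}{\gamma^{2}(\gamma - z)},
\]
which converges absolutely and defines a meromorphic function on $\CC$ whose only singularities are simple poles at the nonzero lattice points (it is holomorphic at $0$ because $\sum_{\gamma\neq 0}\gamma^{-3} = 0$ by the $\gamma\mapsto-\gamma$ symmetry). Differentiating term by term, $\phi'(z) = \sum_{\gamma \neq 0} \gamma^{-2}(\gamma - z)^{-2}$, and separating off the $\gamma = \gamma_0$ summand gives
\[
K(\gamma_0) \;=\; \lim_{z \to \gamma_0}\Bigl[\phi'(z) - \frac{1}{\gamma_0^{2}(z - \gamma_0)^{2}}\Bigr].
\]

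Next, I would identify $\phi$ in closed form. Applying the elementary partial fraction $\frac{z^{2}}{\gamma^{2}(\gamma - z)} = -\frac{z}{\gamma^{2}} - \frac{1}{\gamma} + \frac{1}{\gamma - z}$ and using that, for the hexagonal lattice, the finite symmetric sums $\sum_{|\gamma|\leq R,\,\gamma\neq 0}\gamma^{-1}$ and $\sum_{|\gamma|\leq R,\,\gamma\neq 0}\gamma^{-2}$ vanish identically for every $R$ (because the symmetric disk is preserved by $\gamma\mapsto -\gamma$ and by $\gamma\mapsto\omega\gamma$), the cross-terms drop out; the remaining sum reduces, via the standard Weierstrass series for $\zeta$, to $1/z - \zeta(z)$. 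Hence
\[
\phi(z) = \frac{1}{z^{3}} - \frac{\zeta(z)}{z^{2}}, \qquad \phi'(z) = -\frac{3}{z^{4}} + \frac{\wp(z)}{z^{2}} + \frac{2\zeta(z)}{z^{3}},
\]
using $\zeta' = -\wp$.

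Finally, I would carry out the Laurent expansion of $\phi'(z)$ near $z = \gamma_0$ and evaluate $\eta$. Since the triangular lattice has $g_{2} = 60\,G_{4} = 0$, we have $\wp(\gamma_0 + u) = u^{-2} + O(u^{4})$, and the quasi-periodicity $\zeta(z + \gamma_0) = \zeta(z) + \eta(\gamma_0)$ combined with $\zeta(u) = u^{-1} + O(u^{5})$ yields $\zeta(\gamma_0 + u) = u^{-1} + \eta(\gamma_0) + O(u^{5})$; substituting into the closed form for $\phi'(z)$ and collecting powers of $u = z - \gamma_0$ produces the finite part, with the relevant contribution being $2\eta(\gamma_0)/\gamma_0^{3}$. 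To evaluate $\eta$, observe that $\zeta(\omega z) = \omega^{2}\zeta(z)$ (from the $\omega$-invariance of $\Gamma$) gives $\eta(\omega\gamma) = \omega^{2}\eta(\gamma)$, hence $\eta(\omega) = \omega^{2}\eta(1)$; combined with the Legendre relation $\eta(1)\cdot\omega - \eta(\omega)\cdot 1 = 2\pi i$ (obtained by integrating $\zeta(z)\,dz$ around the fundamental parallelogram with vertices $0,1,1+\omega,\omega$), this gives $\eta(1) = 2\pi/\sqrt{3}$, and by $\ZZ$-linearity $\eta(\omega m + n) = \tfrac{2\pi}{\sqrt{3}}(\omega^{2} m + n)$. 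The algebraic identity $\omega^{2} m + n = -\tfrac{i}{\sqrt{3}}\bigl(\omega(2n - m) + n + m\bigr)$ then converts $2\eta(\gamma_0)/\gamma_0^{3}$ into the form stated in the lemma. The main delicate point will be rigorously justifying the partial-fraction step: individually the sums $\sum\gamma^{-1}$ and $\sum\gamma^{-2}$ are only conditionally convergent, so the argument depends crucially on the exact symmetric cancellation provided by the hexagonal symmetry.
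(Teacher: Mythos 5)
Your approach is in fact more structural than the paper's: the paper avoids the Weierstrass $\zeta$-function entirely and instead symmetrizes $K$ over $\omega$, introduces the two-variable function $F(z,\gamma_0)=\sum_\gamma(\gamma-z)^{-2}(\gamma-\gamma_0-z)^{-2}$, and evaluates $\sum_j\omega^j F(z,\omega^j\gamma_0)$ by a cotangent-series calculation. Your closed form $\phi(z)=z^{-3}-z^{-2}\zeta(z)$ is correct, and your determination of $\eta(1)=2\pi/\sqrt 3$, $\eta(\omega\gamma)=\omega^2\eta(\gamma)$, and the algebraic identity $\omega^2 m+n=-\tfrac{i}{\sqrt 3}(\omega(2n-m)+n+m)$ are all correct.  (Incidentally, the ``delicate point'' you worry about at the end is not actually an issue: the identity $\phi(z)=z^{-3}-z^{-2}\zeta(z)$ is an absolutely convergent term-by-term match with the standard series for $\zeta$, since $\tfrac1{\gamma-z}-\tfrac1{\gamma}-\tfrac{z}{\gamma^2}=\tfrac{z^2}{\gamma^2(\gamma-z)}=O(|\gamma|^{-3})$; no splitting into the conditionally convergent pieces $\sum\gamma^{-1}$, $\sum\gamma^{-2}$ is required.)

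There is, however, a real gap in the Laurent-expansion step, and it is a consequential one. Expanding $\phi'(z)=-3z^{-4}+2z^{-3}\zeta(z)+z^{-2}\wp(z)$ at $z=\gamma_0+u$, the constant term is not $2\eta(\gamma_0)/\gamma_0^3$ alone: the $-3z^{-4}$ piece contributes $-3\gamma_0^{-4}$, the $2z^{-3}\zeta(z)$ piece contributes $-6\gamma_0^{-4}+2\eta(\gamma_0)\gamma_0^{-3}$ (from $2\gamma_0^{-3}(1-3u/\gamma_0+\cdots)(u^{-1}+\eta(\gamma_0)+\cdots)$), and $z^{-2}\wp(z)$ contributes $+3\gamma_0^{-4}$. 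Hence
\[
K(\gamma_0)\;=\;-\frac{6}{\gamma_0^4}\;+\;\frac{2\eta(\gamma_0)}{\gamma_0^3},
\]
whereas the lemma's stated formula is only the second summand, $\tfrac{2\eta(\gamma_0)}{\gamma_0^3}=-\tfrac{4\pi i(\omega(2n-m)+n+m)}{3\gamma_0^3}$. Your phrase ``the relevant contribution being $2\eta(\gamma_0)/\gamma_0^3$'' suggests you dropped the $-6/\gamma_0^4$ term; if you keep it, your formula does \emph{not} reduce to \eqref{eq:sum}. In fact a direct numerical summation supports your full formula rather than \eqref{eq:sum}: for $\gamma_0=1$ the partial sums of $K(1)$ converge to about $1.255$, which matches $-6+4\pi/\sqrt3$, not $4\pi/\sqrt3\approx 7.255$. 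The discrepancy traces back to the paper's own proof: in identifying the constant of $F(z,\gamma_0)-2\gamma_0^{-2}\wp(z)$, the two poles at $\gamma=0$ and $\gamma=\gamma_0$ each contribute $3\gamma_0^{-4}$ to the $z^0$-coefficient, so the constant is $K(\gamma_0)+6\gamma_0^{-4}$ rather than $K(\gamma_0)$; the cotangent calculation therefore evaluates $g(\gamma_0)+18\gamma_0^{-4}$, not $g(\gamma_0)$. So you have, in effect, detected a correction to the lemma. (This does not, in the end, affect Theorem \ref{t:trace}: the reduction there likewise equates $\tr A^2$ with $\sum_j\omega^j F(\,\cdot\,,\omega^j\gamma_0)$, so the same additive term $18\gamma_0^{-4}$ reappears and the two omissions cancel, leaving $\tr A^2 = 4\pi/\sqrt3$ as stated.) In short: your method is valid and cleaner than the cotangent route, but you must keep the $-6/\gamma_0^4$ term, and once you do, you obtain a corrected version of \eqref{eq:sum}, not the formula as printed.
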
 
\begin{proof}
We notice that $ K ( \omega \gamma_0 ) = \bar \omega K ( \gamma_0 ) $.
Hence it is enough to evaluate
\begin{equation}
\label{eq:defg} g ( \gamma_0 ) := \sum_{ j=0 }^2 \omega^j K( \omega^j \gamma_0 ) = 3 K( \gamma_0 ) .
\end{equation}
Also, if we define $ F  ( z , \gamma_0 ) := \sum_{\gamma \in \Gamma }
( \gamma - z )^{-2} ( \gamma - \gamma_0 - z )^{-2} $, 
then $ F $ is a meromorphic $ \Gamma$-periodic function with 
the singularity at $z= 0 $ given by $ 2 / (z  \gamma_0)^2 $. Hence,
\[ F ( z, \gamma_0 ) = 
2 \gamma_0^{-2} \wp ( z ) + K( \gamma_0 ), \ \ \ 
 \wp ( z ) := \sum_{ \gamma \in \Gamma } 
\left( \frac 1 {(\gamma - z )^{2}} 
- \frac {1 - \delta_{  \gamma , 0}} { \gamma^2 } \right)  . \]
Using the partial fraction expansion, the fact that 
$ \sum_{j=0}^2 \omega^j = 0 $ and the above series for the 
$ \wp $-function, we obtain
\[\begin{split}  g( \gamma_0 ) & =  
\sum_{ j =0}^2 \omega^j F ( z , \omega^j\gamma_0 )  =  \gamma_0^{-2} 
\sum_{\gamma \in \Gamma } \sum_{j=0}^2 \bar \omega^{j} 
\left( \frac 1 {(\gamma - \omega^j \gamma_0 - z )^{2}} - \frac{ 2 \bar\omega^{j} } {\gamma_0} \frac 1 { ( \gamma - \omega^j \gamma_0 - z ) }
\right) \\
& =  \gamma_0^{-2} \sum_{j=0}^2 \bar \omega^j \wp ( z ) + 
\sum_{\gamma \in \Gamma } \sum_{j=0}^2 \frac{\bar \omega^j }{ \gamma_0^2 } 
 \frac {1 - \delta_{  \gamma , \omega^j \gamma_0 }} {( \gamma - \omega^j \gamma_0 )^2 } 
- \sum_{\gamma \in \Gamma } \sum_{j=0}^2 \frac{2 \omega^j }{ \gamma_0^3 }   \frac 1 { ( \gamma - \omega^j\gamma_0 - z ) },
\end{split} \]
where the first term on the right hand side vanishes and both series converge absolutely (this can be checked by 
taking a common denominator using $ \prod_{ j=0}^2 ( \zeta - \omega^j \gamma_0 ) = \zeta^3 - \gamma_0^2  $). 
We now have 
\[ \begin{split} \sum_{\gamma \in \Gamma } \sum_{j=0}^2 {\bar \omega^j }
 \frac {1 - \delta_{  \gamma , \omega^j \gamma_0 }} {( \gamma - \omega^j \gamma_0 )^2 } & =  \lim_{ N \to \infty } \sum_{ j=0}^2 
\bar \omega^j \sum_{ |\gamma - \omega^j \gamma_0 | \leq N } 
( 1 - \delta_{ \gamma , 0} ) { \gamma^{-2} } \\
& = \mathcal O ( 1 ) \lim_{ N \to \infty } \sum_{ N - | \gamma_0|
\leq | \gamma | \leq N + |\gamma_0 | }  N^{-2} = 0 .
%\\ & = \mathcal O ( 1 ) \lim_{ N \to \infty }  N^{-1} = 0.
\end{split}  
   \]
Hence, using the fact that $ \sum_{ n \in \ZZ } ( (n - a)^{-1} - 
( n - b)^{-1} ) = \pi \cot \pi b - \pi \cot \pi a $,
\begin{equation}
\label{eq:sumg} \begin{split}  g ( \gamma_0 ) %& = \sum_{\gamma \in \Gamma } \sum_{j=0}^2 \frac{2 \omega^j }{ \gamma_0^3 } \left(  \frac 1 { ( \gamma + \omega^j\gamma_0 - z ) } - \frac 1 { ( \gamma - z ) } \right) \\
& = 
2 \pi \gamma_0^{-3} \lim_{ M \to \infty } \sum_{m=-M}^M 
\sum_{ j=0}^2 \omega^j \left(    \cot \pi ( m \omega + \omega^j \gamma_0 + z ) - \cot \pi ( m \omega + z ) \right) .
\end{split} \end{equation}
Since 
$ \cot \pi x - \cot \pi y = 2 i ( ( e^{ 2\pi i x  } - 1 )^{-1}  - ( e^{ 2 \pi i y} - 1 )^{-1}) $,
$   e^{ 2 \pi i n \omega } = (-1)^n e^{ - n \pi \sqrt 3 } $, $ n \in \ZZ $, we obtain, with 
$a_m := ( e^{ 2 \pi i (m \omega  + z) } - 1 )^{-1}$, 
%$ \gamma_0 = \omega m_0 + n_0 $, 
%\[  \begin{split} 
%&  \sum_{m=-M}^M (  \cot ( (m+m_0) \omega + n_0 - z )  - \cot \pi ( m \omega  - z )  )  = 
%2 i \sum_{ m=-M}^M (a_{ m + m_0 } - a_{m } )   \\
%& \ \ \ \ \  =
%2i \sum_{m=M+1}^{M+m_0} a_m -  2i \sum_{m=-M}^{-M+m_0+1} a_m  
% % \\
%%& \ \ \ \ \ \  
%=  2i \sum_{m=M+1}^{M+ m_0} (  -1 + \mathcal O (e^{-M} ))  
%%\\ & \ \ \ \ \ \ \ \ \ \ \ \ \ \ 
%-  2i \sum_{m=M-m_0+1}^{M} \mathcal O (e^{-M} )
%\\ & \ \ \ \ \  
%=  - 2i  m_0 +
%\mathcal O ( e^{-M} ) . \end{split}
% \]
\[  \begin{split} 
&  \sum_{m=-M}^M (  \cot ( (m+m_0) \omega + n_0 - z )  - \cot \pi ( m \omega  - z )  )  \\\ & \ \ \ \ \ \ = 
2 i \sum_{ m=-M}^M (a_{ m + m_0 } - a_{m } )   
=
2i \sum_{m=M+1}^{M+m_0} a_m -  2i \sum_{m=-M}^{-M+m_0+1} a_m  
  \\
& \ \ \ \ \ \  
=  2i \sum_{m=M+1}^{M+ m_0} (  -1 + \mathcal O (e^{-M} ))  
%\\ & \ \ \ \ \ \ \ \ \ \ \ \ \ \ 
-  2i \sum_{m=M-m_0+1}^{M} \mathcal O (e^{-M} )
%\\ & \ \ \ \ \  
=  - 2i  m_0 +
\mathcal O ( e^{-M} ) . \end{split}
 \]
Inserting this in \eqref{eq:sumg} with  $ \gamma = \omega m_0 + n_0 $ 
(and calculating the corresponding $ \omega^j \gamma $)
gives
\[ \begin{split} g ( \omega m_0 + n_0 ) & = 
%4 \pi i (\omega m + n)^{-3} ( m + \omega ( n - m ) - \omega^2 n ) \\
%& =
- 4 \pi i (\omega m_0 + n_0 )^{-3}  ( \omega ( 2 n_0 -  m_0 ) + n_0 + m_0 ), 
\end{split} \]
which, in view of \eqref{eq:defg}, proves \eqref{eq:sum}.
\end{proof}

We can now give the 
\begin{proof}[Proof of Theorem \ref{t:trace}]
To simplify calculations we introduce the following notation:\begin{equation}
\label{eq:Jor}
J^{p,q} := J^p \otimes J^q, \ \ p,q \in \ZZ . 
\end{equation}
Also, for a diagonal matrix $ \Lambda = ( \Lambda_{ij} )_{ i,j \in \ZZ } $
acting on $ \ell^2 ( \ZZ ) \otimes \ell^2 ( \ZZ ) $ we define
a new diagonal matrix with the following basic properties:
\begin{equation}
\label{eq:Lam}
\begin{gathered}   \Lambda_{p,q} := ( \Lambda_{ i+p , j+q } )_{ i, j \in \ZZ },  \\
  ( \Lambda \Lambda' )_{ p, q } = 
 \Lambda_{p, q}  \Lambda'_{p, q} , \ \ \ \ 
( \Lambda_{p',q' }  )_{ p,q } =  \Lambda_{ p+p', q+q' }, \end{gathered}
\end{equation}
where $\Lambda'$ is just another diagonal matrix. 
To express powers of $ A $ in \eqref{eq:defA} we will use the following
simple fact:
\begin{equation}
\label{eq:l1} J^{p,q} \Lambda J^{p',q'} 
= \Lambda_{ p, q }  J^{ p+p', q+q' } = J^{ p+p', q+q'} \Lambda_{ -p',-q'}.
\end{equation}
If we put 
\[ \Lambda := D_{\mathbf k }^{-1}, \ \ 
\Lambda_{mn} = (\omega^2 ( m + k_1 ) - \omega ( n + k_2 ) )^{-1}, \ \ 
( k_1, k_2 ) \notin \ZZ^2 , \]
then, in the notation of \eqref{eq:defA},
\[ \begin{split} A & = 
\Lambda ( J^{1,1} + \omega J^{-2,1} + \omega^2 J^{1,-2} ) \Lambda 
( J^{-1,-1} + \omega J^{2,-1} + \omega^2 J^{-1,2} ) 
\\ & =  \Lambda \Lambda_{1,1}  + \omega \Lambda \Lambda_{1,-2} + 
\omega^2 \Lambda \Lambda_{ -2,1}  + \omega \Lambda \Lambda_{1,1}
  J^{3,0} + \omega^2 \Lambda \Lambda_{1,1}  J^{0,3} \\
& \ \ \  + 
\omega \Lambda \Lambda_{-2,1} J^{-3,0} 
 +  \omega^2 \Lambda \Lambda_{1,-2} J^{0,-3} + \Lambda \Lambda_{-2,1}  J^{-3,3} + 
\Lambda \Lambda_{1,-2} J^{3,-3} .
\end{split} \]
The diagonal part of $ A^2 $ is then given by (note
that the matrices are diagonal and commute)
\begin{equation}
\label{eq:B} 
\begin{split} B  & :=  \Lambda^2  \Lambda_{1,1}^2 + \omega^2 \Lambda^2 \Lambda_{1,-2}^2  + 
\omega \Lambda^2 \Lambda_{-2, 1} ^2 
+ 2 \omega \Lambda^2 \Lambda_{1,1} \Lambda_{1,-2} + 
2 \omega^2 \Lambda^2 \Lambda_{1,1} \Lambda_{-2,1} \\
& \ \ \ + 2 \Lambda^2 \Lambda_{-2,1} \Lambda_{1,-2} + \omega^2 \Lambda_{1,1}^2 \Lambda \Lambda_{3,0}
+ \omega^2 \Lambda_{-2,1}^2 \Lambda \Lambda_{-3,0} + 
\omega \Lambda_{1,1}^2 \Lambda \Lambda_{0,3} \\
& \ \ \ 
+ 
\omega \Lambda_{1,-2}^2 \Lambda  \Lambda_{0,-3} + 
\Lambda_{-2,1}^2 \Lambda \Lambda_{-3,3} + \Lambda_{1,-2}^2 \Lambda \Lambda_{3,-3}.
 \end{split}
 \end{equation}
Since 
$ \tr \Lambda^2_{k,\ell} \Lambda_{p,q} \Lambda_{p',q'} = 
\tr \Lambda^2_{k+r,\ell+s} \Lambda_{p 
{+} r, q {+}s} \Lambda_{p'{+} r, q'{+} s} $, 
we have 
\[ \begin{split} \tr A^2 & = \tr  \Lambda^2 ( 
\Lambda_{1,1}^2 + 2 \Lambda_{-2,1} \Lambda_{1,-2} + 
2\Lambda_{2,-1} \Lambda_{-1,2} ) \\
& \ \ \ +\omega  \tr  \Lambda^2 ( 
 \Lambda_{-2,1}^2 + 2 \Lambda_{1,1} \Lambda_{1,-2} + 
2 \Lambda_{-1,2} \Lambda_{-1,-1} ) \\
& \ \ \ +
\omega^2 \tr  \Lambda^2 (  \Lambda_{1,-2}^2 + 2 \Lambda_{1,1} \Lambda_{-2,1} + 2\Lambda_{-1,-1} \Lambda_{2,-1} ) .
\end{split}  \]
We now find that 
\[  \Lambda_{\pm2,\mp1} \Lambda_{\mp1,\pm2} + \omega  \Lambda_{\mp1,\pm2} \Lambda_{\mp1,\mp1}+\omega^2 \Lambda_{\mp1,\mp1} \Lambda_{\pm2,\pm1}=0. \]
%and 
%\[  \Lambda_{-2,1} \Lambda_{1,-2} + \omega  \Lambda_{1,-2} \Lambda_{1,1}+\omega^2 \Lambda_{1,1} \Lambda_{-2,1}=0.\]
In fact, using 
\[ \frac{1}{ab}+\frac{1}{bc}+\frac{1}{ca} = \frac{a+b+c}{abc}\]
it suffices to show, say for the $ + $ case, that for all $n \in \ZZ^2$
\[  \omega (\Lambda_{2,-1})_{n,n}^{-1}+ \bar \omega (\Lambda_{-1,2})_{n,n}^{-1}+ (\Lambda_{-1,-1})_{n,n}^{-1}=0\]
which follows from a direct computation.
Hence, the expression for the trace simplifies further to 
\begin{equation}
\label{eq:simtr}  \begin{split} \tr A^2  = \tr  \Lambda^2 ( 
\Lambda_{1,1}^2 +
\omega  \Lambda_{-2,1}^2 +\omega^2 \Lambda_{1,-2}^2   ), 
\end{split}  \end{equation}
and this expression can be calculated using Lemma \ref{l:sum}. 
The singular terms of the sum in \eqref{eq:simtr} cancel, as the proof of Lemma \ref{l:sum} shows, so we can remove them, and put $ k_1 = k_2 = 0$. 
Noting that $ \omega^2 m - \omega n = \omega \gamma $, $ \gamma = 
\omega m - n $, and $ \omega^2 ( m + p ) - \omega ( n + q ) = 
\omega ( \gamma - \gamma_0 ) $, $ \gamma_0 = -\omega p + q $,
\[
\begin{split}  \tr A^2 & = \bar \omega K ( - \omega + 1 ) + K ( 2\omega + 1) 
+ \omega K ( -\omega - 2 ) \\
& = 
K ( 2 \omega + 1 ) + \omega K ( \omega  ( 2 \omega +1 ) )
+\omega^2 K ( \omega^2 ( 2 \omega + 1 )) = 
3 K ( 2 \omega + 1 )\\
&  ={4 \pi}/ \sqrt 3, \end{split} \]
where we used \eqref{eq:sum} and \eqref{eq:defg}.
In view of \eqref{eq:defA}, this concludes the proof.
\end{proof}

\noindent
{\bf Remark.} Similar arguments can be used to show that
$ \sum_{\alpha \in \mathcal A} \alpha^{-8} = \tr T_{\mathbf k}^4 = 
740 \pi / \sqrt 3 $.

\section{Exponential squeezing of bands}
\label{s:exp}

Here we prove a more general version of Theorem \ref{t:squeeze} valid for   potentials with symmetries \eqref{eq:symmU}. Theorem \ref{t:squeeze} is then obtained as a special case by choosing the potential as in \eqref{eq:defU}. As mentioned in the introduction, in order to see exponential squeezing of bands as $\alpha\to\infty$ for general potentials, it is necessary to impose an additional non-degeneracy assumption. 

To introduce our class of potentials, let
\begin{equation}\label{eq:fn}
f_n(z)=f_n(z,\bar z):=\sum_{k=0}^2 \omega^ke^{\frac{n}{2}(z\bar \omega^k-\bar z\omega^k)}, \ \ \ n \in \ZZ. 
\end{equation}
Then $f_n(\omega z)=\omega f_n(z)$ and
$$
f_n(z+\mathbf a)=\bar\omega^nf_n(z),\quad \mathbf a = \tfrac{4}3 \pi i \omega^\ell,  \ \ell = 1, 2 .
$$
Hence, $f_n$ satisfies \eqref{eq:symmU} only when $n\equiv 1$ mod 3. We shall therefore consider potentials given by \begin{equation}\label{genpot}
U(z)=U(z,\bar z)=\sum_{n\in 3\ZZ+1} a_{n}f_{n}(z,\bar z), \ \ \ \ 
| a_n | \leq c_0 e^{-c_1 |n| } ,
\end{equation}
for some constants $ c_0, c_1 > 0 $. The condition on $ a_n $ is equivalent 
to real analyticity of $ U $.

Special cases of this type of potential have appeared in \cite{guinea2019continuum} and \cite{walet2019emergence}, where the strength of the potential at certain points based on orbital positions and shapes is taken into account to obtain a model  different from \eqref{eq:defU} that still satisfies the desired symmetries.
Note that the potential in \eqref{eq:defU} is obtained from \eqref{genpot} by taking $a_1=1$ and $a_{n}=0$ for all $n\ne1$. The potential $U_\mu$ appearing in Figure \ref{f:mu} is obtained by taking $a_1=1$, $a_{-2}=\mu$ and $a_{n}=0$ for $n\ne 1,-2$.

Since $\overline{f_n(\bar z)}=f_n(z)$ for all $n$, the symmetry relation $\overline{U(\bar z)}=U(z)$ (used in Proposition \ref{p:sym} to achieve
$ \mathcal A = \overline {\mathcal A }$)  
is equivalent to $ \Im a_n = 0 $ for all $ n $.

We now impose a generic non-degeneracy  assumption that
\begin{equation}\label{eq:conditionbracket}
\sum_{n\in3\ZZ+1}n\Re(a_{n})\ne0.
\end{equation}
This is trivially satisfied by the standard potential in \eqref{eq:defU}, and for the potential $U_\mu$ appearing in Figure \ref{f:mu} it holds as long as $\mu\ne\frac12$. For such potentials we have the following strengthened version of Theorem \ref{t:squeeze}.

\begin{theo}
\label{t:squeezegenpot}
Suppose that $ H_{\mathbf k} ( \alpha ) $ is given by 
\eqref{eq:defH} and \eqref{eq:defmal} with $ U $ given by 
\eqref{genpot} and that
\[  \Spec_{ L^2 ( \CC/\Gamma ) } H_{\mathbf k} ( \alpha ) = \{  E_j ( \mathbf k , \alpha ) \}_{ j \in \ZZ }  , \ \ \  
  E_{ j } ( \mathbf k , \alpha ) \leq E_{ j+1 } ( \mathbf k , \alpha ) ,  \ \  \mathbf k \in \CC , \ \ \alpha > 0 , \]
with the convention that $  E_0 ( \mathbf k , \alpha ) = 
\min_{ j} |E_j ( \mathbf k , \alpha ) | $.
If $U$ satisfies \eqref{eq:conditionbracket}, then there exist positive constants $c_0$, $c_1$, and $c_2$ such that
for all $ \mathbf k \in \CC $, 
\begin{equation*}
 | E_j ( \mathbf k, \alpha) | \leq c_0 e^{ - c_1 \alpha } , 
\ \  | j | \leq c_2 \alpha, \ \  \alpha > 0 .
\end{equation*}
\end{theo}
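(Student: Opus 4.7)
The plan is to apply the pseudospectral quasimode construction of Dencker--Sj\"ostrand--Zworski \cite{dsz}. Setting $h := 1/\alpha$ and writing $\zeta := \xi_1+i\xi_2$ for the symbol of $2D_{\bar z}$, I view $hD(\alpha)$ as a semiclassical $2\times 2$ matrix operator on $\CC/\Gamma$ with principal symbol
\[p(z,\xi) := \begin{pmatrix} \zeta & U(z) \\ U(-z) & \zeta \end{pmatrix},\]
whose eigenvalues are $\lambda_\pm(z,\xi) = \zeta \pm V(z)$, $V(z) := \sqrt{U(z)U(-z)}$, smooth off the branch locus $\{V=0\}$. Since $(\xi_1,\xi_2)\mapsto\zeta$ is an $\RR$-linear bijection onto $\CC$, the characteristic set $\{\lambda_+ = E\}$ is a real two-dimensional submanifold of $T^*(\CC/\Gamma)$ parametrised by $z$, for any $E\in\CC$. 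A direct computation (using $\partial_{x_1} = \partial_z+\partial_{\bar z}$ and $\partial_{x_2} = i(\partial_z-\partial_{\bar z})$) yields the $\xi$-independent bracket
\[\tfrac{1}{2i}\{\lambda_+,\bar\lambda_+\}(z,\xi) = -2\,\mathrm{Im}\,\partial_z V(z).\]

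To verify that this is nonzero at some $z_0$, I use hypothesis \eqref{eq:conditionbracket}. Since $f_n(0) = \sum_k\omega^k = 0$ and $\partial_z f_n(0) = \tfrac{3n}{2}$, Taylor expansions give
\[U(z) = \tfrac{3}{2}\Bigl(\sum_n n a_n\Bigr)z + O(|z|^2), \quad U(-z) = -\tfrac{3}{2}\Bigl(\sum_n n a_n\Bigr)z + O(|z|^2),\]
so $V(z)^2 = -\tfrac{9}{4}\bigl(\sum_n n a_n\bigr)^2 z^2 + O(|z|^3)$ and on either sheet $\partial_z V \to \pm\tfrac{3i}{2}\sum_n n a_n$ as $z\to 0$. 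Consequently $\mathrm{Im}\,\partial_z V$ has limit $\pm\tfrac{3}{2}\sum_n n\,\mathrm{Re}(a_n)$, which is nonzero by \eqref{eq:conditionbracket}. The bracket therefore has a definite sign on an open subset $\Omega\subset(\CC/\Gamma)\setminus\{V=0\}$.

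Next I apply the DSZ Gaussian--WKB construction. For $\mathbf k$ in a fundamental domain of $\Gamma^*$ set $E := h\mathbf k$; for each $z_0\in\Omega$ there is a unique $\xi_0(z_0)\in\RR^2$ with $\lambda_+(z_0,\xi_0) = E$, and a smooth unit eigenvector $\mathbf v_+(z_0)\in\CC^2$ of $p(z_0,\xi_0)$ for this (simple, off $\{V=0\}$) eigenvalue. Adapting \cite{dsz} to the matrix symbol via the spectral projector onto $\mathbf v_+$, and periodising the resulting complex-Gaussian wave packet by a $\Gamma$-sum with Floquet phase $e^{\frac{i}{2}(z\bar{\mathbf k}+\bar z\mathbf k)}$, produces quasimodes
\[u_{z_0,h}\in C^\infty(\CC/\Gamma;\CC^2), \quad \|u_{z_0,h}\|_{L^2}=1, \quad \|(D(\alpha)-\mathbf k)u_{z_0,h}\|_{L^2}\le c_0 e^{-c_1\alpha},\]
uniformly in $\mathbf k$. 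An $h^{1/2}$-separated net of $\sim h^{-1} = \alpha$ centres $z_j\in\Omega$ then yields approximately orthogonal quasimodes (their microlocal supports are essentially disjoint discs of radius $h^{1/2}$), so by the min-max principle $D(\alpha)-\mathbf k$ has at least $c_2\alpha$ singular values bounded by $c_0 e^{-c_1\alpha}$.

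Finally, from
\[H_{\mathbf k}(\alpha)^2 = \mathrm{diag}\bigl((D(\alpha)-\mathbf k)^*(D(\alpha)-\mathbf k),\,(D(\alpha)-\mathbf k)(D(\alpha)-\mathbf k)^*\bigr),\]
the eigenvalues of $H_{\mathbf k}(\alpha)$ come in pairs $\pm\sigma_j$, with $\sigma_j$ the singular values of $D(\alpha)-\mathbf k$, yielding the desired bound. The main obstacle is implementing the DSZ construction rigorously for this $2\times 2$ matrix symbol in the periodic setting: one must stay clear of the crossing locus $\{V=0\}$, preserve the exponentially small error under $\Gamma$-periodisation, and verify near-orthogonality of the wave packets uniformly in $\mathbf k$ in order to extract the $\sim\alpha$ small singular values demanded by the count $|j|\le c_2\alpha$.
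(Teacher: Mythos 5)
Your plan coincides with the paper's in every essential respect: take $h=1/\alpha$ as the semiclassical parameter, verify H\"ormander's bracket condition near the origin via the Taylor expansion $\partial_zU(0)=\tfrac32\sum n a_n$ together with the hypothesis \eqref{eq:conditionbracket}, invoke the analytic quasimode construction of \cite{dsz}, and place $\sim 1/h$ such quasimodes on a $K\sqrt h$\,-lattice to get near-orthogonality and hence $\sim\alpha$ exponentially small singular values of $D(\alpha)-\mathbf k$. The one technical divergence is how the scalar reduction is achieved. You diagonalize the $2\times 2$ symbol $p$ and compute the bracket of the eigenvalue $\lambda_+ = \zeta+\sqrt{U(z)U(-z)}$, whereas the paper eliminates $u_2$ from the system $(P(h)-h\mathbf k)\mathbf u=0$ to obtain a genuine scalar second-order operator $Q$ with principal symbol $q=\det p = 4\bar\zeta^2 - U(z)U(-z)$, and then feeds $Q$ to the scalar theorem \cite[Theorem~1.2]{dsz}. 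On the characteristic variety the two brackets differ by a positive factor ($\{q,\bar q\}=2|V^{1/2}|^2\{\lambda_+,\bar\lambda_+\}+\ldots$ with $V=UU(-\cdot)$), so the non-vanishing conditions agree and both reduce to $\sum n\,\Re(a_n)\neq 0$. What your route leaves to be checked — and what you correctly flag as the ``main obstacle'' — is a genuine $2\times 2$-system version of the DSZ quasimode construction using the eigenprojector onto $\mathbf v_+$, plus periodisation by a $\Gamma$-sum; the paper sidesteps both by constructing $v_h$ compactly supported in a small neighbourhood (so it is automatically in $C^\infty(\CC/\Gamma)$) and then setting $\mathbf u_h = (v_h,\,-U^{-1}(2hD_{\bar z}-h\mathbf k)v_h)$. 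So the paper's elimination buys a direct appeal to the scalar theorem where your diagonalization would require an extra (nontrivial though believable) adaptation to matrix symbols.
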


\noindent
{\bf Remark.} If in \eqref{genpot} we assumed instead that 
$ |a_n| \leq C_N |n|^{-N} $ for all $ N $, that is, that the potential is {\em smooth},
then the conclusion would be replaced by $ | E_j ( \mathbf k, \alpha ) |
\leq C_N \alpha^{-N} $ for any $ N$. That follows essentially from 
H\"ormander's original argument -- see \cite[Theorem 2]{dsz} and references
given there.

To prove Theorem \ref{t:squeezegenpot} it is natural to consider $ h = 1/\alpha $
as a semiclassical parameter. This means that
\[ H_{\mathbf k }  ( \alpha ) = h^{-1} \begin{pmatrix} 
0 & P ( h )^* - h \bar {\mathbf k } \\
P ( h ) - h  {\mathbf k} & 0 \end{pmatrix} ,  \ \ \   P = P ( h ) = 
\begin{pmatrix} 2 h D_{\bar z } & U ( z ) \\ U ( - z) & 2 h D_{\bar z } 
\end{pmatrix} , \]
where $ U ( z) $ is a potential given by \eqref{genpot} that satisfies \eqref{eq:conditionbracket}.

The semiclassical principal symbol of $ P (h) - h \mathbf k $ (see \cite[Proposition E.14]{res}) is given by 
\begin{equation}
\label{eq:littlep}
p(z,\bar z,\bar \zeta) = \begin{pmatrix}2 \bar{\zeta}  &   U(z,\bar z) \\  U(-z,-\bar z) &2 \bar{\zeta} \end{pmatrix} , 
\end{equation}
where we use the complex notation $ \zeta = \frac 12 ( \xi_1 - i \xi_2 ) $, 
$ z = x_1 + i x_2 $. The Poisson bracket can then be expressed as 
\begin{equation}
\label{eq:defPo} \{ a , b \} = \sum_{ j=1}^2 \partial_{\xi_j } a \partial_{x_j} b -
\partial_{\xi_j } b \partial_{x_j} a = 
\partial_\zeta a \partial_z b  - \partial_\zeta b \partial_z a + \partial_{\bar \zeta } a \partial_{ 
\bar z } b - \partial_{\bar \zeta } b \partial_{ 
\bar z } a . \end{equation}

The key fact we will use is the analytic version \cite[Theorem 1.2]{dsz} of H\"ormander's construction based on the bracket condition:
suppose that $ Q = \sum_{ |\alpha| \leq m } 
a_\alpha ( x, h ) ( h D)^\alpha $ is a differential operator such that 
$ x \mapsto a_\alpha ( x, h ) $ are real analytic near $ x_0 $, and let
$ q ( x, \xi ) $ be the semiclassical principal symbol of $ Q $. If there exists
\begin{equation}
\label{eq:Pa}  q ( x_0 , \xi_0 ) = 0 , \ \  \{ q , \bar q \} ( x_0 , \xi_0 ) \neq 0 , 
\end{equation}
then there exists a family $ v_h \in C^\infty_{\rm{c}} ( \Omega ) $, $ \Omega $ a
neighbourhood of $ x_0 $, such that
\begin{equation}
\label{eq:quas}
| (h \partial)^\alpha_x Q  v_h (x ) | \leq C_\alpha e^{ - c / h } , \ \
\| v_h \|_{L^2}  = 1, \ \ | (h\partial_x)^\alpha v_h ( x ) | \leq C_\alpha e^{ - c | x- x_0|^2/ h } , 
\end{equation}
for some $ c > 0$. The formulation is different than in 
the statement of \cite[Theorem 1.2]{dsz}, but \eqref{eq:quas} follows
from the construction in \cite[\S 3]{dsz} -- see also 
\cite[\S 2.8]{HiS}.

\begin{figure}
\begin{center}
\includegraphics[width=7cm]{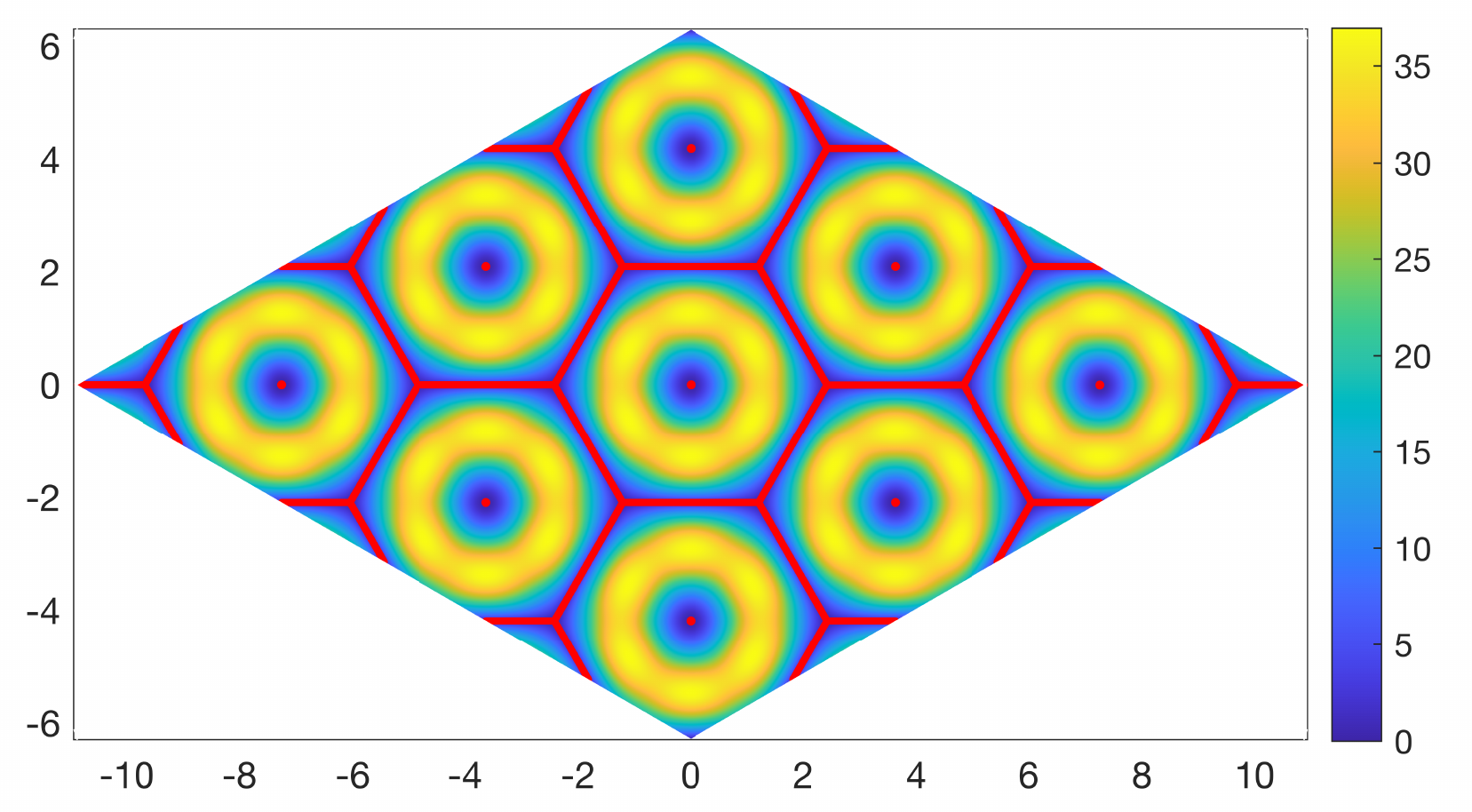} \quad 
\includegraphics[width=7cm]{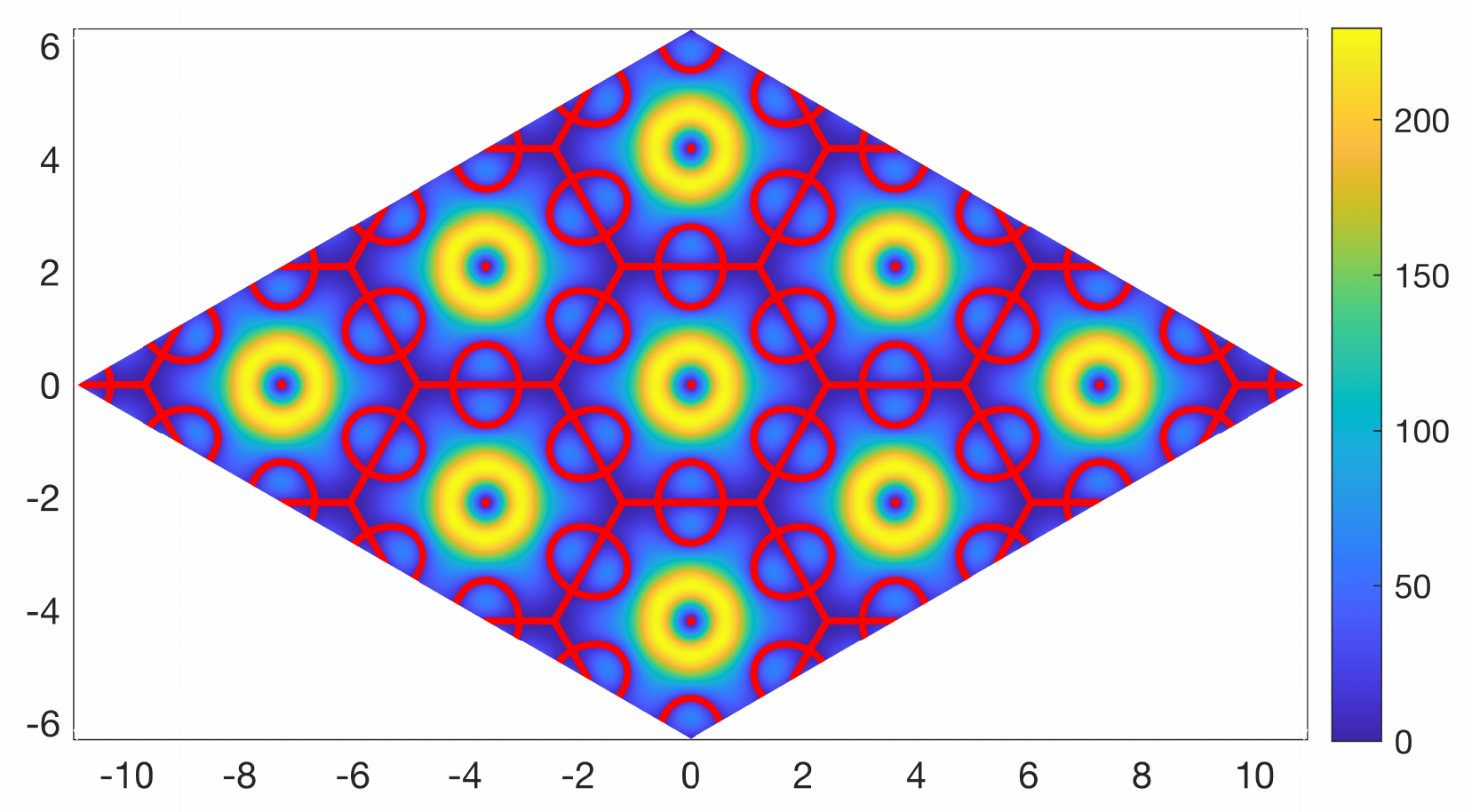} 
\end{center}
\caption{\label{f:Horm} A contour plot of $  | \{ q , \bar q \} | = |\Im ( {\overline{ V}}^{\frac12}   \partial_z V ) | $  -- see \eqref{eq:Horm}. Here $V(z)=U(z)U(-z)$ with $U$ given by \eqref{genpot} so that $U=\sum_n a_nf_n$, $n\equiv 1$ mod 3. In the left panel, $U=a_1f_1$ with $a_1=1$ so that $U$ coincides with \eqref{eq:defU}. In the right panel, $U=a_1f_1+a_{-2}f_{-2}+a_4f_4$ with $a_1=1$, $a_{-2}=-0.75$ and $a_4=0.15$. 
The bracket $ i \{ q, \bar q \} $ is non-zero except on a one-dimensional graph and on a set of points given by the {red} set, and can take 
any sign by choosing a branch of the square root $ V^{\frac12} $. The punctured domain around the origin where $| \{ q , \bar q \} |\ne0$ is clearly visible.}
\end{figure}

We will use this result to obtain
\begin{prop}
\label{p:dsz}
There exists an open set $ \Omega \subset \CC $ and a constant $ c $
such that for any $ \mathbf k \in \CC $ and $ z_0 \in \Omega $ there
exists a family  $ h \mapsto \mathbf u_h \in C^\infty ( \CC/\Gamma ; \CC^2 ) $ such that for
$ 0 < h < h_0 $, 
\begin{equation}
\label{eq:Phz}  | ( P ( h ) - h \mathbf k ) \mathbf u_h ( z ) | \leq e^{ - c /h }, 
\ \ 
\| \mathbf u_h\|_{L^2}  = 1, \ \  | \mathbf u_h ( z ) | \leq e^{ - c | z- z_0|^2/h } . \end{equation}
\end{prop}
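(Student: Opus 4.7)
The plan is to reduce the $2\times 2$ matrix problem to a scalar one and invoke the analytic quasimode construction \eqref{eq:Pa}--\eqref{eq:quas} from \cite{dsz}. The semiclassical principal symbol of $P(h)-h\mathbf{k}$ is the matrix $p(z,\bar\zeta)$ in \eqref{eq:littlep}; the term $-h\mathbf{k}\cdot I$ is of subprincipal order and will be absorbed into the lower-order corrections. The scalar symbol governing the characteristic set of $p$ is $\det p(z,\zeta) = 4\bar\zeta^2 - V(z)$, with $V(z) := U(z)U(-z)$. On $\{V\neq 0\}$ it factors locally as $\det p = q_+\,q_-$, where $q_\pm(z,\zeta) := 2\bar\zeta\mp V(z)^{1/2}$ for a holomorphic branch of the square root.

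A direct computation from \eqref{eq:defPo} gives
\begin{equation*}
\{q_+,\bar q_+\}(z,\zeta) \;=\; V(z)^{-1/2}\partial_z V(z) - \overline{V(z)^{-1/2}\partial_z V(z)} \;=\; 2i\,|V(z)|^{-1}\Im\bigl(\bar V(z)^{1/2}\partial_z V(z)\bigr),
\end{equation*}
so the bracket condition reduces to $\Im(\bar V^{1/2}\partial_z V)\neq 0$; this is exactly the quantity plotted in Figure \ref{f:Horm}. Define $\Omega$ to be any open subset of $\{V\neq 0\}$ on which $\Im(\bar V^{1/2}\partial_z V)$ does not vanish. For $U$ as in \eqref{eq:defU} the figure shows directly that $\Omega\neq\emptyset$; for general potentials \eqref{genpot}, the non-triviality hypothesis \eqref{eq:conditionbracket} forces the same conclusion by inspection of the leading Taylor coefficients of $V$ and $\partial_z V$ at a suitable reference point, which is the point of \eqref{eq:conditionbracket} in Theorem \ref{t:squeezegenpot}.

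Fix $z_0\in\Omega$ and choose $\zeta_0$ with $2\bar\zeta_0 = V(z_0)^{1/2}$, so that $q_+(z_0,\zeta_0)=0$ and $\{q_+,\bar q_+\}(z_0,\zeta_0)\neq 0$. Applying \eqref{eq:Pa}--\eqref{eq:quas} to an analytic quantization $Q_+$ of $q_+$ (well defined near $z_0$ since $V$ is analytic and does not vanish there) yields a scalar family $v_h\in C^\infty_{\mathrm c}(\CC)$ supported in a small neighbourhood of $z_0$ with $\|v_h\|_{L^2}=1$, $|(h\partial_z)^\alpha v_h(z)|\leq C_\alpha e^{-c|z-z_0|^2/h}$ and $|(h\partial)^\alpha Q_+ v_h|\leq C_\alpha e^{-c/h}$.

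To lift to a matrix quasimode, note that $\mathbf w_0(z_0) := (U(z_0), -2\bar\zeta_0)^t$ spans $\ker p(z_0,\zeta_0)$; extend it smoothly to a local section $\mathbf w_0(z)$ of $\ker p(z,\zeta_0)$ and use the WKB ansatz $\mathbf u_h := v_h\sum_{j=0}^N h^j\mathbf w_j(z)$. Expanding $(P(h)-h\mathbf{k})\mathbf u_h$ via the semiclassical symbol calculus and matching powers of $h$ yields transport equations for the $\mathbf w_j$ that are solved by inverting $p(z,\zeta_0)$ transverse to $\mathbf w_0(z)$; the subprincipal perturbation $-\mathbf{k}\cdot I$ enters only at the $j=1$ stage, modifying the $\mathbf w_j$ but not the exponential rate. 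Taking $N\sim 1/h$ and combining with the Gaussian localization of $v_h$ gives $|(P(h)-h\mathbf{k})\mathbf u_h(z)|\leq e^{-c/h}$ while preserving the bounds in \eqref{eq:Phz}. Since the support of $\mathbf u_h$ has diameter $O(\sqrt h)$, for $h<h_0$ sufficiently small it fits in one fundamental cell of $\Gamma$ and extends by translation to an element of $C^\infty(\CC/\Gamma;\CC^2)$. The main obstacle is verifying that $\Omega\neq\emptyset$ under \eqref{eq:conditionbracket} together with the analytic bookkeeping of the scalar-to-matrix lift; once these are in hand, uniformity in $\mathbf{k}$ is automatic since $\mathbf{k}$ appears only at subprincipal order.
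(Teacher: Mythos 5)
Your bracket computation is correct and you rightly recognize that the factorization $\det p = (2\bar\zeta - V^{1/2})(2\bar\zeta + V^{1/2})$ reduces the H\"ormander condition to $\Im(\bar V^{1/2}\partial_z V)\neq 0$, the same condition the paper checks. The paper verifies $\Omega\neq\emptyset$ concretely by Taylor expanding $V=U(z)U(-z)$ at the origin and using \eqref{eq:conditionbracket}; your ``by inspection of the leading Taylor coefficients'' is pointing at the same computation but does not carry it out, which is a (minor) gap you would need to close.

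The substantial gap is in the lift from the scalar quasimode to the $\CC^2$-valued one. You apply \cite[Theorem 1.2]{dsz} to a first-order quantization $Q_+$ of $q_+=2\bar\zeta - V^{1/2}$, obtaining $Q_+v_h=O(e^{-c/h})$, and then propose a WKB ansatz $\mathbf u_h = v_h\sum_{j\le N}h^j\mathbf w_j$. But the naive lift $\mathbf u_h = v_h\,\mathbf w_0$ (with $\mathbf w_0$ spanning the kernel of $p$ along the null bicharacteristic) only produces an $O(h)$ error coming from $2hD_{\bar z}\mathbf w_0$ and the part of $p$ transverse to $\ker p$; converting $O(h)$ into $O(e^{-c/h})$ by summing $N\sim 1/h$ terms is exactly the analytic WKB (Gevrey) machinery, which requires factorial control of the $\mathbf w_j$ in the analytic category and a careful truncation argument. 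None of this is supplied, and it is essentially re-proving a chunk of the theorem you are invoking. (Also, you write $\ker p(z,\zeta_0)$ for the amplitude bundle, but $p(z,\zeta_0)$ is invertible for $z\neq z_0$; the correct bundle is $\ker p(z,\zeta(z))$ with $2\bar\zeta(z)=V(z)^{1/2}$.)

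The paper avoids the lift problem entirely. It applies \eqref{eq:Pa}--\eqref{eq:quas} not to the first-order $Q_+$ but to the \emph{second-order} scalar operator obtained by eliminating the second component,
$Q = U(2hD_{\bar z}-h\mathbf k)\bigl(U^{-1}(2hD_{\bar z}-h\mathbf k)\bigr) - U(-z)U(z)$,
whose principal symbol is the full determinant $q=4\bar\zeta^2 - V$. With $v_h$ a quasimode for this $Q$, the vector
$\mathbf u_h := \bigl(v_h,\ -U^{-1}(2hD_{\bar z}-h\mathbf k)v_h\bigr)$
is an \emph{exact} algebraic lift: the first row of $(P(h)-h\mathbf k)\mathbf u_h$ vanishes identically, and the second row equals $-U^{-1}Qv_h=O(e^{-c/h})$. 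This Schur-complement reduction replaces your whole WKB hierarchy with a one-line identity. If you want to keep the factorized symbol $q_+$ you would instead need to show that a $Q_+$-quasimode is automatically a $Q$-quasimode to exponential accuracy, but that fails: $Q-Q_-Q_+$ is a nonzero first-order operator times $h$, so $Qv_h$ is only $O(h)$.

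Finally, a remark on your ``uniformity in $\mathbf k$ is automatic'': the paper achieves it by including $-h\mathbf k$ inside $Q$, so that $\mathbf k$ alters only lower-order coefficients of a differential operator with $z$-analytic leading part; \cite[Theorem 1.2]{dsz} then gives constants uniform in bounded parameter families. Your phrasing is consistent with this but should be made explicit.
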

\begin{proof}
To apply \eqref{eq:quas} we reduce to the case of a scalar equation,
and for that we look at points where
$ U ( z_0, \bar z_0 ) \neq 0 $. In that case, 
existence of $ \mathbf u_h $ follows from the existence of 
$ v_h \in C^\infty_{\rm{c}} ( \Omega'; \CC )  $, $ \Omega' $ a small 
neighbourhood of $ z_0 $ on which $ U ( z, \bar z ) \neq 0 $, such that
\begin{equation*}
\label{eq:defvh} 
\begin{gathered} 
Q v_h = \mathcal O ( e^{-c/h} ) , \ \ v_h (z_0 )  = 1, \ \ 
|v_h ( z ) | \leq e^{ - c | z-z_0|^2/h } , \\
 Q  :=  U ( z, \bar z ) ( 2 h D_{\bar z} - h \mathbf k ) \left( U ( z, \bar z )^{-1} ( 2 h D_{\bar z} - h \mathbf k )  \right)  - 
U ( -z, - \bar z ) U ( z , \bar z ) , 
\end{gathered}
\end{equation*}
with estimates for derivatives as in \eqref{eq:quas}. We then put
\[ \mathbf u_h := ( v_h , -U ( z , \bar z )^{-1} ( 2 h D_{\bar z } - 
h \mathbf k  ) v_h )  \]
and normalize to have $ \| \mathbf u_h \|_{L^2}= 1$.
Since such $ v_h $ are supported in small neighbourhoods, this defines an 
element of $ C^\infty ( \CC/ \Gamma , \CC^2 ) $.
The principal symbol of $2hD_{\bar z}-h\mathbf k$ is $2\bar\zeta$, and 
basic algebraic properties of the principal symbol map (see \cite[Proposition E.17]{res}) imply that the semiclassical 
principal symbol of $ Q $ is given by
\begin{equation*}
\label{eq:defV}
q(z,\bar z, \bar \zeta) :=\operatorname{det}(p(z,\bar z,\bar \zeta) )=4 \bar \zeta^2 - V ( z, \bar z ), \ \ \
V ( z, \bar z)  := U ( z, \bar z ) U ( - z , - \bar z ). 
\end{equation*}
To use \eqref{eq:quas} we need to check 
 H\"ormander's bracket condition \eqref{eq:Pa}: for $ z $ in an open
 neighbourhood of $ z_0 $, $ U ( z_0, \bar z_0 ) \neq 0 $, there exists
 $ \zeta $ such that
\[  q ( z, \bar z , \bar \zeta ) = 0, \ \ \{ q , \bar q \} ( z, \zeta ) 
\neq 0 .\]
Since $ q =4 \bar \zeta^2 - V ( z, \bar z ) $,  we can take
$ \zeta =  \frac12 \overline V^{\frac 12 } $ (for either branch of the 
square root) so that,
using \eqref{eq:defPo},
\begin{equation}
\label{eq:Horm} 
  \begin{split} i \{ q , \bar q \} & = i (  8 \bar \zeta \bar \partial_z + 
\partial_z V \partial_\zeta ) (4 \zeta^2 - \overline V ) = 
 8 i (  \zeta \partial_z V - \overline{ \zeta \partial_z V }   ) \\ 
& = -  16  \Im ( \zeta \partial_z V )  
 = - 8  \Im ( {\overline{ V}}^{\frac12}   \partial_z V )  . 
\end{split}
\end{equation}
We need to verify that the right-hand side is non-zero at some point $ z_0$,
as that will remain valid in an open neighbourhood of $z_0 $.

To do so we write the expression $\Im ( {\overline{ V}}^{\frac12}   \partial_z V ) $ from \eqref{eq:Horm} as a Taylor series at the origin.  With $f_n$ given by \eqref{eq:fn} we observe that $f_n(0)=0$ for all $n$, and that
\begin{align*}
\partial_z f_n(0)=\tfrac{n}{2}\sum_{k=0}^2e^{\frac{n}{2}(z\bar \omega^k-\bar z\omega^k)}\Big\vert_{z=0}=\tfrac{3n}2,\qquad
\partial_{\bar z} f_n(0)=-\tfrac{n}{2}\sum_{k=0}^2\omega^{2k}e^{\frac{n}{2}(z\bar \omega^k-\bar z\omega^k)}\Big\vert_{z=0}=0,
\end{align*}
since $\omega^4=\omega$ and $1+\omega+\omega^2=0$.
Hence,
\begin{equation}\label{eq:TaylorU}
U(z,\bar z)=\partial_z U(0)z+O(\lvert z\rvert^2),\qquad \partial_z U(0)=\tfrac{3}{2}\sum_{n=3\ZZ+1}na_{n}.
\end{equation}
Recall that $V(z)=U(z)U(-z)$. Since $U(0)=\partial_{\bar z} U(0)=0$, we have $V(0)=\partial_z V(0)=\partial_{\bar z} V(0)=0$, and
$$
\partial_z^2 V(0)=-2(\partial_z U(0))^2,\qquad \partial_z\partial_{\bar z} V(0)=\partial_{\bar z}^2 V(0)=0.
$$
It follows that
$$
V(z)=-z^2(\partial_z U(0))^2(1+O(\lvert z\rvert)),\qquad \partial_zV(z)=-2z(\partial_z U(0))^2(1+O(\lvert z\rvert)),
$$
which gives
\begin{align*}
{\overline{ V}}^{\frac12}(z)\partial_zV(z)&=\overline{\sqrt{-z^2(\partial_z U(0))^2}}(-2z(\partial_z U(0))^2)(1+O(\lvert z\rvert))\\
&=2i\lvert z\rvert^2\lvert \partial_z U(0)\rvert^2\partial_z U(0)(1+O(\lvert z\rvert)).
\end{align*}
From this we see that $\Im ( {\overline{ V}}^{\frac12}   \partial_z V )\ne0$ in a punctured neighbourhood of the origin if $\Re \partial_z U(0)\ne0$, which in view of \eqref{eq:TaylorU} holds by virtue of the non-triviality assumption \eqref{eq:conditionbracket}. This completes the proof.
\end{proof}

\noindent
{\bf Remark.} The open set on which the right-hand side of \eqref{eq:Horm} does not vanish can be easily determined numerically, and it is 
a complement of a one dimensional set -- see Figure~\ref{f:Horm}.

To prove Theorem \ref{t:squeezegenpot} we will use the following fact,
with the proof left to the reader:
\begin{prop}
\label{p:indie}
Suppose that $ g_n \in L^2 ( \CC/\Gamma ) $, $ n \in \ZZ^2 $, 
$ |n| \leq N $ satisfy $ |\langle g_n, g_m \rangle |
\leq  e^{ - M | n-m|^2 } $, $  \langle g_n, g_n \rangle = 1 $. If $ M > 3 $ 
%is large enough (independently of 
%$ N $), 
then the set $ \{ g_n \}_{ |n| \leq N } $ is linearly independent
in $ L^2 ( \CC/\Gamma ) $. \qed
\end{prop}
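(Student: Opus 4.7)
The plan is to show linear independence by proving that the Gram matrix $G = (G_{nm})$ with $G_{nm} = \langle g_n, g_m \rangle$ is invertible. Since $G$ is Hermitian with $G_{nn} = 1$ and strictly subunital off-diagonal entries decaying like $e^{-M|n-m|^2}$, the natural tool is a diagonal dominance (Gershgorin) argument: writing $G = I + E$ with $E_{nn} = 0$ and $|E_{nm}| \le e^{-M|n-m|^2}$, every eigenvalue $\lambda$ of $G$ satisfies $|\lambda - 1| \le \max_n \sum_{m \ne n} |E_{nm}|$. If this Gershgorin radius is strictly less than $1$, then $G$ is positive definite and hence invertible; invertibility of $G$ immediately forces linear independence of $\{g_n\}$, since any linear relation $\sum c_n g_n = 0$ yields $Gc = 0$.

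The only estimate required is therefore
\[
R := \sup_{n} \sum_{m \ne n} e^{-M|n-m|^2} \le \sum_{k \in \ZZ^2 \setminus \{0\}} e^{-M|k|^2}
= \Bigl(\sum_{j \in \ZZ} e^{-Mj^2}\Bigr)^{\!2} - 1.
\]
Using the crude bound $j^2 \ge j$ for $j \ge 1$ gives $\sum_{j \in \ZZ} e^{-Mj^2} \le 1 + 2e^{-M}/(1-e^{-M})$, so that
\[
R \le \frac{4e^{-M}}{1 - e^{-M}} + \frac{4 e^{-2M}}{(1 - e^{-M})^2}.
\]
For $M > 3$ we have $e^{-M} < e^{-3} < 1/20$, and the right-hand side is bounded by about $0.22$, which is comfortably less than $1$. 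Thus every eigenvalue of $G$ lies in the interval $(1 - R, 1 + R) \subset (0, 2)$, so $G$ is positive definite.

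There is no real obstacle: the argument is essentially an elementary Gershgorin estimate combined with summing a lattice Gaussian. The only thing to check carefully is the numerical inequality $R < 1$, and the threshold $M > 3$ in the hypothesis is evidently chosen to leave ample room in this crude estimate (a tighter analysis would yield linear independence already for $M$ slightly larger than $\log 4$, corresponding to the four nearest neighbours of the origin in $\ZZ^2$). Note also that the argument is uniform in $N$, so independence holds for the whole family $\{g_n\}_{n \in \ZZ^2}$ under the same hypothesis.
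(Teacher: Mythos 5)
The paper explicitly leaves this proof to the reader (the \qed immediately follows the statement), so there is no in-paper argument to compare against. Your Gershgorin/diagonal-dominance approach on the Gram matrix is the natural one and it is correct: writing $G=I+E$, the Hermitian matrix $E$ has operator norm at most $\max_n\sum_{m\ne n}|E_{nm}|$, and your bound
\[
R\;\le\;\Bigl(\sum_{j\in\ZZ}e^{-Mj^2}\Bigr)^{2}-1\;\le\;\frac{4e^{-M}}{1-e^{-M}}+\frac{4e^{-2M}}{(1-e^{-M})^2}\;<\;\tfrac14\quad\text{for }M>3
\]
is accurate (the algebra $(1+x)^2-1=2x+x^2$ with $x=2e^{-M}/(1-e^{-M})$ checks out, and numerically $R\lesssim 0.22$ at $M=3$), so $G$ is positive definite, uniformly in $N$. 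The deduction that invertibility of the Gram matrix forces linear independence is standard; the cleanest phrasing is that $c^*Gc=\lVert\sum_n c_n g_n\rVert^2$, so positive definiteness of $G$ gives independence directly. Your closing heuristic about the threshold being near $\log 4$ is a bit optimistic (the full lattice sum, not just the four nearest neighbours, must be beaten), but this is a side remark and does not affect the proof. No gaps.
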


We can now give
\begin{proof}[{Proof of Theorem \ref{t:squeezegenpot}}]
In the notation of Proposition \ref{p:dsz},
let $ C = [ a, b ] \times [c,d]\Subset \Omega $ and consider
the finite set $ \mathscr Z_h :=  K \sqrt h \ZZ^2 \cap C $, 
$ | \mathscr Z_h | \sim 1/h $. Then \eqref{eq:Phz} 
gives $ \mathbf u^w_h $, $ w \in \mathscr Z_h $ (with $ z_0 $ replaced by 
$ w$). Let $ M \gg 1 $. Using  
$| w - z |^2 + |w' - z|^2 = \tfrac12 | w - w'|^2 + 2 | z - \tfrac12 ( w + w')|^2 $, and taking $ K $ large enough, we obtain from \eqref{eq:Phz}
\begin{equation} 
\label{eq:ww} | \langle \mathbf u_h^w , \mathbf u_h^{w'} \rangle | \leq e^{ - M |n-n'|^2}, \ \  n := \tfrac{w}{K \sqrt h},  \ \ n' := \tfrac{w'}{K\sqrt h} \in  \ZZ^2, 
 \ \  \| \mathbf u_h^w \|_{L^2} = 1.
\end{equation}
Abusing notation, let us identify $ \mathbf u_h^w $ with 
$ ( \mathbf u_h^w , 0_{\CC^2 } ) \in L^2 ( \CC/\Gamma; \CC^4 ) $, with 
\eqref{eq:ww} unchanged. We then have 
\begin{equation}
\label{eq:quas2}  \| H_{\mathbf k}  ( \alpha ) \mathbf u_h^w \|_{ L^2 ( \CC/\Gamma ) } 
\leq e^{ - c' /h } , \ \ h = 1/\alpha . \end{equation}
Using self-adjointness of $ H_{\mathbf k} $ and in the notation 
of Theorem \ref{t:squeezegenpot}, write 
\[ H_{\mathbf k } ( \alpha ) \mathbf v = \sum_{ j \in \ZZ } E_j ( \mathbf k , \alpha ) 
\mathbf g_j \langle \mathbf v , \mathbf g_j \rangle , \ \ \ 
H_{\mathbf k } ( \alpha)  \mathbf g_j = E_j ( \mathbf k, \alpha ) 
\mathbf g_j , \ \ \langle \mathbf g_j, \mathbf g_i \rangle = \delta_{ij} .\]
Then \eqref{eq:quas2} implies that
$ \sum_{ | E_{ j } (\mathbf k , \alpha ) | \geq e^{ -c'/2h } }
\mathbf g_j \langle \mathbf u_h^w , \mathbf g_j \rangle = 
\mathcal O ( e^{ -c'/2h } )_{L^2 } $, which gives
\[  \dim {\rm span} \{ \mathbf g_j \}_{ |E_j ( \mathbf k , \alpha ) | < 
e^{ -c'/2h } } \geq  \dim {\rm span} \{ \mathbf u_h^w \}
_{ w\in \mathscr Z_h } . \]
But \eqref{eq:ww} and Proposition \ref{p:indie} show that 
 the right hand side is given by $ \mathscr Z_h \sim 1/h $. This completes
the proof.
\end{proof}

\noindent
{\bf Remark.} This simple argument showing exponential squeezing of bands
does not apply to the more realistic Bistritzer--MacDonald model of 
twisted bilayer graphene \cite{BM11}. In that case,
a more complicated non-self-adjoint system can be extracted from the analogue of $ H( \alpha ) $, but whenever  eigenvalues of the symbol
(the analogue of \eqref{eq:littlep}), $ \lambda $, are simple, 
the Poisson bracket $ \{\lambda, \bar \lambda \}|_{\lambda = 0 } $ vanishes
\cite{suppl}.

%\noindent
%{\bf Remark.} Using the symmetry $\overline{U(z)}=U(\bar z)$, we note that at $z=it$, $t\in\RR$, we have $\overline{V(z)}=U(\bar z)U(-\bar z)=V(z)$. Hence, $V$ is real along the imaginary axis. Using $\overline{\partial_z U(z)}=\partial_{\bar z}U(\bar z)=\partial U(\bar z)$ we find that
%$$
%\overline{\partial V(z)}=\partial U(\bar z)U(-\bar z)-U(\bar z)\partial U(-\bar z).
%$$
%At $z=it$ this gives
%$$
%\overline{\partial V(z)}=\partial U(-z)U(z)-U(-z)\partial U(z)=-\partial V(z),
%$$
%showing that $\partial_zV$ is purely imaginary along the imaginary axis. Therefore $\Im({\overline{V}}^\frac12\partial_zV)\ne0$ if there is a point $z=it$ where $V(it)>0$ and $\partial_zV(it)\ne0$.

\section{Numerical results}
\label{s:num}

The results are numerically implemented using rectangular coordinates
$ z = x_1 + i x_2 =  2i \omega y_1 + 2 i \omega^2 y_2 $, 
see \S \ref{s:trace}. 
We then consider
\[  H_{\mathbf k } ( \alpha) = \begin{pmatrix} 0 & D_{\mathbf k} ( \alpha)^*
\\ D_{\mathbf k} ( \alpha) & 0 \end{pmatrix} , \ \ 
{\mathbf k = ( \omega^2 k_1 - \omega k_2)/\sqrt 3,} \]
where $ D_{\mathbf k} ( \alpha ) $ is given in 
\eqref{eq:Dkal},
with{\em periodic} boundary conditions (for $ y \mapsto
y + 2 \pi \mathbf n  $, $ \mathbf n  \in  \ZZ^2 $).
For a fundamental domain in $ \mathbf k $ we choose
$  \Omega := \{  (k_1, k_2 ) ;  -\tfrac12 \leq k_j < \tfrac12 \}$. 

\subsection{Numerical implementation} \label{s:sum} The discretization is given 
using a Fourier spectral method; see~\cite[Chapter~3]{Tre00}.
Using the tensor structure of
$ \mathscr D_{\mathbf k} $ and $ \mathscr V $ we start with the standard orthonormal basis of $ L^2 (\RR^2 / 2 \pi \ZZ^2) $: $  e_{\mathbf n } (y) := e_{n_1} \otimes e_{n_2} ( y ) := e_{n_1} ( y_1 ) e_{n_2 } ( y_2 ) $, 
$ e_\ell ( t ) := ( 2 \pi)^{-\frac12} e^{ i \ell t } $. 
%We note that 
%\begin{equation}
%\label{eq:simp} 
%\begin{gathered}
% ( D_t + k ) e^{ i \ell t } = ( \ell + k ) e^{ i \ell t } , \ \
%   e^{ \pm i t } e_\ell ( t ) = e_{ \ell \pm 1 } ( t ) , \ \ 
%e^{ \pm 2 i t } e_\ell ( t ) = e_{ \ell \pm 2 } ( t )  .
%\end{gathered}
%\end{equation}
Using the identification  $ [ -N , N ] \cap \ZZ \simeq \ZZ_{2N+1} $, we define
\[  \begin{gathered} \Pi_N :  L^2 ( \RR^2 / 2\pi \ZZ^2; \CC^2 ) \to \ell^2 ( \ZZ_{2N+1}^2 ; \CC^2 ) = \ell^2 ( \ZZ_{2N+1} ; \CC^2 ) 
\otimes \ell^2 ( \ZZ_{2N+1} ; \CC^2 ) 
, \\ \Pi_N \left( \sum_{ \mathbf n \in \ZZ^2 } 
a_{\mathbf n } e^{ i \langle y , \mathbf n \rangle } \right) = 
\{ a_{(n_1, n_2 ) }\}_{ |n_j| \leq N } , \ \  a_{\mathbf n } \in \CC^2 , \ \ \mathbf n = ( n_1, n_2 ) \in \ZZ^2 , \end{gathered} \]
and $  D_{\mathbf k}^N ( \alpha ) :=  \Pi_N D_{\mathbf k } ( \alpha ) 
\Pi_N^* $. Hence, 
\begin{equation*}
\label{eq:DkN}  D_{\mathbf k}^N ( \alpha ) = 
\tfrac{1}{ \sqrt 3 } \begin{pmatrix} 
\mathscr D^N_{\mathbf k }  & 
 \alpha \mathscr V^N_+  \\
\alpha \mathscr V^N_-  & \mathscr D_{\mathbf k}^N 
\end{pmatrix},
\end{equation*}
where (with $ D^N :=  {\rm{diag}}\, ( \ell )_{ - N \leq | \ell | \leq N }$
and $ J_N $ the $2N+1$ dimensional Jordan block)
\begin{align*}
 \mathscr D^N_{\mathbf k } &:= \omega^2 ( D^N + k_1 I_{ \CC^{2N+1} } )
\otimes I_{ \CC^{2N + 1 }} -
\omega  I_{\CC^{2N+1} } \otimes ( D^N + k_2 I_{\CC^{2N+1}} ) ,\\
\mathscr V_+^N &:= \sqrt 3 ( J_N \otimes J_N + \omega \, (\!J^{2}_N)^t \otimes 
J_N + \omega^2 J_N \otimes (J^{2}_N)^t), \\ 
\mathscr V_-^N &:= \sqrt 3 ( (J_N)^t \otimes (J_N)^t + \omega J_N^2 \otimes 
(J_N)^t + \omega^2 \, (J_N)^t \otimes  J^{2}_N ).
%\mathscr V_+^N &:= J_N \otimes J_N + \omega \, {}^t\!J^{2}_N \otimes 
%J_N + \omega^2 J_N \otimes {}^t \! J^{2}_N , \\ 
%\mathscr V_-^N &:= {}^t\!J_1^N \otimes {}^t\!J_1^N + \omega J_{2}^N \otimes 
%{}^t\! J_N + \omega^2 \, {}^t \! J_N \otimes  J^{2}_N .
 \end{align*}
The matrix $D_{\mathbf k}^N(\alpha)$ has dimension $2(2N+1)^2$.
To obtain reasonable accuracy up through the second magic $\alpha$,
one should at least use $N=16$ (giving a matrix of dimension $\mbox{2,178}$);
for the range $\alpha\in[0,15]$ in Figures~\ref{f:squeeze} and~\ref{f:res3}, 
we use $N=96$ (giving dimension $\mbox{74,498}$).
It is expedient in the former case, and essential in the latter,
to use sparse-matrix algorithms that take advantage of the many
zero entries in $D_{\mathbf k}^N(\alpha)$.
To compute the smallest singular values of $D_{\bf k}^N(\alpha)$,
we use Krylov subspace methods, either the inverse Lanczos algorithm
adapted from~\cite{Tr99,Wr02} or the augmented implicitly 
restarted Lanczos method~\cite{BR05} implemented in MATLAB's 
{\tt svds} command.

\begin{figure}
\begin{center}
\includegraphics[width=12cm]{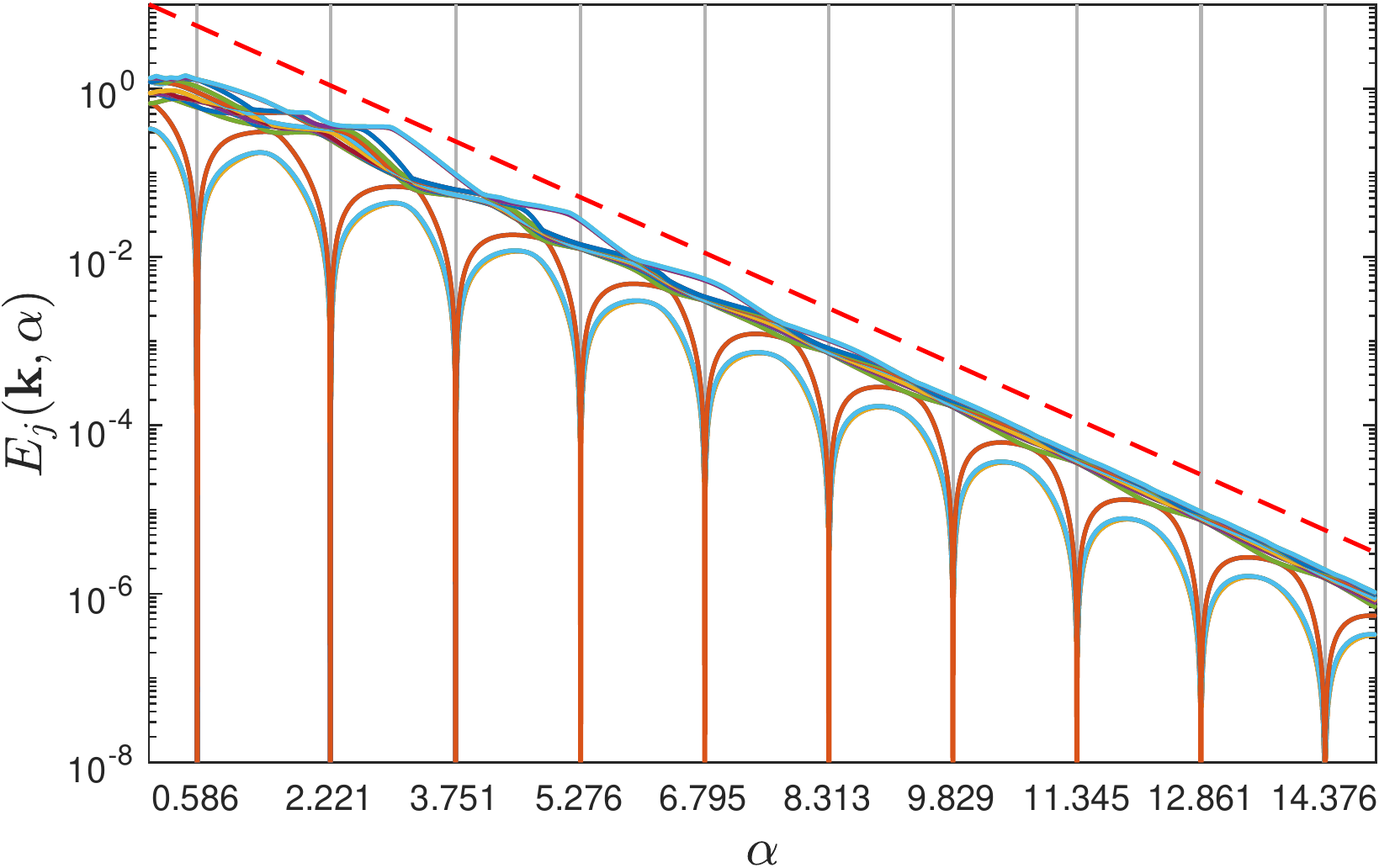}
\begin{picture}(0,0)
\put(-119,120){\rotatebox{-24}{\color{white}{\rule{28pt}{10pt}}}}
\put(-119,120){\rotatebox{-24}{\small $10 e^{-\alpha}$}}
\end{picture}
\end{center}
\caption{\label{f:squeeze}
Numerical confirmation for Theorem~\ref{t:squeeze}:
Computed eigenvalues $E_0({\mathbf k},\alpha)$, \ldots, $E_{40}({\bf k},\alpha)$
of $H_{\bf k}(\alpha)$ for ${\mathbf k}_* = 1/(2\sqrt{3}) + i/6$ (see Figure~\ref{f:res3}).  Numerous eigenvalues are quite close together or have high multiplicity.
}
\end{figure}

Figure~\ref{f:squeeze} shows numerical calculations of the first 41~non-negative eigenvalues of $H_{\bf k}(\alpha)$.  As required by Theorem~\ref{t:squeeze}, these eigenvalues decay exponentially, apparently no slower than $e^{-\alpha}$.
The vertical lines in the figure indicate the magic $\alpha$ values.
We pursue two approaches to 
locating these magic $\alpha\in {\mathcal A}_{\rm mag} $ (see \eqref{eq:defResa} and Theorem~\ref{t:spec}).
The spectral characterization of the set $ \mathcal A $ of resonant $\alpha$'s
via the operator $ T_{\mathbf k } $ enables the precise calculation of many
points in $ \mathcal A $ as reciprocals of eigenvalues of the discretisation 
\[  T_{\mathbf k}^N := %\tfrac{1}{\sqrt 3} 
\begin{pmatrix} 
0 & (\mathscr D_{\mathbf k}^N)^{-1} \mathscr V^N_+ \\ (\mathscr D_{\mathbf k}^N)^{-1} \mathscr V^N_- & 0 \end{pmatrix}. \]
To reduce dimensions (and multiplicities) we consider these operators
in the decomposition of $ L^2 ( \RR/ 2 \pi \ZZ ) $ 
in terms representations of $ \Gamma_3/\Gamma \simeq \ZZ_3^2 $ (we did not 
use the full symmetry group $ G_3 $ -- see \eqref{eq:defG3}). 
We used this approach to compute Figure~\ref{f:resa} and to get
initial estimates of the values in Table~\ref{tbl:magic};
note however that for large $|\alpha|$ the non-self-adjointness of
$T_{\mathbf k}^N$ limits the precision to which these eigenvalues 
can be computed.  (This pseudospectral effect is a more significant
obstacle to high precision than the
errors introduced by truncation to finite $N$.)  

\begin{figure}
\begin{center}
\includegraphics[height=2.2in]{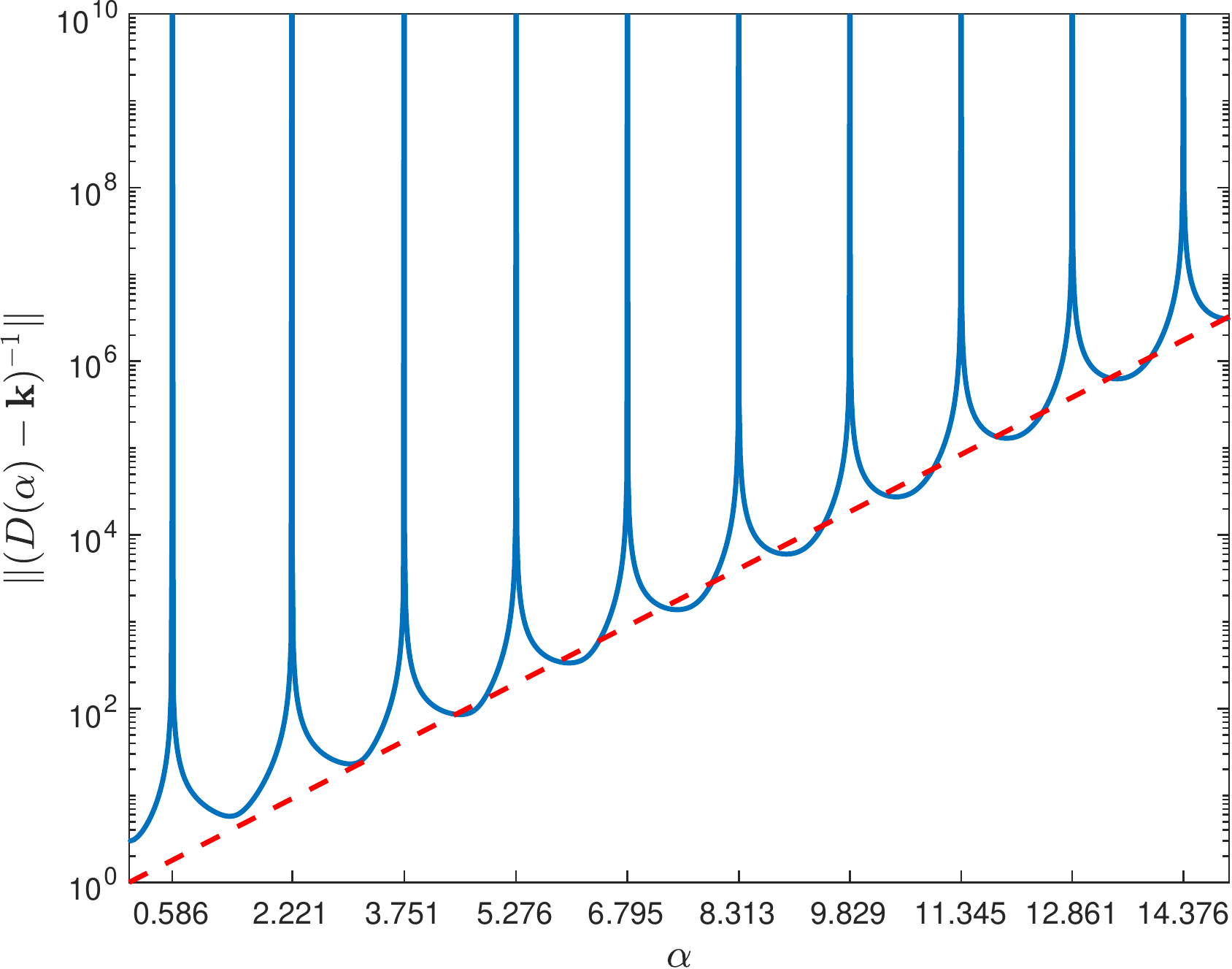}\quad
\raisebox{6pt}{\includegraphics[height=2.1in]{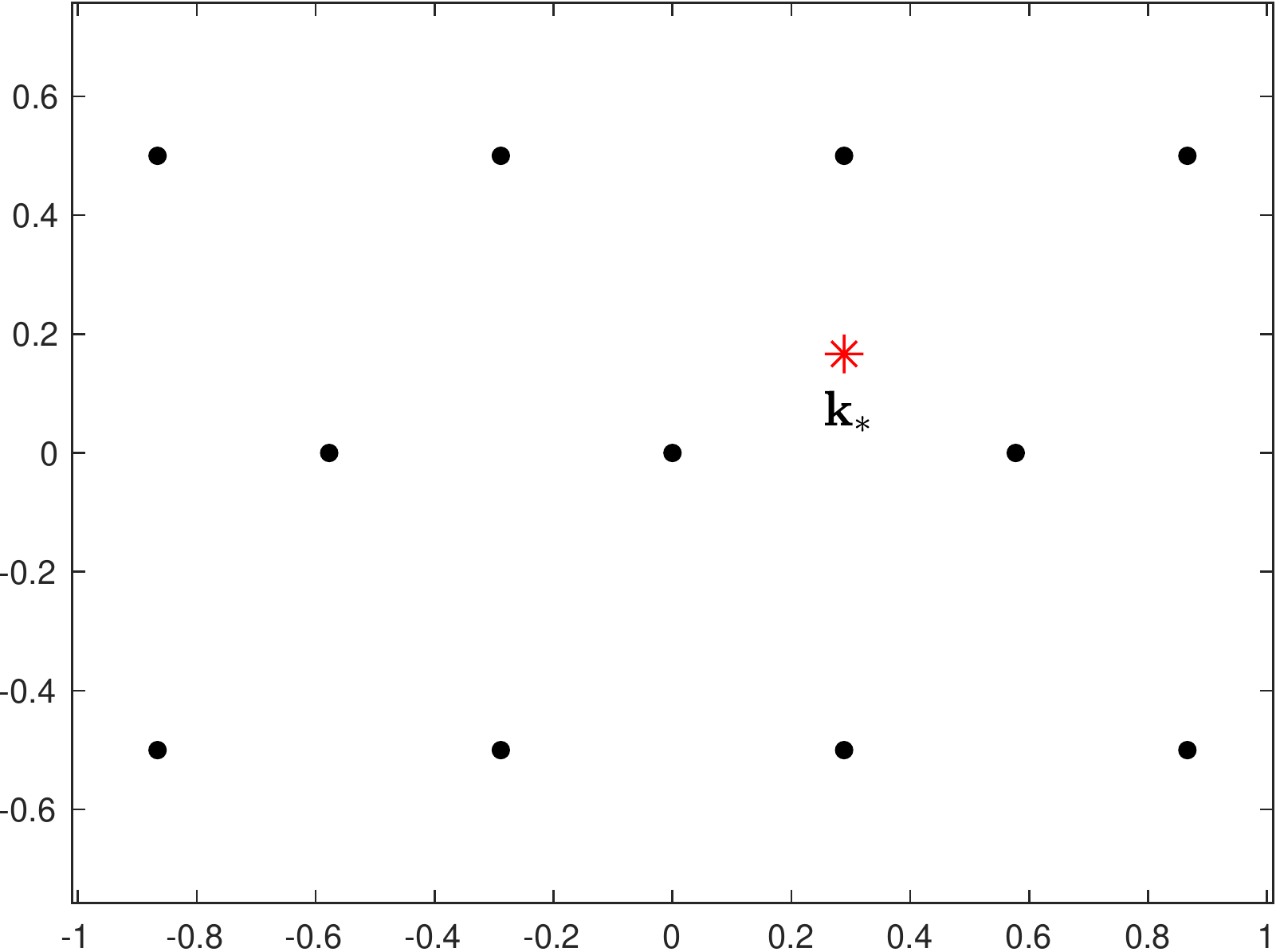}}
\end{center}
\caption{\label{f:res3}
On the left, the norm of the resolvent $(D(\alpha)-{\bf k})^{-1}$ at ${\bf k}_* = 1/(2\sqrt{3}) + i/6$,
a point equidistant from three eigenvalues of $D(\alpha)$ for $\alpha\not \in {\mathcal A}$.
The red dashed line shows $e^\alpha$.
The right shows a portion of ${\mathrm Spec}_{L^2(\CC/\Gamma)} D(\alpha) = \Gamma^*$ for $\alpha \not\in {\mathcal A}$.
}
\end{figure}

\begin{table}
\caption{\label{tbl:magic}
Estimates of the first thirteen magic $\alpha$'s, truncated 
(not rounded) to digits supported with high confidence by our numerics.  
The last column shows the difference between consecutive magic $\alpha$'s,
which seem to converge a bit above the conjecture of $3/2$ in~\cite{magic}.
}

%      0.585663558389558
%      *****************
%      2.221182173820122
%      ***************?     N=96 errors within +/- 1e-14 (or 2e-14)
%      3.75140550990522
%      ***************      N=96 errors within +/- 1e-13 
%      5.27649778298555105
%      **************       N=96 errors within +/- 1e-12 
%      6.79478505720537473
%      *************        N=96 errors within +/- 1e-11
%      8.31299919332331960
%      ************         N=96 errors within +/- 1e-10
%      9.82906696981000038 
%      ***********          N=96 errors within +/- 1e-9
%     11.34534068903335857  
%     ***********           N=96 errors within +/- 1e-8
%     12.86060866999999952
%     **********            N=96 errors within +/- 1e-7 (or better)
%     14.37607282652865237
%     *********            N=96 errors within +/- 1e-6
%     15.89096408487808709
%     ********             N=96 errors within +/- 1e-5
%     17.40601321957037939
%     *******              N=96 errors within +/- 1e-4
%     18.92069859149999899
%     ******               N=96 errors within +/- 1e-3 (still a bit noisy at +/- 1.5e-3)
%     20.43552650510000034
%     ****?                N=96 errors within +/- 5e-2 
%     21.95012807650000041
%     ***?                 N=96 errors probably within +/- 5e-2 (still quite noisy at +/- 5e-2)

\begin{center}
\begin{tabular}{rclcc}
\multicolumn{1}{c}{$k$} & &
\multicolumn{1}{c}{$\alpha_k$} &
 & $\alpha_{k}-\alpha_{k-1}$ \\[2pt] \hline
1  &\ &   \phantom{0}0.58566355838955 &\ &               \\
2  &&   \phantom{0}2.2211821738201  && 1.6355        \\
3  &&   \phantom{0}3.7514055099052  && 1.5302        \\
4  &&   \phantom{0}5.276497782985   && 1.5251        \\
5  &&   \phantom{0}6.79478505720    && 1.5183        \\
6  &&   \phantom{0}8.3129991933     && 1.5182        \\
7  &&   \phantom{0}9.829066969      && 1.5161        \\
8  &&    11.34534068       && 1.5163        \\
9  &&    12.8606086        && 1.5153        \\
10 &&    14.376072         && 1.5155        \\
11 &&    15.89096          && 1.5149        \\
12 &&    17.4060           && 1.5150        \\
13 &&    18.920            && 1.5147        \\
\end{tabular}
\end{center}

\end{table}

To understand the accuracy of the values in Table~\ref{tbl:magic},
we studied $\|(D_{\mathbf k}^N (\alpha ) )^{-1}\|$ near
the putative magic $\alpha$ values.  
Figure~\ref{f:res3} reveals the computational challenge of resolving large
magic angles to high fidelity.  
One can characterize the magic $\alpha$'s as points where 
$  (D(\alpha) - {\bf k})^{-1} $ does not exist, and hence they are
approximated by $\alpha$'s for which
$\|D_{\bf k}^N ( \alpha)^{-1}\|$ is very large for generic ${\bf k}$.
Careful scanning for $\alpha$'s around magic values (using 
$N=96$ and $N=128$) refines the estimates and indicates their accuracy. 
Overall, as $\alpha$ increases
$\|D_{\mathbf k}^N(\alpha)^{-1}\|$ grows exponentially
(as guaranteed by Theorem~\ref{t:squeeze}, since 
$\|D_{\mathbf k}^N(\alpha)^{-1}\| = 1/E_0({\mathbf k},\alpha)$), so that
precisely locating large $\|D_{\mathbf k}^N(\alpha)^{-1}\|$ values
against this growing background becomes increasingly challenging.
Indeed, this numerical struggle nicely parallels the presumed
diminishing physical significance of large magic $\alpha$ values 
(corresponding, as they do, to reciprocals of angles of twisting).

\subsection{Error bounds} \label{s:error}
{ Assuming accuracy of matrix calculations it is possible to give error bounds
for the approximation of the actual magic $ \alpha$'s. We consider the general
situation in which $  {B} \in \mathcal L_1 ( H ) $ (a trace class operator on a Hilbert space) is 
approximated by a  $m(N)$-by-$m(N)$ matrix, (in our case 
$ m(N) = ( 2 N + 1)^2 $) where
 \begin{equation}
\label{eq:approx}
B =  B_N + E_N , \ \  \| E_N \|_{1} \leq \rho_1 (N)  / N^6 , \ \ \| E_N \| \leq \rho_0 ( N )  /N^8 , 
\end{equation}
where $ \| \bullet \|_{1} $ and $ \| \bullet \| $ are trace class and operator
norms, respectively. (The strange look of the estimates is explained by the statement
of Proposition \ref{l:imp8}.)

Suppose that the matrix $ B_N $ has a simple eigenvalue $ \mu_N \in \RR $
(computed numerically) 
and that (by a numerical calculation)
\begin{equation}
\label{eq:eps0}       \| ( B_N - \lambda_j )^{-1} \| \leq  C^0_N ( \epsilon)  , \ \ \lambda_j := 
\mu_N + \epsilon  {e^{  2 \pi i j/ J }}, \ \ \ j = 0 , 1 , \cdots, J-1 .\end{equation}
We then have, for {\em all} $ \lambda $ with $ | \lambda - \mu_N | = \epsilon $, 
\begin{equation}
\label{eq:eps1}
 2\epsilon  {C_N^0} ( \epsilon ) \sin ( \pi/  {2}J)  < \delta
\ \Longrightarrow \ \| ( B_N - \lambda )^{-1} \| \leq C_N ( \epsilon ) := C^0_N ( \epsilon ) 
( 1 - \delta)^{-1} .
\end{equation}
We then note that for $ |\lambda - \mu_N | = \epsilon $, 
\begin{equation}
\label{eq:eps2}  
\begin{gathered} 
C_N ( \epsilon )   \rho_0 ( N ) / N^8 < \delta \ \Longrightarrow \ 
 ( B - \lambda)^{-1} = ( B_N - \lambda)^{-1} ( {I} - D_N ( \lambda ) ) , \\ 
 D_N ( \lambda ) := E_N ( \lambda ) ( B_N - \lambda)^{-1} ( I + 
 E_N ( \lambda )  ( B_N - \lambda)^{-1} )^{-1} , \\
 \| D_N ( \lambda ) \|_{1} <  C_N ( \epsilon ) \rho_1 ( N ) / N^6 ( 1 - \delta ) .
 \end{gathered}
 \end{equation} 
 These bounds lead to an estimate of the trace class norm: if the assumptions in 
 \eqref{eq:eps1}, using here the larger constant $C_N$ instead of $C_N^0$, and \eqref{eq:eps2} hold:
\begin{equation}  
\label{eq:eps3} 2\epsilon C_N ( \epsilon ) \sin ( \pi/  {2}J)   < \delta, \ \ \ C_N ( \epsilon )   
\rho_0 ( N ) / N^8 < \delta,
\end{equation}
where $ \rho_0 ( N ) $ is defined in \eqref{eq:approx} and $ C_N ( \epsilon ) $ in \eqref{eq:eps1}, then  
\begin{equation}
\label{eq:trace}
\| ( B - \lambda)^{-1} - ( B_N - \lambda )^{-1} \|_{1} <  C_N( \epsilon )^2  \rho_1 (N ) / N^6 ( 1- \delta)  .
\end{equation}
If we define spectral projectors
\begin{equation}
\label{eq:PN} 
\begin{split}  P ( \epsilon )  := \frac{1}{2 \pi i} \oint_{ |\lambda - \mu_N | = \epsilon } ( \lambda - B )^{-1} d \lambda , \ \ \ 
P_N ( \epsilon ) := \frac{1}{2 \pi i} \oint_{ |\lambda - \mu_N  | = \epsilon } ( \lambda - B_N )^{-1} 
d \lambda , \end{split}  \end{equation}
we see that if \eqref{eq:eps3} holds then
\begin{equation}
\label{eq:eps4} 
 \epsilon C_N( \epsilon)^2  \rho_1 ( N ) / N^6 ( 1- \delta ) < 1 \ \Longrightarrow \   \tr P = \tr P_N = 1 , 
\end{equation}
that is, we have a simple eigenvalue of  {$B$} within $ \epsilon $ of $ \mu_N $:
\begin{equation}
\label{eq:eps5}
| \Spec ( B ) \cap D ( \mu_N , \epsilon ) | = 1. 
\end{equation}
If we know that the eigenvalues of $ B $ are symmetric with respect to $ \RR $ 
it follows that $ B $ has a real eigenvalue in $  ( \mu_N - \epsilon , \mu_N + \epsilon ) $.

We now implement this for the operator $ B = B_{\mathbf k}  = 3 A_{\mathbf k } $,
$ \mathbf k \notin \Gamma^* $, where $ A_{\mathbf k }$  is the operator defined
in \eqref{eq:defA}. The Hilbert space is the symmetry reduced $ L^2 $:
\begin{equation}
\label{eq:L2G} H =  L^2_0 ( \CC/ \Gamma ) := \{ u \in L^2 ( \CC/\Gamma ) : u ( z + 
\gamma ) = u ( z ) , \gamma \in \Gamma_3/\Gamma \} , \end{equation}
where $   \Gamma_3 = 
\tfrac 43 \pi i ( \omega \ZZ \oplus \omega^2 \ZZ ) $, $ \Gamma = 3 \Gamma_3 $ -- see \eqref{eq:defGam3}.

We start with the computation of the constants in \eqref{eq:approx}. 
Let $T$ be a compact operator and $  \| T \|_p $ its $p$-Schatten norm:
\[    \| T \|_p= \| T \|_{\mathcal L_p ( H ) }  := \left( \sum_{j=0}^\infty s_j ( T )^p \right)^{\frac1p} , \ \ \ 
T \in \mathcal L^p ( H ) \ \Longleftrightarrow \ \| T \|_p < \infty , \]
where $ s_j ( T ) $ are the singular values of $ T $ -- see \cite[\S B.3]{res}.
In the notation of  {\S}\ref{s:sum}, we let $\pi_{N} := I - \Pi_N $. 
% be the projection onto the space $X_N:=\overline{\operatorname{span}}\{ e_i ; \vert i_1 \vert, \vert i_2\vert > N\} \le \ell^2(\ZZ^2).$ 
For $p \ge 3,$ $M\ge 2$, and $ \mathbf k =  ( \omega^2 k_1 -
 \omega k_2 )/\sqrt 3$, $ ( k_1, k_2 ) \in ( 0,1 )^2 $,  we claim
\begin{equation}
\label{eq:gammap}
\gamma_p:= \sup_{M \ge 2} \frac{\Vert \pi_M D(\mathbf k)^{-1} \Vert_p^p} {{(M-1)^{2-p}} }
\leq \frac{ 2 \pi  \, 6^{p/2}  } {\sqrt 3 (p-2)} .
\end{equation}
In fact, 
\begin{equation} 
\label{eq:gam} \begin{split} \| \pi_{M} D(\mathbf k )^{-1} \|_p^p & = 
 3^{p/2} \sum_{|m|> M \vee |n| > M } | ( m + k_1 ) - \omega^2 ( n + k_2 ) |^{-p} 
\\ 
&\le 3^{p/2}  \sum_{|m|\geq M \vee |n| \geq M } | m^2 +mn
 + n^2  |^{-p/2} \\
&\le 3^{p/2} \int_{M-1}^{\infty} \int_{0}^{2\pi} \frac{1}{r^{p-1}(1+\cos(\varphi)\sin(\varphi))^{p/2}} \ d\varphi \ dr \\
&= 3^{p/2} \frac{ (M-1)^{2-p}}{p-2} \int_{0}^{2\pi} \frac{1}{(1+\tfrac{1}{2}\sin(2\varphi))^{p/2}} \ d\varphi 
\\ & \le 
\frac{ 2 \pi  6^{p/2}  } {\sqrt 3}\frac{ (M-1)^{2-p}}{p-2}
\end{split} \end{equation}
where we used, with $ f ( \varphi ) := ( 1 + \tfrac12 \sin 2 \varphi )^{-1/2 } $, 
\[ \| f \|_2^2 = \frac{ 4 \pi} {\sqrt 3} , \ \ 
\| f \|_{\infty} = 2^{\frac12} ,  \ \   \| f \|_p^{p} \leq \| f \|_2^2 \| f \|_\infty^{p-2} .\]
(The integral can also be estimated very accurately using the method of steepest descent.) 
In addition, we observe that for the operator norm and $M \ge 1$,
\begin{equation}
\begin{split}
\label{eq:op_norm}
\Vert \pi_{M}D(\mathbf k )^{-1} \Vert \le \sqrt 3  \sup_{ \vert m \vert \ge M \vee \vert n \vert \ge M} 
( m^2 +mn
 + n^2  )^{-\frac12} \le {2 }/{M}.
\end{split}
\end{equation}
%(This is valid as $ \min_{ \max(|x|,|y|)\geq 1}  ( x^2 + xy + y^2 ) = \frac34 $).

We used these estimates to compare finite rank operators used in numerical calculations
to powers of $ T_{\mathbf k}^p $:
\begin{prop}
\label{l:trace} 
In the notation of \S \ref{s:sum}, and with $k_1,k_2 \in (-1,1)$, $N\ge 2p\ge 6$, we have 
$$\Vert T_{\mathbf k}^p - \Pi_N\, T_{\mathbf k}^p  \, \Pi_N \Vert_1 \le 
\frac{ 4 \pi 54^{ p/2}  \rho_1(N,p)} { \sqrt 3 ( p-2) N^{p-2} }$$
and in operator norm 
$$\Vert T_{\mathbf k}^p - \Pi_N \,  T_{\mathbf k}^p \, \Pi_N \Vert \le 
 6^p 2 \rho_0(N,p) N^{-p},
$$
where
\begin{equation}
\label{eq:rhoj}  \rho_j ( N, p ) = \prod_{\ell=0}^{p-1} ( 1 - ( 2 \ell + j) /N )^{-1+\frac{2j}p } .
\end{equation}
\end{prop}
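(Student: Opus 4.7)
The plan is to decompose
$$T_{\mathbf k}^p - \Pi_N T_{\mathbf k}^p \Pi_N \;=\; \pi_N\, T_{\mathbf k}^p \;+\; \Pi_N\, T_{\mathbf k}^p\, \pi_N, \qquad \pi_N := I - \Pi_N,$$
and to bound each summand by a telescope-plus-H\"older argument fed by the two input estimates \eqref{eq:gammap} and \eqref{eq:op_norm}. Write $T_{\mathbf k}=RV$ with $R=\operatorname{diag}(D(\mathbf k)^{-1},D(\mathbf k)^{-1})$ and $V=\begin{pmatrix}0&U(z)\\U(-z)&0\end{pmatrix}$. The two structural facts I will use are that $R$ is diagonal in the Fourier basis, hence commutes with every $\Pi_M$ (and therefore with $\pi_M$), and that $V$, whose off-diagonal entries expand into the three shifts appearing in \eqref{eq:DJ}, moves each Fourier mode by at most $2$ in each coordinate. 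Consequently $V$ cannot send a mode of $\ell^\infty$-magnitude $\le M-2$ to one of magnitude $>M$, which gives the shift-in-projection identities
$$\pi_M V = \pi_M V \pi_{M-2}, \qquad V \pi_M = \pi_{M-2} V \pi_M.$$

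Iterating the first identity (together with $R\pi_M=\pi_M R$) through $\pi_N(RV)^p$ one factor at a time produces the telescoped representation
$$\pi_N T_{\mathbf k}^p \;=\; (\pi_N R)\,V\,(\pi_{N-2}R)\,V\cdots(\pi_{N-2(p-1)}R)\,V,$$
a product of $p$ compact factors $\pi_{N-2\ell}R$ separated by copies of the bounded operator $V$. The generalized H\"older inequality for Schatten ideals then gives
$$\|\pi_N T_{\mathbf k}^p\|_1 \;\le\; \|V\|^p \prod_{\ell=0}^{p-1}\|\pi_{N-2\ell}R\|_p,$$
and inserting \eqref{eq:gammap} into each factor together with the identity $N-2\ell-1 = N(1-(2\ell+1)/N)$ collapses the product to $\gamma_p\, N^{-(p-2)}\rho_1(N,p)$, since the exponent $(2-p)/p=-1+2/p$ emerging from \eqref{eq:gammap} is precisely the exponent in the definition of $\rho_1$. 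Using $\|V\|=\sup|U|\le 3$ and the identity $6^{p/2}\cdot 3^p=54^{p/2}$ converts $\gamma_p\|V\|^p$ into the prefactor $2\pi\cdot 54^{p/2}/(\sqrt3(p-2))$, and doubling to account for the mirror term $\Pi_N T_{\mathbf k}^p\pi_N$ produces the announced $4\pi$. That second summand is treated identically by starting at the outer $\pi_N$ on the right and pushing it leftward through each $RV$ via $V\pi_M=\pi_{M-2}V\pi_M$.

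For the operator-norm bound the same telescope is run with Schatten-$p$ H\"older replaced by ordinary multiplicativity $\|AB\|\le\|A\|\|B\|$, and \eqref{eq:op_norm} supplies $\|\pi_{N-2\ell}R\|\le 2/(N-2\ell)$; the resulting product $\prod_{\ell}2/(N-2\ell)=2^p N^{-p}\rho_0(N,p)$, multiplied by $\|V\|^p\le 3^p$ and doubled, gives the stated $2\cdot 6^p\rho_0(N,p)/N^p$. The main technical point will be the bookkeeping that ties everything together: the $-1$ appearing in \eqref{eq:gammap} is exactly what turns $(N-2\ell)$ into $(N-(2\ell+1))$, aligning the telescope product with the argument $(2\ell+1)/N$ of $\rho_1$, while the operator-norm telescope preserves the argument $2\ell/N$ of $\rho_0$. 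Beyond this combinatorics the argument is mechanical.
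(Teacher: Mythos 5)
Your proof matches the paper's argument essentially line for line: the same splitting $T_{\mathbf k}^p - \Pi_N T_{\mathbf k}^p\Pi_N = \pi_N T_{\mathbf k}^p + \Pi_N T_{\mathbf k}^p\pi_N$, the same shift-in-projection identity $\pi_M T_{\mathbf k} = \pi_M T_{\mathbf k}\pi_{M-2}$ (the paper's \eqref{eq:pint}) telescoped through $T_{\mathbf k}^p=(RV)^p$, the generalized H\"older inequality for Schatten ideals, and the same insertion of \eqref{eq:gammap} and \eqref{eq:op_norm} with the $N-2\ell-1 = N(1-(2\ell+1)/N)$ bookkeeping that produces $\rho_1$ and $\rho_0$. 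One remark worth flagging, though it is a shared imprecision rather than a gap in your attempt: both you and the paper dispatch the mirror term $\Pi_N T_{\mathbf k}^p\pi_N$ by asserting symmetry, whereas pushing $\pi_N$ leftward via $V\pi_M=\pi_{M-2}V\pi_M$ actually yields compact factors $\pi_{N-2\ell}R$ for $\ell=1,\dots,p$ rather than $\ell=0,\dots,p-1$, i.e.\ a shift by $2$ in each cutoff index; the resulting product is slightly larger and in the borderline case $N=2p$ exits the range $M\ge 2$ where \eqref{eq:gammap} applies, so a genuinely careful treatment would require an extra word or a marginally adjusted constant.
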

\begin{proof}
We first observe that
\[ \begin{split} \Vert T_{\mathbf k}^p - \Pi_N\, T_{\mathbf k}^p  \, \Pi_N \Vert_1& = 
 \Vert T_{\mathbf k}^p - ( I - \pi_N) T_{\mathbf k}^p + ( I - \pi_N) T_{\mathbf k}^p 
 \pi_N \Vert_1 
\\ & \leq \| \pi_N T_{\mathbf k }^p \|_1 + \| T_{\mathbf k }^p \pi_N \|_1 , \ \ \ \pi_N = I - \Pi_N .
\end{split} \]
We will estimate the first term, with a same argument applicable to the second term.

Letting $ T = T_{\mathbf k} $, we write $T = D(\mathbf k)^{-1} V $,
where $ V $ is the potential with $ U ( z ) $ and $ U ( - z ) $ on 
the antidiagonal.  We note that $ \| V \| \leq 3 $. 
By analysing the potential in \eqref{eq:defU} we find that
\begin{equation}
\label{eq:pint} 
\pi_N T = \pi_N T \pi_{N-2}  . 
\end{equation}
Hence (using Schatten norms)
\begin{equation}
\begin{split}
&\| \pi_N T^p \|_{1} \leq \prod_{ \ell=0}^{p-1} \| \pi_{N-2 \ell } T \|_{p} \leq
3^p  \prod_{ \ell=0}^{p-1} \| \pi_{N-2 \ell} D(\mathbf k )^{-1} \|_p .
\end{split}
\end{equation}
For $M\ge 2$, \eqref{eq:gammap} gives
\begin{equation}
\label{eq:gamp}  \| \pi_{M} D(\mathbf k )^{-1} \|_p \leq \gamma_p^{\frac1p}   (M-1)^{-1+\frac 2 p } , \ \ 
p \geq 3, \ \ 
\end{equation}
and hence we have, using \eqref{eq:gammap} and  \eqref{eq:rhoj}, 
%\[\| T^p  \pi_N \|_{1 }
% \leq   3^p \gamma_p N^{-p+2 } \rho_1 (N,p) , \]
% which using \eqref{eq:gammap} gives
$$ \|  \pi_N T^p   \|_{1 }
 \leq \frac{ 2 \pi 54^{ p/2}  \rho_1 ( N, p ) } { \sqrt 3 ( p-2) N^{p-2} }. $$
Combined with the same estimate for $ \| T^p \pi_N \|_1 $
this implies the result.
The operator norm estimate is fully analogous, using \eqref{eq:op_norm}.
\end{proof}

We recall that $\mathscr L_{\mathbf a} $ 
commutes with $ D_{\mathbf k}  (0 )$ and $  V $, where $ V $ is as in the proof of Proposition \ref{l:trace}. 
It also commutes with $ \Pi_N $ since pull backs by translations and multiplication by constants do not change orders of trigonometric polynomials. This gives an action of $ \ZZ_3^2 $ on 
$ L^2 ( \CC/\Gamma, \CC^2) $ which can then be decomposed using nine irreducible 
representations of that group \eqref{eq:reprZ32}:
\[   L^2_{\mathbf p} ( \CC/\Gamma ; \CC^2 ) = \{ \mathbf u \in 
L^2 ( \CC/\Gamma ; \CC^2 ) :   \mathscr L_{\mathbf a } \mathbf u = \pi_{\mathbf p} ( 
\mathbf a ) \mathbf u \}, \]
where 
$ \mathbf p = ( \omega^2 p_1 - \omega p_2 ) /\sqrt 3, \ \ p_j \in \ZZ_3 $, 
$ \pi_{\mathbf p} ( \mathbf a ) = \exp ( i \Re ( \mathbf a \bar {\mathbf p } )) $.
We then specialize to this symmetry reduced case and power $ p = 8 $. The former 
gives a small improvement:
\begin{prop}
\label{l:imp8}
Suppose that $ B = B_{\mathbf k } = 3 A_{\mathbf k } $, $ \mathbf k = \omega^2/2 \sqrt 3$,
where $ A_{\mathbf k} $ comes from \eqref{eq:defA} 
and $ H $ is given by \eqref{eq:L2G}. Then, with $ \Pi_N $ given in \S \ref{s:sum}, 
and $\rho_j $ defined in \eqref{eq:rhoj},
\begin{equation}
\label{eq:norm8}
\begin{split} 
& \| B^4 - \Pi_N B^4 \Pi_N \|_{  \mathcal L _1( L^2_{{\mathbf 0 }} ( \CC/\Gamma, \CC^2 ) )}^{\frac16}  \leq  10.23 N^{-1} \rho_1 ( 8, N )^{\frac16}  , \\
& \| B^4 - \Pi_N B^4 \Pi_N \|_{  \mathcal L( L^2  ( \CC/\Gamma, \CC^2 )) } \leq   6^8 2 \rho_0 ( 8 , N )  N^{-8}.
\end{split} 
\end{equation}
Moreover, at every magic angle, $\alpha \in \mathcal A$, the Hamiltonian $H(\alpha)$ exhibits at least 18 flat bands. 
\iffalse
where 
\begin{equation}
\label{eq:L20}  L^2_{{\mathbf 0 }} ( \CC/\Gamma; \CC^2 )  :=
 \{ \mathbf u \in L^2 ( \CC/\Gamma; \CC^2) : 
 \forall \, \mathbf a \in \Gamma_3 / \Gamma 
\ \ \mathscr L_{\mathbf a } \mathbf u = \mathbf u \}. 
\end{equation}
where $ \mathscr L_{\mathbf a } $ and $\Gamma_3 / \Gamma $ are 
given after \eqref{eq:defGam3}. In particular, all magic angles 
\fi
\end{prop}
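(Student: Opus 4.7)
The plan is to derive Proposition \ref{l:imp8} from Proposition \ref{l:trace} at $p=8$ together with a $\ZZ_3^2$ symmetry reduction (for the norm bounds), and from a ninefold application of the theta-function dressing used in Proposition \ref{p:magic} (for the 18 flat bands).

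For the norm estimates I would first observe that $B=\mathscr D_{\mathbf k}^{-1}\mathscr V_+\mathscr D_{\mathbf k}^{-1}\mathscr V_-$ is the $(1,1)$ block of $T_{\mathbf k}^2$, so $B^4$ is the $(1,1)$ block of $T_{\mathbf k}^8$. The operator-norm bound in \eqref{eq:norm8} then follows directly from Proposition \ref{l:trace} at $p=8$, since a block is dominated in operator norm by the full operator. The trace-class improvement requires restriction to $L^2_{\mathbf 0}(\CC/\Gamma;\CC^2)$. The key algebraic observation is that while each of $\mathscr V_\pm$ intertwines the isotypic subspaces $L^2_{\mathbf p}$ of the $\Gamma_3/\Gamma\simeq\ZZ_3^2$ action by phases $\bar\omega^{a_1+a_2}$ and $\omega^{a_1+a_2}$ respectively (the two opposite phases coming from \eqref{eq:symmU} applied to $U(z)$ and $U(-z)$), these phases cancel in the product $\mathscr V_+\mathscr V_-$; hence $B$ and consequently $B^4$ commute with the $\ZZ_3^2$-action, and each $L^2_{\mathbf p}$ is $B^4$-invariant. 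Restricted to $L^2_{\mathbf 0}$, the Fourier-mode indices $\mathbf n$ run over the index-$9$ sublattice $3\ZZ^2\subset\ZZ^2$, so the Schatten-$p$ tail computation in \eqref{eq:gam} is improved by an asymptotic factor $1/9$. Carrying this factor through Proposition \ref{l:trace} with $p=8$ and taking sixth roots gives the constant $\bigl((4\pi\cdot 54^3)/\sqrt{3}\bigr)^{1/6}\approx 10.226$, which is bounded by the claimed $10.23$.

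For the 18 flat bands it suffices to prove $\dim\ker(D(\alpha)-\mathbf k)\ge 9$ for every $\mathbf k\in\CC$ at any magic $\alpha\in\mathcal A$: the Fredholm index-$0$ property from Proposition \ref{p:slight} then gives $\dim\ker(D(\alpha)^*-\bar{\mathbf k})\ge 9$ as well, and hence $\dim\ker H_{\mathbf k}(\alpha)\ge 18$ for every $\mathbf k$, i.e., at least $18$ flat bands at zero. To produce nine independent elements of $\ker(D(\alpha)-\mathbf k)$ I would adapt the theta-function dressing from the proof of Proposition \ref{p:magic}. By \eqref{eq:stack}, the symmetry-protected eigenfunction $\mathbf u(\alpha)\in L^2_{\rho_{1,0}}$ vanishes at $z_S$, and by the $(\Gamma_3/\Gamma)$-covariance of $L^2_{\rho_{1,0}}$ it in fact vanishes at each of the nine translates $z_{\mathbf a}:=z_S+\mathbf a$, $\mathbf a\in\Gamma_3/\Gamma$. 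For each such $\mathbf a$, a translated version of the recipe \eqref{eq:recipe} produces a meromorphic $f^{\mathbf a}_{\mathbf k}$ satisfying the Floquet transformation law of \eqref{eq:magic2} whose unique simple pole in $\CC/\Gamma$ lies at $z_{\mathbf a}$. The dressed vector $\mathbf u^{\mathbf a}_{\mathbf k}(z):=e^{\frac{i}{2}(z\bar{\mathbf k}+\bar z\mathbf k)}f^{\mathbf a}_{\mathbf k}(z)\mathbf u(\alpha,z)$ is then smooth (the pole cancels the simple zero of $\mathbf u(\alpha)$ at $z_{\mathbf a}$) and lies in $\ker(D(\alpha)-\mathbf k)$. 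Linear independence of the nine vectors $\mathbf u^{\mathbf a}_{\mathbf k}$ follows by a local argument: in any relation $\sum_{\mathbf a}c_{\mathbf a}\mathbf u^{\mathbf a}_{\mathbf k}=0$, only $f^{\mathbf a}_{\mathbf k}$ among the nine factors is singular near $z_{\mathbf a}$, and after cancellation by the simple zero of $\mathbf u(\alpha)$ it contributes a definite nonzero local value, forcing $c_{\mathbf a}=0$. The principal technical hurdle I anticipate is the explicit matching of the Floquet transformation laws for the nine shifted theta factors $f^{\mathbf a}_{\mathbf k}$ — bookkeeping analogous to that already carried out in \eqref{eq:magic2} for the single-pole case, but now iterated over each $\mathbf a\in\Gamma_3/\Gamma$; once that is in place, the norm bounds reduce to carrying the factor $1/9$ through \eqref{eq:gammap}--\eqref{eq:gam}.
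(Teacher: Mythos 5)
The two claims are handled by two different ideas, so let me compare each in turn.

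For the norm bounds you follow essentially the same route as the paper: take $p=8$ in Proposition \ref{l:trace} and feed in a factor of $1/9$ from the $\ZZ_3^2$ symmetry reduction. Where the paper phrases the $1/9$ abstractly via the intertwiners $U_{\mathbf p}$ that transport $T_{\mathbf k}$ between the isotypic blocks $L^2_{\mathbf p}$, you make it concrete by observing that the Fourier support of $L^2_{\mathbf 0}$ lies in the index-$9$ sublattice $3\ZZ^2$, so the Schatten-tail sum \eqref{eq:gam} is thinned by $1/9$. These are two descriptions of the same mechanism, and your version is arguably cleaner since it avoids the question of whether $\Pi_N$ commutes with the intertwiners.

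For the $18$ flat bands your argument is genuinely different and quite a bit more constructive. The paper's proof uses only that the family of unitaries $U_{\mathbf p}$, $\mathbf p\in\Gamma^*$, makes the nine restrictions $T_{\mathbf k}|_{L^2_{\mathbf p}}$ isospectral, so every eigenvalue of $T_{\mathbf k}$ (in particular the one producing a magic $\alpha$) has multiplicity at least $9$, and hence $\dim\ker(D(\alpha)-\mathbf k)\ge 9$ and $\dim\ker H_{\mathbf k}(\alpha)\ge 18$. You instead exhibit nine explicit independent kernel elements by dressing the protected state $\mathbf u(\alpha)$ with translated theta factors $f^{\mathbf a}_{\mathbf k}(z)=f_{\mathbf k}(z-\mathbf a)$ whose poles sit at the nine vanishing points $z_S+\mathbf a$, $\mathbf a\in\Gamma_3/\Gamma$, of $\mathbf u(\alpha)$. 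This is a nice concretization (it produces formulas for the flat-band eigenfunctions, not just a dimension count), at the cost of requiring the theta-function machinery again; the paper's route buys a shorter proof that reuses only the symmetry reduction already set up.

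One point in your argument needs repair: you base linear independence on ``cancellation by the simple zero of $\mathbf u(\alpha)$ at $z_{\mathbf a}$,'' which implicitly assumes that $\mathbf u(\alpha)$ has a zero of exactly first order at each $z_S+\mathbf a$. That is precisely the unproven ``bold guess'' \eqref{eq:guess} discussed in the remarks after the proof of Theorem~\ref{t:magic}; the paper only establishes the factorization $\mathbf u=(z-z_S)\widetilde{\mathbf u}$, not $\widetilde{\mathbf u}(z_S)\ne 0$. Fortunately your conclusion does not need it: if $\sum_{\mathbf a}c_{\mathbf a}\mathbf u^{\mathbf a}_{\mathbf k}\equiv0$, divide by the never-vanishing exponential to get $\bigl(\sum_{\mathbf a}c_{\mathbf a}f^{\mathbf a}_{\mathbf k}\bigr)\mathbf u\equiv0$; since $\mathbf u\not\equiv0$, the scalar meromorphic function $g:=\sum_{\mathbf a}c_{\mathbf a}f^{\mathbf a}_{\mathbf k}$ vanishes on an open set and hence identically, and reading off the residue of $g$ at each $z_S+\mathbf a$ forces every $c_{\mathbf a}=0$. (You should also briefly note that the argument gives $\dim\ker(D(\alpha)-\mathbf k)\ge 9$ directly only for $\mathbf k\notin\Gamma^*$; the case $\mathbf k\in\Gamma^*$ follows from continuity of the band functions $E_j(\mathbf k,\alpha)$, a gap which the paper's version shares.)
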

\begin{proof}
We observe that we have unitary equivalence,
\[ U_{\mathbf p } \mathbf u ( z ) : 
L^2_{\mathbf q} ( \CC/\Gamma ; \CC^2 ) \to L^2_{\mathbf p + \mathbf q} 
 ( \CC/\Gamma; \CC^2 ) , \ \ 
  U_{\mathbf p } \mathbf u ( z ) := e^{ - i \Re ( z \bar{\mathbf p })}  \mathbf u ( z ), \]  
 and that,
\[  U_{\mathbf p} T_{\mathbf k} U_{\mathbf p}^* = T_{\mathbf k + \mathbf p } = T_{\mathbf k}  ,\ \ \ 
\mathbf p \in \Gamma^*, \ \ \ \mathbf k \notin \Gamma^* .\]
Hence, in the computation of the trace class norm on $ L^2_0 $ we gain $ 1/9 $
and  Proposition~\ref{l:trace} gives, with $ H $ of \eqref{eq:L2G} and $ p = 8 $ (see \eqref{eq:defA}:
the 8th power of $ T_{\mathbf k } $ corresponds to the 4th power of $ B $),
\[ \lVert B^4-\Pi_NB^4\Pi_N\rVert_{\mathcal L _1( L^2_{{\mathbf 0 }} ( \CC/\Gamma, \CC^2 ) )}^\frac16  \leq 
\left( \frac{ 4 \pi 54^{ 4} \rho_1 ( 8, N ) } { 54 \sqrt 3 }\right)^{\frac16}  N^{-1} =
10.2244 \, \rho_1 ( 8, N )^{{\frac16}}  N^{-1} , 
\]
which gives the desired estimate. The operator norm is estimated using  Proposition \ref{l:trace}
as there is no gain from symmetry reduction.
\end{proof} 

Combining Proposition~\ref{l:imp8} and~\eqref{eq:eps4} provides an error estimate in the
numerical computation of $ \alpha_1 $ and $ \alpha_2 $. In principle, the same methods are applicable
for higher $ \alpha$'s shown in Table \ref{tbl:magic} but that seems to require much larger
matrices and any claim of a ``rigorous" calculation is not feasible. 

\begin{table}
\caption{\label{tbl:error}
The values of $ N $ needed to obtain a rigorous error bound of $ \delta = 10^{-k}$,
as computed using the {\tt guarantee.m} code in the Appendix
(using the default {\tt NN=16}).
The matrices used in calculations then have size $ ( 2N + 1)^2 $-by-$(2N+1)^2 $.
Hence the rigorous error estimates are realistic for $ \alpha_1 $ and for rough bounds
on $ \alpha_2$ and $ \alpha_3 $ but not for higher $ \alpha_j$'s. All the values 
of $N \le 328$ here were certified by a second (long) run of {\tt guarantee.m} with the
procedure described in the Appendix.
}
\begin{center}
\begin{tabular}{rrrrrc}
\multicolumn{1}{c}{$k $} & &
\multicolumn{1}{c}{$\alpha_1$} &
\multicolumn{1}{c}{$\alpha_2$} &
\multicolumn{1}{c}{$\alpha_3$} \\[2pt] \hline
1  &&   21 &    128 &  374 &   \\
2  &&   21 &    159 &  476 &   \\
3  &&   28 &    226 &  689 &   \\
4  &&   38 &    328 & 1011 &   \\
5  &&   51 &    472 & 1480 &   \\
6  &&   71 &    691 & 2168 &   \\
7  &&  100 &   1012 &  &   \\   % 3192 changes to 3191 for N=32
% 7  &&  100 &   1012 & 3192 &   \\
8  &&  145 &   1485 & &   \\   % 4713 for alpha_3 unreliable
% 8  &&  145 &   1485 & 4713 &   \\
9  &&  211 &        & &   \\   % 2184, 7645  unreliable
% 9  &&  211 &   2184 & 7645 &   \\
\end{tabular}
\end{center}
\end{table} 

Replacing $ B $ with $ B^4 $ of   Proposition \ref{l:imp8} we see that 
\eqref{eq:approx} holds for that $ B $. We then have 
\[ | \beta^{-8} - \alpha_j^{-8}| < \epsilon := \beta^{-8} - (\beta+ \delta )^{-8} 
\ \Longrightarrow | \beta - \alpha_j | < \delta. \]
This is particularly favourable in the case of $ \alpha_1 $ as then $ \beta \simeq 0.5$.
(We have to take $ \epsilon  $ sufficiently small to avoid other eigenvalues of $  B $.)

\begin{table}
\caption{\label{tbl:back}
Values needed for the backward error calculation 
guaranteeing $ 10^{-k} $ accuracy for computing $ \alpha_j$ (those errors
are much smaller than those from Proposition \ref{l:imp8}).
We show $ e_j = \| ( B_{N_k^j} - \mu_{32}^j ) u_{32}^j \| /\| u_{32}^j \| $ where
$ N_k^j $ comes from Table~\ref{tbl:error}, 
$ \mu_{32}^j $ is the eigenvalue closest to $ \alpha_j^{-8} $ 
obtained using $ B_{32} ,$ and $ u_{32}^j $ is the corresponding
eigenvector extended by $ 0 $ -- see {\tt backerror.m} in the 
Appendix.  These values, on the order of machine precision,
can vary slightly based on implementation, machine, and MATLAB version.}

\begin{center}
\begin{tabular}{rclccc}
% These numbers were run on the VT Math Department server "tesla".
% Numbers vary slightly on different machines.
\multicolumn{1}{c}{$k $} & &
\multicolumn{1}{c}{$ e_110^{15}   $} & $ e_2 10^{15} $ &  $ e_3 10^{15} $ & 
  \\[2pt] \hline
1  &&      \ &  4.33 & 3.47 &        \\
2  &&      \ &  4.33 & 3.47 &   \\
3  &&      \ &  4.33 & 3.47 &     \\
4  &&   1.68 &  4.33 & 3.47 &       \\
5  &&   1.68 &  4.33 & 3.47 &        \\
6  &&   1.68 &  4.33 & 3.47 &        \\
7  &&   1.68 &  4.33 &  &        \\
% 7  &&   1.68 &  4.33 & 3.47 &        \\
8  &&   1.68 &  4.33   & &        \\
% 8  &&   1.68 &  4.33 & &        \\
9  &&   1.68 &  &  &        \\
%9  &&   1.68 &  4.33 & 3.47 &        \\
\end{tabular} 
\end{center}
\end{table} 

The method described above is implemented in  {\tt BkN.m} in the Appendix,
which computes $ \Pi_N B_{\mathbf k} \Pi_N$ (see Proposition~\ref{l:imp8}).
The code 
{\tt guarantee.m} then returns an $ N$ for which we obtain an accuracy of $ \delta $.
We have to trust the numerical calculation of the smallest singular value of
 $ ( 2 N + 1)^2$-by-$(2N + 1)^2 $ matrices needed for \eqref{eq:eps0} and \eqref{eq:eps1}.
{To estimate the backward error associated with an approximate eigenpair of $B_N$}, we need to calculate 
$ \| ( B_N - \mu_N ) u_N \| $, where $ \mu_N $ and $ u_N $ are the eigenvalue and eigenvector
returned by MATLAB.\ \ We know then that
 $ \mu_N $ is an {\em exact} eigenvalue of $ B_N + R_N $ where
$ \| R_N \| \leq \| ( B_N - \mu_N ) u_N \|/\| u_N \| $.  In principle $ R_N $ should be added to 
$ E_N $, but those errors are negligible compared to our estimates on $ E_N $.\ \  
We should stress that, for these estimates, we do not need to calculate $ \mu_N $ and
$ u_N $ from $ B_N $ {for the large values of $N$ given in Table~\ref{tbl:error}. 
It is sufficient to compute the eigenpair for $ B_{32} $, then take $ \mu_N = \mu_{32} $ and
build $ u_N \in \CC^{(2N+1)^2} $ by extending $ u_{32} \in \CC^{4225} $ by $ 0$s.}
(This extension is justified by noting that the function approximated by $ u_N $ is a solution of
an elliptic equation with analytic coefficients, hence analytic \cite[Theorem 9.5.1]{H1}. 
Consequently, Fourier coefficients decay exponentially.) We show the resulting 
error in Table \ref{tbl:back}.

Table \ref{tbl:error} gives estimates of values of $ N $ for which calculated $ \alpha$'s
are within $ \delta = 10^{-k} $ of the actual elements of $ \mathcal A_{\rm{mag} }$.
Table \ref{tbl:back} gives the estimates of the deviation of $ B_N $ from the 
matrix with eigenvalues given by a MATLAB calculation.
Hence we can 
claim a rigorous calculation for $ \alpha_1 $ and $ \alpha_2  $ within errors
 $10^{-9} $ and $10^{-3}$, respectively.

\vspace{0.4cm}
\begin{center}
\noindent
{\sc  Appendix}
\end{center}
%\vspace{0.1cm}
\renewcommand{\theequation}{A.\arabic{equation}}
\refstepcounter{section}
\renewcommand{\thesection}{A}
\setcounter{equation}{0}

 We include a MATLAB code, {\tt BkN.m}, that constructs a sparse matrix 
of the truncation (as described in  \S \ref{s:sum}) of the operator of $ B_{\mathbf k} :=  3 A_{\mathbf k} $
for the potential 
\begin{equation}
\label{eq:Umu}
U_\mu ( z ) =   \sum_{k=0}^2 \omega^k 
\left( e^{\frac12(\bar z \omega^k-  z \bar \omega^k)} + \mu e^{\bar z \omega^k-  z \bar \omega^k} \right);
\end{equation}
see Figure~\ref{f:mu}.

Approximations of real and complex elements of the magic set $ \mathcal A $ 
are given by computing the spectrum of $ B_{\mathbf k } $:
\begin{equation}
\label{eq:la2al}   \lambda \in \Spec_{L^2_{\mathbf 0}  ( \CC/\Gamma ; \CC^2 ) } ( B_{\mathbf k } ) 
\ \Longrightarrow \ 1/\sqrt \lambda \in \mathcal A , \ \ \ \mathbf k \notin \Gamma^* . \end{equation}
To obtain all $ \alpha$'s with multiplicities we should consider the action on 
all representations of $ \Gamma_3/\Gamma $ rather than just \eqref{eq:L2G} 
-- see \S \ref{s:sH} and the proof of Proposition \ref{l:imp8}.  For instance, in MATLAB, 
\[   \alpha_1 \simeq \text{ \tt real(1{.}/sqrt(eigs(BkN(0.5,8),1))) =  0.585663558389558} . \]
The size of the matrix is 289-by-289 ($ ( 2 N + 1)^2 = 289$, $ N = 8$) and no 
improvement is achieved by taking larger matrices.

{\codefontsize
\begin{verbatim}
function B = BkN(k,N);    % create Pi_N * Bk * Pi_N
  N0 = N; N=N+2; N2 = N;
  Rp=RR(k,N,1); Rm=RR(k,N,-1);
  omega=exp(2i*pi/3); N=2*N+1; n=N^2;
  J1 = spdiags(ones(N,1),1,N,N);
  Vp = speye(n)+omega^2*kron(speye(N),J1')+omega*kron(J1',speye(N));
  Vm = speye(n)+omega^2*kron(speye(N),J1)+omega*kron(J1,speye(N));
  B = Rp*Vp*Rm*Vm/3;
  indx = downsize(N0,N2);
  B = B(indx,indx);
end
function RR=RR(k,N,j)
  kk=-N:1:N; N=2*N+1; n=N^2; kk1=kk-j/6; kk1=spdiags(kk1',0,N,N);
  omega=exp(2i*pi/3);
  RR = omega^2*kron(kk1,speye(N))-omega*kron(speye(N),kk1);
  RR = RR-(omega^2*real(k)-omega*imag(k))*speye(size(RR));
  RR = spdiags(1./diag(RR),0,n,n);
end
function indx = downsize(N1,N2); % indices to truncate from N1 to N2
 n1 = max(N1,N2); n2 = min(N1,N2); dn = n1-n2;
 indx = reshape(1:(2*n1+1)^2,2*n1+1,2*n1+1);
 indx = indx(dn+1:dn+2*n2+1,dn+1:dn+2*n2+1);
 indx = reshape(indx,(2*n2+1)^2,1);
end
\end{verbatim}}

To reproduce (half of) Figure~\ref{f:resa}
one simply calls 
{\codefontsize \begin{verbatim}
plot(1./sqrt(eigs(BkN(0.5,32),800)),'ro','LineWidth',1.5) 
xlim([0,18]), ylim([-9,9])
\end{verbatim}}

The error bounds based on Proposition \ref{l:imp8} are implemented
in {\tt guarantee.m}, which returns an estimate on $ N $ needed
to obtain accuracy $ \delta$ using {\tt BkN.m}. 
{The subroutine {\tt Bk4} uses {\tt BkN} to form $\Pi_N B_{\mathbf k}^4 \Pi_N$,
via~(\ref{eq:pint}).}
As explained in \S \ref{s:error}
the only ``non-rigorous" aspect here {involves the calculation of 
the smallest singular values of sparse matrices (a reliable numerical task)}. 
To find $ N $ for, say, accuracy $ \delta=0.1 $
for computing $ \alpha_2 $, the command {\tt guarantee(0.1,2)} returns
an approximation, $ N = 128$, based on an estimate of those singular values 
with a lower $ N $ (experimentally, always the same). To have a
``rigorous" confirmation, $ N = 128 $ should then be used to 
run {\tt guarantee(0.1,2,116)} {(which again produces $N=116$,
though at a much longer run time)}.  Table~\ref{tbl:error} was produced
using {\tt guarantee($10^{-k}$,p)}, $ p=1,2,3$.
We ran the second refinement step to confirm $N$ for all values
in this table with $N\le 328$.

{\codefontsize
\begin{verbatim}
function N = guarantee(delta,p,NN)
% returns N for which alpha_p is computed within error delta, p = 1,2,3
 if (nargin<2) p=1; end
 if (nargin<3) NN=16; end
 alpha(1)=0.585663; alpha(2)=2.221182; alpha(3)=3.7514055;
 rad(1)=72.2;rad(2)=0.0017;rad(3)=2.3830e-05; % dist to the rest of A.^-8
 bet=alpha(p); epsi=bet^-8-(bet+delta)^-8; epsi=min(rad(p)/5,epsi);
 Cep=circle_norm(epsi,NN,bet); M=16; C0=2*6^8*rhoj(M,0)*M^(-8)*Cep;
 while C0>0.5, M = M+1; C0=Cep*2*6^8*rhoj(M,0)*M^(-8); end
 N=M; C0=Cep*(1-C0)^(-1); C1=10.23*rhoj(N,1)^(1/6);
 while (C0*Cep*epsi)^(1/6)*C1 > N, N=N+1; C1=10.23*rhoj(N,1)^(1/6); end
end
function [C,J] = circle_norm(epsi,N,bet)
% Computes the approximate norm of (B-lambda)^-1 for B=Pi_N*Bk(0.5)^4*Pi_N
% and |lambda-mu|=epsi where mu is an approximate eigenvalue of B
 b=1/bet^8; B4=Bk4(0.5,N); J=10; [C1,del]=Jtest(J,B4,epsi,b);
 while del>0.5, J=2*J; [C1,del]=Jtest(J,B4,epsi,b); end
 C=C1/(1-del);end
function [C1,del]=Jtest(J,T,epsi,mu)
% calculates the maximum of the norm of (T-lambda)^{-1}, T sparse
% at J points on the circle |lambda-mu|=epsi
 mu = eigs(T,1,mu);
 zz = exp(1i*(0:1:J-1)*2*pi/J);   la = mu + epsi*zz;
 for j=1:J, A=T-la(j)*speye(size(T)); CC(j)=1/svds(A,1,'smallest'); end
 C1=max(CC); del=2*max(CC)*epsi*sin(pi/(2*J)); end
function rhoj = rhoj(N,j)
 rhoj=1; for ell=0:7 rhoj=rhoj*(1-(2*ell+j)/N)^(-1+j/4); end
end
function B4 = Bk4(k,N); % create Pi_N * Bk^4 * Pi_N
 Bp8 = BkN(k,N+8);                  % Pi_{N+8} Bk Pi_{N+8}
 Bp4 = BkN(k,N+4);                  % Pi_{N+4} Bk Pi_{N+4}
 Bp8sq = Bp8^2;                     % (Pi_{N+8} Bk Pi_{N+8})^2
 indx_8_4 = downsize(N+4,N+8);
 Bp8sq = Bp8sq(indx_8_4,indx_8_4);  % Pi_{N+4} Bp8sq Pi_{N+4}
 B4    = Bp4*Bp8sq*Bp4;
 indx_4_0 = downsize(N,N+4);
 B4    = B4(indx_4_0,indx_4_0);
end
\end{verbatim}}

Finally we include the code used to obtain Table \ref{tbl:back},
{using the discretization in {\tt BkN.m}}.

{\codefontsize
\begin{verbatim}
function ba = backerror(N2,p,N1)
  if (nargin < 3) N1=32; end
  N1 = min(N2-1,N1);
  alpha(1)=0.585663; alpha(2)=2.221182; alpha(3)=3.7514055;
  al = alpha(p); mu = 1/al^8; B1 = BkN(0.5,N1); B2 = BkN(0.5,N2);
  [v1,lam1] = eigs(B1,1,1/al^2);
% inflate the N1 eigenvector to N2 by:                                           
% - shaping it into a (2*N1+1)-by-(2*N1+1) matrix;                                       
% - padding it with a border of dN := N2 - N1 zeros;                                     
% - reshaping it into a (2*N2+1)^2 length vector.                                        
  dN = N2-N1; 
  V1 = [zeros(dN,2*N2+1);
        zeros(2*N1+1,dN)  reshape(v1,2*N1+1,2*N1+1) zeros(2*N1+1,dN);
        zeros(dN,2*N2+1)];
  v2 = reshape(V1,(2*N2+1)^2,1); ba = norm(B2*v2-lam1*v2)/norm(v2);
end
\end{verbatim}}

\smallsection{Acknowledgements} 
We would like to thank Mike Zaletel for bringing \cite{magic} to our 
attention, Alexis Drouot for helpful discussions, and Michael Hitrik 
for {bringing} \cite{see} to our attention.
SB~gratefully acknowledges support by
the UK Engineering and Physical Sciences Research Council (EPSRC)
grant EP/L016516/1 for the University of Cambridge Centre for Doctoral
Training, the Cambridge Centre for Analysis. ME and MZ were partially 
supported
by the National Science Foundation under the grants DMS-1720257 and DMS-1901462, respectively. 
JW was partially supported by the Swedish Research Council grants 2015-03780 and 2019-04878.

%%%%%%%%%%%%%%%%%%%%%%%%%%%%%%%%%%%%%%%%%%%%%%%%%%%%%%%%%%%%%%%%%%%%%%%%%%%%%%%%
%                                 BIBLIOGRAPHY                                 %
%%%%%%%%%%%%%%%%%%%%%%%%%%%%%%%%%%%%%%%%%%%%%%%%%%%%%%%%%%%%%%%%%%%%%%%%%%%%%%%%

\end{document}